\documentclass[
superscriptaddress,
reprint,
amsmath,amssymb,
aps,
prx, floatfix, showkeys
]{revtex4-2}
\usepackage{caption}
\usepackage{graphicx}
\usepackage{dcolumn}
\usepackage{bm}
\usepackage{xcolor}
\usepackage{amsthm}
\usepackage{mathtools}
\usepackage{mathrsfs}
\usepackage{booktabs}
\usepackage{physics}  
\usepackage{bbm}      
\usepackage{multirow}
\usepackage{array}
\usepackage{tikz}
\usetikzlibrary{positioning} 
\usepackage{graphicx}
\usepackage{subcaption}
\usepackage{float} 

\newtheorem{theorem}{Theorem}

\newtheorem{corollary}[theorem]{Corollary}

\newtheorem{observation}[theorem]{Observation}

\newcommand{\eat}[1]{}

\begin{document}

\title{Localization Without Disorder: Quantum Walks on Structured Graphs}

\author{Shyam Dhamapurkar}
\email{shyamd@cmi.ac.in}
\affiliation{Department of Computer Science, Chennai Mathematical Institute, Chennai, India.}

\author{K. Venkata Subrahmanyam }
\email{kv@cmi.ac.in}
\affiliation{Department of Computer Science, Chennai Mathematical Institute, Chennai, India.}

\date{\today}

\begin{abstract}
Continuous-time quantum walks (CTQWs) exhibit localization phenomena that differ fundamentally from their classical counterparts, yet the precise relationship between network structure, spectral degeneracy, and confined dynamics remains incompletely understood. In this work, we present a complete analytical characterization of localization in CTQWs on two highly symmetric graph families: barbell graphs and star-of-cliques graphs. These networks combine pronounced spectral degeneracy with modular structure, enabling exact diagonalization and explicit computation of both eigenstate and dynamical inverse participation ratios (IPRs). Our analysis reveals that localization is governed by the interplay between degenerate subspaces, which generate families of confined modes, and hybridization between invariant subspaces, which redistributes spectral weight. Notably, the dynamical IPR can exceed expectations based solely on eigenstate IPRs, demonstrating that coherent superposition within degenerate eigenspaces enhances confinement. By connecting IPR values to the effective number of vertices visited, we provide a structural diagnostic for predicting quantum transport outcomes in modular networks, establishing that connectivity alone can determine where and how strongly a quantum walk localizes.
\end{abstract}

\keywords{Continuous-time quantum walks, Disorder-free localization, 
Inverse Participation Ratio, Barbell graph, Star-of-cliques graph, 
Spectral degeneracy}

\maketitle

\section{Introduction}
\label{sec:introduction}

\begin{figure}[!htbp]
\centering
\begin{subfigure}{0.32\textwidth}
    \centering
    \includegraphics[trim = 120 280 130 245, clip, width=\linewidth]{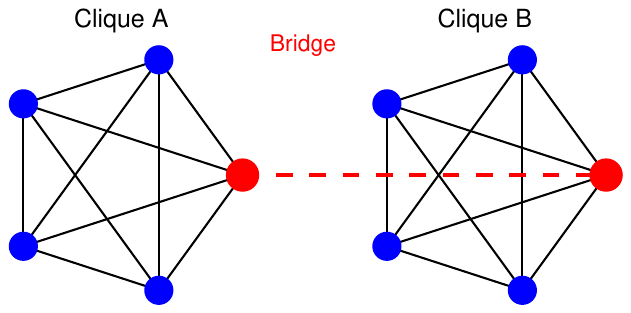}
    \caption{Barbell graph $B(n)$}
    \label{fig:barbell}
\end{subfigure}
\hfill
\begin{subfigure}{0.32\textwidth}
    \centering
    \includegraphics[trim = 120 230 130 190, clip, width=\linewidth]{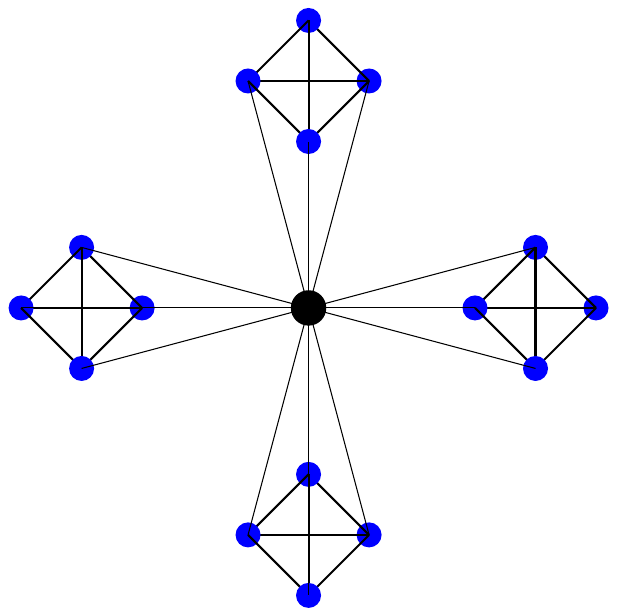}
    \caption{Star-of-cliques: Full connection}
    \label{fig:star_full}
\end{subfigure}
\hfill
\begin{subfigure}{0.32\textwidth}
    \centering
    \includegraphics[trim = 120 240 130 185, clip, width=\linewidth]{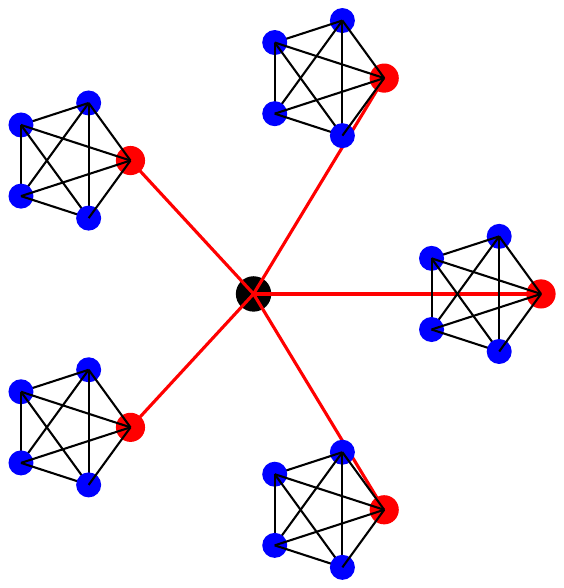}
    \caption{Star-of-cliques: Single connection}
    \label{fig:star_single}
\end{subfigure}
\caption{Graph structures: (a) Barbell graph $B(n)$; (b) Full-connection star-of-cliques; (c) Single-connection star-of-cliques.}
\label{fig:graph_structures}
\end{figure}

Quantum walks provide a framework for understanding coherent transport
on discrete structures and have been studied extensively in both
algorithmic and physical contexts~\cite{RaghavanMotwani,Watrous,Meyer,Nayak,Aharonov,Szegedy}. Unlike classical
random walks, which converge to stationary distributions governed by
Markov chain theory~\cite{RaghavanMotwani}, continuous-time quantum walks (CTQWs)
evolve unitarily and do not converge pointwise in time~\cite{Nayak,childs-notes}.
Instead, their long-time averaged distribution---often referred to as
the limiting distribution---captures the effective outcome of quantum
transport~\cite{Nayak,AAKU,Richter2007}. The structure of this distribution depends
sensitively on spectral properties of the underlying Hamiltonian,
particularly eigenvalue degeneracies and the overlap of the initial
state with the associated eigenspaces.
In continuous-time quantum walks (CTQWs)~\cite{FarhiGutmann}, the dynamics are generated by a Hamiltonian derived from the graph structure, which encodes the connectivity of the underlying network. Although CTQWs do not converge pointwise in time, their long-time averaged probability distribution---the limiting distribution---always exists as shown in~\cite{Aharonov}. This distribution captures the extent to which quantum probability remains localized or spreads over the network and depends sensitively on the spectral structure of the Hamiltonian, particularly eigenvalue degeneracies and the overlap of the initial state with the corresponding eigenspaces.
Localization in quantum walks has traditionally been associated with
disorder, where random potentials or couplings induce interference that
suppresses transport~\cite{Anderson1958,Abrahams1979,Asboth2012,Kendon2007}. Such \emph{disorder-induced localization} has been extensively analyzed using spectral diagnostics and long-time dynamical behavior~\cite{Mulken2011}. More recently, it has been
shown that strong confinement can arise even in the absence of disorder,
purely as a consequence of symmetry and spectral degeneracy~\cite{balachandran2024disorder,Coutinho2022,Keating2007}. This establishes that graph geometry alone can be sufficient to induce robust localization in quantum dynamics, a phenomenon known as \emph{disorder-free localization}.
A standard quantitative measure of localization is the inverse
participation ratio (IPR), defined both for individual eigenstates and
for time-averaged dynamics~\cite{Thouless1974,Wegner1980,Evers2008}. The eigenstate IPR measures the
spatial extent of Hamiltonian eigenvectors and has proven particularly
useful in complex networks where large degenerate eigenspaces support
localized modes associated with structural symmetries~\cite{Yadav2015,Marrec2017,Sarkar2018,BuenoHatano2020}.
Two prominent mechanisms in such networks are community-driven confinement, where densely connected clusters trap probability~\cite{Zhang2021}, and symmetry-protected localization arising from degenerate null-eigenvalue subspaces~\cite{Coutinho2022}. The dynamical IPR, defined from the long-time averaged distribution,
quantifies confinement of a walk initialized at a given vertex and
provides a direct connection between spectral structure and transport
efficiency~\cite{Mulken2011,Zhang2021,BottcherPorter2025,Faccin2013,Keating2007}. While localized eigenstates form the building blocks of localization, the resulting dynamics also depend crucially on interference between eigenvectors and on the degeneracy structure of the spectrum~\cite{Mulken2007,BottcherPorter2025}. These static and dynamical perspectives are unified by the limiting distribution, which represents the effective outcome of quantum transport after temporal averaging and underlies applications in quantum search, state transfer, and quantum memory~\cite{Chandrashekar2015,PR,Venegas_Andraca_2012}.
In this work, we analyze localization in CTQWs on two highly structured
graph families: barbell graphs and star-of-cliques graphs. These
networks combine strong symmetry with pronounced spectral degeneracy,
enabling exact spectral decomposition and explicit evaluation of both
eigenstate and dynamical IPR. Barbell graphs, consisting of two complete graphs connected by a single bridge edge, serve as canonical models for transport through bottlenecks. Star-of-cliques graphs combine clustered subgraphs with a central hub and can be realized with different connectivity patterns, enabling comparison of multiple localization routes within a unified framework. By comparing a barbell graph with two
variants of a star-of-cliques construction, we isolate how small changes
in connectivity alter degeneracy patterns and hybridization between
eigenspaces.
Our analysis reveals a unifying mechanism: localization is governed by
the interplay between degeneracy, symmetry, and interference. Degenerate
subspaces generate families of confined modes, while hybridization
between invariant subspaces redistributes spectral weight and modifies
long-time trapping. The dynamical IPR can exceed naive expectations
based solely on individual eigenstate IPR values, demonstrating that
coherent superposition within degenerate eigenspaces enhances
confinement.
By relating IPR values directly to the effective number of vertices
visited (via \(1/\mathrm{IPR}\)), we provide a structural diagnostic for
predicting quantum transport outcomes in modular networks. The results
show that connectivity alone, even in the absence of disorder, can
determine where and how strongly a quantum walk localizes.

The paper is organized as follows: In Section~\ref{sec:introduction} we review basic definitions of CTQW, time averaged distribution and IPR in quantum walks. In Section 
~\ref{sec:families} we define the network/graph families considered in the paper. In Sections ~\ref{sec:barbell:all}, ~\ref{sec:var1:all}, ~\ref{sec:var2:all} we analyze the eigenstate IPR and dynamical IPR calculations done in the Appendix for the the Barbell graph, and Star of Cliques variant 1 and Star of Cliques variant 2  respectively. Finally in Section ~\ref{sec:conclusion} we summarize our findings and outlines future directions.

\eat{\section{ Localization in Quantum Walks}

The study of continuous-time quantum walks on networks reveals a fundamental dichotomy in quantum transport: while classical random walks typically diffuse to a stationary distribution, quantum walks exhibit complex interference patterns that can lead to either delocalized spreading or strong localization. Understanding this behavior requires examining both the static properties of network eigenvectors and the dynamic evolution of quantum states.

Two complementary measures provide insight into localization phenomena in quantum networks. The eigenstate inverse participation ratio characterizes the spatial confinement of individual Hamiltonian eigenvectors, quantifying how many network vertices each energy eigenstate effectively occupies \cite{Thouless1974, Wegner1980}. This measure has proven particularly valuable in analyzing real-world networks, where researchers have discovered systematic patterns of localized eigenvectors with zero eigenvalues, arising from duplication mechanisms and local symmetries inherent in complex network growth \cite{BuenoHatano2020, Yadav2015, Marrec2017}.

The dynamical inverse participation ratio, in contrast, measures the long-time confinement of quantum walks starting from specific vertices \cite{Faccin2013, Keating2007}. Rather than examining static eigenvectors, this approach captures the time-averaged probability distribution of the walking particle. The relationship between these two perspectives is fundamental: eigenstates with high spatial localization provide the building blocks for dynamical localization, but the actual quantum walk behavior also depends crucially on interference effects between eigenvectors and the degeneracy structure of the Hamiltonian spectrum \cite{Mulken2007, BottcherPorter2021}.

These perspectives are unified by the long-time limiting distribution, which describes the effective outcome of quantum transport after temporal averaging. This distribution represents the effective destination of quantum transport and serves as the foundation for many applications of quantum walks in search algorithms, state transfer protocols, and quantum communication schemes \cite{Portugal2013, VenegasAndraca2012}.  
Concentration of this distribution signals localization, whereas uniform spreading indicates delocalization. 

In this work, we analyze localization in quantum walks through detailed calculations on model networks that exhibit contrasting localization behaviors \cite{BottcherPorter2025}. We examine the barbell graph—two cliques connected by a single bridge—which demonstrates pronounced localization effects both at bridge vertices and within cliques. We further investigate two variants of star-of-cliques networks: one where the central vertex connects to all vertices within each satellite clique, and another where only single connections exist between center and cliques. These structures allow us to systematically explore how connectivity patterns influence both eigenstate properties and dynamical localization.

Our analysis reveals that connectivity geometry fundamentally determines quantum transport behavior. The fully connected star-of-cliques network and the barbell graph exhibit strong localization, with quantum walks remaining confined to small regions even over long time scales. In contrast, the sparsely connected variant shows much weaker localization, with quantum walks spreading more uniformly throughout the network. These differences originate in the underlying eigenstate structure: networks with dense local connections generate eigenvectors that are spatially confined within cliques, while networks with sparse connections produce eigenvectors that hybridize across multiple network regions \cite{Dorogovtsev2003, Sarkar2018}.
}

\eat{These findings have implications for designing quantum networks with tailored transport properties. For applications requiring efficient quantum communication or search, sparse connectivity may facilitate rapid spreading. For applications demanding stable quantum memory or protected state storage, dense local connections that induce strong localization may be preferable \cite{NokkalaPiiloBianconi2024}. By providing explicit analytical calculations connecting network structure to localization behavior, this work contributes to the fundamental understanding of quantum transport in complex systems and offers guidance for engineering quantum networks with desired dynamical properties.}

\section{Preliminaries}
\label{sec:preliminaries}

We introduce the mathematical definitions and tools used throughout this work.
We do not provide proofs of many standard results, instead pointing the reader to
the original sources.

\subsection{Continuous-Time Quantum Walks}
\label{sec:ctqw}

For a continuous-time quantum walk on a graph $G=(V,E)$ with $|V|=N$, the Hamiltonian $H$ is typically taken to be the normalized adjacency matrix $\tilde{M}$. The continuous-time quantum walk then evolves according to the Schr\"odinger equation
\begin{equation}
\frac{d}{dt}|\psi(t)\rangle = -iH|\psi(t)\rangle,
\end{equation}
with initial state $|\psi(0)\rangle = |j\rangle$ localized at vertex $j$ \cite{FarhiGutmann}. 

Let the spectral decomposition of $H$ be
\begin{equation}
H = \sum_{k=1}^{K} \lambda_k \Pi_k,
\end{equation}
where $\lambda_1 < \lambda_2 < \cdots < \lambda_K$ are the distinct eigenvalues and
\begin{equation}
\Pi_k = \sum_{r=1}^{m_k} \dyad{\phi_k^{(r)}}
\end{equation}
are orthogonal projectors onto eigenspaces of dimension $m_k$, with eigenstates $|\phi_k^{(r)}\rangle$.

The time-evolution operator can be written as 
\begin{equation}
U(t) = e^{-iHt} = \sum_{k=1}^{K} e^{-i\lambda_k t} \Pi_k.
\end{equation}
For an initial state $\ket{\psi_0} = \ket{j}$, the evolved state at time $t$ is
\begin{equation}
\ket{\psi(t)} = U(t)\ket{j}.
\end{equation}

The transition probability from vertex $j$ to vertex $i$ at time $t$ is
\begin{equation}
\pi_{ij}(t) = |\braket{i}{\psi(t)}|^2 = |\langle i|e^{-iHt}|j\rangle|^2.
\end{equation}

Expanding in the eigenbasis $\{|\phi_\mu\rangle\}_{\mu=1}^N$ of $H$ (with eigenvalues $\lambda_\mu$), we have
\begin{equation}
\pi_{ij}(t) = \sum_{\mu=1}^N \sum_{\nu=1}^N \langle i|\phi_\mu\rangle\langle \phi_\mu|j\rangle
\langle j|\phi_\nu\rangle\langle \phi_\nu|i\rangle e^{-i(\lambda_\mu-\lambda_\nu)t}.
\end{equation}

Since the evolution is unitary, $\pi_{ij}(t)$ does not converge pointwise in time. Following standard analyses of quantum walks~\cite{Aharonov, PR}, we consider the time-averaged distribution. The long-time averaged transition probability is

\begin{align}
\label{long_time_average}
\overline{\pi}_{ij} &= \lim_{T\to\infty}\frac{1}{T}\int_0^T \pi_{ij}(t)\,dt \\
&= \sum_{\mu,\nu=1}^N \langle i|\phi_\mu\rangle\langle \phi_\mu|j\rangle
\langle j|\phi_\nu\rangle\langle \phi_\nu|i\rangle \delta_{\lambda_\mu,\lambda_\nu},    
\end{align}

where $\delta_{\lambda_\mu,\lambda_\nu}$ is the Kronecker delta enforcing $\lambda_\mu = \lambda_\nu$. This can be rewritten using the spectral decomposition as
\begin{equation}
\overline{\pi}_{ij} = \sum_{k=1}^{K} \abs{\bra{i}\Pi_k\ket{j}}^2.
\end{equation}

\subsection{Inverse Participation Ratio}
\label{sec:IPR}

The inverse participation ratio (IPR) serves as a fundamental diagnostic for localization in quantum systems. We consider two related notions of IPR: one based on individual eigenstates, and one based on the long-time averaged dynamics.

\subsubsection{Eigenstate IPR}

Let $|\phi_\mu\rangle$ be an eigenvector of $H$ with eigenvalue $\lambda_\mu$. Following Thouless and Wegner~\cite{Thouless1974, Wegner1980}, the scale-invariant eigenstate inverse participation ratio is defined as 
\begin{equation}
\mathrm{IPR}_\mu =
\frac{\sum_{i=1}^N |\langle i|\phi_\mu\rangle|^4}
{\left(\sum_{i=1}^N |\langle i|\phi_\mu\rangle|^2\right)^2}.
\end{equation}
For normalized eigenvectors, this reduces to
\begin{equation}
\boxed{\mathrm{IPR}_\mu = \sum_{i=1}^N |\langle i|\phi_\mu\rangle|^4}.
\end{equation}

The eigenstate IPR ranges from $1/N$, corresponding to complete delocalization where the eigenstate is spread uniformly over all vertices, to $1$, corresponding to localization on a single vertex. As emphasized in \cite{BuenoHatano2020}, the quantity $1/\mathrm{IPR}_\mu$ can be interpreted as the effective number of vertices occupied by the eigenstate.

\subsubsection{Dynamical IPR}

Building on the time-averaged transition probability $\overline{\pi}_{ij}$ from Eq.~\eqref{long_time_average}, we define the \emph{long-time average inverse participation ratio} for an initial vertex $|j\rangle$ as
\begin{equation}
\overline{\mathrm{IPR}}_j = \sum_{i=1}^N (\overline{\pi}_{ij})^2.
\end{equation}

This quantity measures the extent to which a quantum walk initialized at vertex $j$ remains localized in the long-time limit. A large value indicates that the walker's time-averaged probability distribution is concentrated on a few vertices, while a small value indicates delocalization over many vertices.

Substituting the expression for $\overline{\pi}_{ij}$ and expanding the square yields
\begin{multline}
\overline{\mathrm{IPR}}_j = \sum_{i=1}^N \sum_{\mu,\nu=1}^N \sum_{\rho,\sigma=1}^N 
\langle i|\phi_\mu\rangle\langle \phi_\mu|j\rangle
\langle j|\phi_\nu\rangle\langle \phi_\nu|i\rangle \\
\times \langle i|\phi_\rho\rangle\langle \phi_\rho|j\rangle
\langle j|\phi_\sigma\rangle\langle \phi_\sigma|i\rangle
\delta_{\lambda_\mu,\lambda_\nu}\delta_{\lambda_\rho,\lambda_\sigma}.
\end{multline}

Following \cite{Keating2007,BottcherPorter2025,BuenoHatano2020}, this can be organized more explicitly as
\begin{equation}
\begin{aligned}
\overline{\mathrm{IPR}}_j &=
\sum_{i=1}^N \Bigg[ \bigg( \sum_{\mu=1}^N X_\mu^{ij} \bigg)^2
+ 2\bigg( \sum_{\mu=1}^N X_\mu^{ij} \bigg)
\bigg( \sum_{\substack{\mu<\nu \\ \lambda_\mu=\lambda_\nu}} Y_{\mu\nu}^{ij} \bigg) \\
&\quad + \sum_{\substack{\mu<\nu \\ \lambda_\mu=\lambda_\nu}}
\sum_{\substack{r<s \\ \lambda_r=\lambda_s}}
Z_{\mu\nu rs}^{\,ij} \Bigg],
\end{aligned}
\end{equation}
where we define the diagonal contributions
\begin{equation}
X_\mu^{ij} = |\langle i|\phi_\mu\rangle\langle \phi_\mu|j\rangle|^2,
\end{equation}
the degenerate pair terms
\begin{equation}
Y_{\mu\nu}^{ij} = 2\,\Re\!\left( \langle i|\phi_\mu\rangle\langle \phi_\mu|j\rangle
\langle j|\phi_\nu\rangle\langle \phi_\nu|i\rangle \right),
\end{equation}
and the quartet terms
\begin{equation}
Z_{\mu\nu rs}^{\,ij} = 
\begin{cases}
\dfrac{1}{2}\,Y_{\mu\nu}^{ij}Y_{rs}^{\,ij}, & \lambda_\mu-\lambda_\nu = \lambda_r-\lambda_s \\
& \text{or} \ \lambda_\mu-\lambda_\nu = \lambda_s-\lambda_r, \\[8pt]
0, & \text{otherwise}.
\end{cases}
\end{equation}

The first term captures diagonal contributions from single eigenvectors, the second term arises from pairs of degenerate eigenvectors, and the third term involves ``quartets'' of eigenvalues whose differences are equal or opposite. Eigenvalue degeneracies and such quartets are common in highly symmetric networks and can lead to strong localization even when individual eigenstates are extended.

\subsubsection{Relation Between Eigenstate and Dynamical IPR}

We now derive a simple relation between the two notions of IPR introduced above. This will be useful when we compare localization in the families of graphs considered in this paper.

\begin{observation}[Lower Bound]
\label{thm:lb}
Let $|j\rangle = \sum_{\mu=1}^N c_\mu |\phi_\mu\rangle$ with
$c_\mu = \langle \phi_\mu|j\rangle$. Then
\begin{equation}
\overline{\mathrm{IPR}}_j \ge \sum_{\mu=1}^N |c_\mu|^4 \mathrm{IPR}_\mu.
\end{equation}
\end{observation}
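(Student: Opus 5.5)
The plan is to express both sides of Observation~\ref{thm:lb} as sums over vertices $i$ of nonnegative quantities, and then compare them term by term. Write $a_\mu^{ij} = \langle i|\phi_\mu\rangle\langle\phi_\mu|j\rangle = \langle i|\phi_\mu\rangle\, c_\mu$, so that $X_\mu^{ij} = |a_\mu^{ij}|^2$. For the right-hand side I would first note that
\begin{equation}
\sum_{\mu=1}^N |c_\mu|^4\,\mathrm{IPR}_\mu = \sum_{i=1}^N\sum_{\mu=1}^N |c_\mu|^4|\langle i|\phi_\mu\rangle|^4 = \sum_{i=1}^N\sum_{\mu=1}^N \bigl(X_\mu^{ij}\bigr)^2 .
\end{equation}
It therefore suffices to prove, for each fixed $i$, that $(\overline{\pi}_{ij})^2 \ge \sum_\mu (X_\mu^{ij})^2$.

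\textbf{Nondegenerate case.} If the spectrum is simple, then $\overline{\pi}_{ij} = \sum_\mu X_\mu^{ij}$, which is a sum of nonnegative reals. Expanding the square gives the diagonal terms $\sum_\mu (X_\mu^{ij})^2$ plus cross terms $2\sum_{\mu<\nu} X_\mu^{ij} X_\nu^{ij} \ge 0$. Dropping the cross terms gives the inequality, and summing over $i$ finishes this case. This is the first term of the expansion of $\overline{\mathrm{IPR}}_j$ given above, with the $Y$ and $Z$ terms absent.

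\textbf{Degenerate case.} Here $\overline{\pi}_{ij} = \sum_k |\langle i|\Pi_k|j\rangle|^2 = \sum_k \bigl|\sum_{\mu\in k} a_\mu^{ij}\bigr|^2$. The pair terms $Y_{\mu\nu}^{ij}$ can now be negative, so the naive argument breaks. My first move is to use the freedom in choosing the eigenbasis inside each eigenspace, adapted to $j$:
\begin{itemize}
\item In each eigenspace with $\Pi_k|j\rangle\neq 0$, take one basis vector $\phi_k^{(1)} = \Pi_k|j\rangle/\|\Pi_k|j\rangle\|$.
\item Complete this to an orthonormal basis of the eigenspace. Every other basis vector is then orthogonal to $|j\rangle$, so its $c_\mu = 0$.
\end{itemize}
In this basis each eigenspace has only one $\mu$ with $c_\mu\neq 0$. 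Hence $\langle i|\Pi_k|j\rangle = a^{ij}_{\mu_k}$, all $Y$ terms vanish, and we are back in the nondegenerate computation. This proves the observation for the $j$-adapted basis, which is the natural one in the appendix calculations where degenerate blocks are handled explicitly.

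\textbf{Main obstacle.} The hard step is an arbitrary orthonormal basis of a degenerate eigenspace, where the right-hand side depends on the basis but the left-hand side does not. What I would try is to show, for each eigenspace $k$ and each $i$, that
\begin{equation}
\sum_{\mu\in k} |c_\mu|^4|\langle i|\phi_\mu\rangle|^4 \le \Bigl(\sum_{\mu\in k}|c_\mu|^2|\langle i|\phi_\mu\rangle|^2\Bigr)^2 .
\end{equation}
This inequality is trivially true. The gap is that its right-hand side need not be bounded by $|\langle i|\Pi_k|j\rangle|^4$ when $i\neq j$, because of cancellations among the $a_\mu^{ij}$.

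I would look for a global rather than vertexwise argument. One option is Cauchy--Schwarz over $i$, combined with $\sum_i |\langle i|\Pi_k|j\rangle|^2 = \|\Pi_k|j\rangle\|^2 = \sum_{\mu\in k}|c_\mu|^2$. Another is to use the $i=j$ terms, which are always nonnegative, since $a_\mu^{jj}=|c_\mu|^2\ge 0$. If neither closes the gap, I would state the observation with the $j$-adapted basis convention, which is the one used later in the paper.
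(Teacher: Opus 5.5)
Your nondegenerate argument is exactly the paper's proof. The paper writes $\overline{\pi}_{ij}=\sum_\mu|c_\mu|^2|\langle i|\phi_\mu\rangle|^2$ ``neglecting degenerate cross terms for the bound'', then drops the nonnegative products $X_\mu^{ij}X_\nu^{ij}$, which it calls Cauchy--Schwarz. It gives no justification for discarding the $Y_{\mu\nu}^{ij}$ terms, and you are right to refuse that step. Your $j$-adapted construction correctly proves the basis-free bound $\overline{\mathrm{IPR}}_j\ge\sum_k\sum_i|\langle i|\Pi_k|j\rangle|^4$. This equals $\sum_\mu|c_\mu|^4\mathrm{IPR}_\mu$ whenever each eigenspace has at most one basis vector not orthogonal to $|j\rangle$.

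The gap you flag cannot be closed by a global argument, neither Cauchy--Schwarz over $i$ nor the slack in the $i=j$ term. For an arbitrary orthonormal basis of a degenerate eigenspace the inequality is false. The proof uses nothing graph-specific, so a real symmetric $H$ suffices as a counterexample.

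\emph{Degenerate block.} Let $E$ be a doubly degenerate eigenspace spanned by $\phi_1=(2|0\rangle+3|1\rangle+|2\rangle)/\sqrt{14}$ and $\phi_2=(|0\rangle-2|1\rangle+4|2\rangle)/\sqrt{21}$. Then $\Pi_E|0\rangle=\tfrac13(|0\rangle+|1\rangle+|2\rangle)$. In this basis $E$ contributes $\tfrac{4}{49}\cdot\tfrac12+\tfrac{1}{441}\cdot\tfrac{13}{21}\approx0.0422$ to the right-hand side. Its adapted value is only $\sum_i|\langle i|\Pi_E|0\rangle|^4=\tfrac1{27}\approx0.0370$.

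\emph{Complement.} Give every other eigenvalue multiplicity one. Take the eigenvectors to be a flat basis (normalized Sylvester--Hadamard, $M=128$) of $E^\perp$, which is spanned by $(2|0\rangle-|1\rangle-|2\rangle)/\sqrt6$ and $|3\rangle,\dots,|M+1\rangle$. Then $\overline{\pi}_{00}=\tfrac19+\tfrac{4}{9M}$, $\overline{\pi}_{10}=\overline{\pi}_{20}=\tfrac19+\tfrac1{9M}$, and $\overline{\pi}_{i0}=\tfrac{2}{3M}$ for $i\ge3$. Hence $\overline{\mathrm{IPR}}_0=\tfrac1{27}+\tfrac{16}{27M}-\tfrac{2}{9M^2}\approx0.0417<0.0422$.

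The mechanism is that the basis-dependent gain inside $E$ is a fixed positive amount. All the slack you hoped to exploit (the $i=j$ term and the cross-eigenspace terms) is only $O(1/M)$ once the rest of $|j\rangle$'s weight is spread over many simple, delocalized modes.

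So your fallback, stating the observation for a simple spectrum or a $j$-adapted basis, is the correct formulation rather than a compromise. It also covers every use of the observation in the paper. The barbell bridge vertex overlaps only $s_\pm$ and $a_\pm$, whose eigenvalues are simple. At the Variant~1 hub the eigenvalue $-1/n$ is degenerate, but every $\psi_4^{(j,r)}$ is orthogonal to $|0\rangle$.
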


\begin{proof}
Using $\overline{\pi}_{ij}
= \sum_\mu |c_\mu|^2 |\langle i|\phi_\mu\rangle|^2$
(neglecting degenerate cross terms for the bound) and applying the
Cauchy--Schwarz inequality,
\begin{equation}
\sum_i (\overline{\pi}_{ij})^2
\ge \sum_i \sum_\mu |c_\mu|^4 |\langle i|\phi_\mu\rangle|^4
= \sum_\mu |c_\mu|^4 \mathrm{IPR}_\mu.
\end{equation}
\end{proof}

\begin{corollary}[Eigenstate Initial Conditions]
If the initial state is an eigenstate $|j\rangle = |\phi_\alpha\rangle$, then
\begin{equation}
\overline{\mathrm{IPR}}_{\phi_\alpha} = \mathrm{IPR}_\alpha.
\end{equation}
\end{corollary}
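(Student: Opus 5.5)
The plan is to extend the definitions of the time-averaged distribution and the dynamical IPR from a vertex initial state $|j\rangle$ to an arbitrary normalized initial state $|\psi_0\rangle$. We set $\overline{\pi}_{i}(\psi_0)=\lim_{T\to\infty}\frac1T\int_0^T |\langle i|e^{-iHt}|\psi_0\rangle|^2\,dt$ and $\overline{\mathrm{IPR}}_{\psi_0}=\sum_i \overline{\pi}_i(\psi_0)^2$. The quantity $\overline{\mathrm{IPR}}_{\phi_\alpha}$ in the statement is then read as this quantity with $|\psi_0\rangle=|\phi_\alpha\rangle$. Repeating the derivation of Eq.~\eqref{long_time_average} with $|j\rangle$ replaced by $|\psi_0\rangle$ gives
\begin{equation}
\overline{\pi}_{i}(\psi_0)=\sum_{k=1}^{K}\abs{\bra{i}\Pi_k\ket{\psi_0}}^2 .
\end{equation}

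The key step is to evaluate this formula at $|\psi_0\rangle=|\phi_\alpha\rangle$. Suppose $\lambda_\alpha$ lies in the eigenspace of $\lambda_{k_0}$. Then $\Pi_k|\phi_\alpha\rangle=\delta_{k,k_0}|\phi_\alpha\rangle$, so only one term of the sum survives, and $\overline{\pi}_i(\phi_\alpha)=|\langle i|\phi_\alpha\rangle|^2$. The same conclusion follows more directly from $e^{-iHt}|\phi_\alpha\rangle=e^{-i\lambda_\alpha t}|\phi_\alpha\rangle$: the instantaneous distribution is time-independent, so averaging does nothing. Squaring and summing over $i$ gives $\overline{\mathrm{IPR}}_{\phi_\alpha}=\sum_i|\langle i|\phi_\alpha\rangle|^4=\mathrm{IPR}_\alpha$, using the boxed form for normalized eigenvectors.

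As a consistency check with Observation~\ref{thm:lb} (applied to the generalized initial state), the coefficients are $c_\mu=\delta_{\mu\alpha}$. The lower bound then reads $\overline{\mathrm{IPR}}_{\phi_\alpha}\ge \mathrm{IPR}_\alpha$, and the computation above shows equality. Equality is expected because there are no cross terms, degenerate or otherwise, when only a single $c_\mu$ is nonzero.

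The main obstacle is interpretive rather than computational. The statement writes $|j\rangle=|\phi_\alpha\rangle$, which is only literally possible when an eigenvector is a vertex basis state. I would therefore state explicitly that the dynamical IPR is being taken for the initial state $|\phi_\alpha\rangle$. I would also note that the result does not depend on the choice of basis within a degenerate eigenspace, since $\Pi_{k_0}$ fixes $|\phi_\alpha\rangle$ regardless of that choice.
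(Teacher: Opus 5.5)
Your proof is correct and is essentially the paper's reasoning: the paper states this as an immediate corollary of the lower bound with no written proof. The implicit argument there is that with $c_\mu=\delta_{\mu\alpha}$ the time-averaged distribution reduces to the single term $|\langle i|\phi_\alpha\rangle|^2$, with no cross terms, so the Cauchy--Schwarz step holds with equality. Your explicit extension of $\overline{\pi}$ to non-vertex initial states, and your remark that the result does not depend on the basis chosen within a degenerate eigenspace, make precise points the paper leaves unstated.
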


Thus, eigenstate IPR directly determines dynamical localization for eigenstate initial conditions. More generally, initial states with significant overlap on multiple high-$\mathrm{IPR}_\mu$ eigenvectors yield large dynamical IPR \cite{BottcherPorter2021}. The time-averaged distribution $\overline{\pi}_{ij}$ serves as the fundamental building block for the dynamical IPR, with the relationship
\begin{equation}
\overline{\pi}_{ii} = \mathrm{IPR}_i
\end{equation}
providing a direct link between static eigenstate properties and dynamical measures \cite{Keating2007}. This connection allows us to use eigenstate IPR as a diagnostic for the dynamical behavior of quantum walks initialized at specific vertices.

\section{Networks considered and notation}
\label{sec:families}
We consider two families of graphs, barbell graphs and stars of cliques graphs. We define the two families.
\subsection{Barbell Graphs}
Barbell grapha are a family of graphs, ${\cal B}_n$, $n = 1,\cdots, \cdots$. ${\cal B}_n$, has two cliques $A$, $B$ of size $n$ with an edge joining a vertex in $A$ to a vertex in $B$.
Vertices in $A$ are denoted $\{1, 2, \dots, n\}$ and vertices in $B$ are denoted $\{n+1, n+2, \dots, 2n\}$, with the bridge connecting vertex $n$ to vertex $n+1$. For quantum walk analysis, the Hamiltonian is taken as the normalized adjacency matrix $\tilde{M}$. Understanding the spectral properties of $\tilde{M}$ is crucial for characterizing quantum walk behavior, including mixing times, oscillation frequencies, and transport efficiency.

\subsection{Star of Cliques Graphs}
This $n$-th member of this family of graphs ${\cal SC}_n$. $n = 1 \ldots, \ldots$ has $n$ cliques joined to a central hub vertex. We consider two distinct variants that differ in how the central vertex connects to the $n$ peripheral cliques, each of size $n$. {\bf Variant 1} is the full-connection variant in which the central vertex connects to \emph{all} vertices in each clique. {\bf Variant 2} is the single-connection variant in which the central vertex connects to only \emph{one} vertex in each clique. We use the notation $|0\rangle$ for the central vertex and the notation $|j,k\rangle$ ($k=1,\dots,n$) for the vertices of clique $j$ for $j=1,\dots,n$. The central vertex has degree $d_0 = n^2$ in the full-connection variant and degree $n$ in the single-connection variant, and $d_{(j,k)} = n$. 

\subsection{Spectral and IPR Analysis of the Networks Considered}
In Section~\ref{sec:barbell_spectral_A},~\ref{supp:full_connection}, and~\ref{supp:single_connection} of the appendix, we provide detailed calculations of the spectra of the normalized-adjacency matrices for each family of networks considered. We also do an asymptotic analysis of the spectra and construct an eigen-space basis for each family. For each of the network families considered we give detailed calculations of the eigenstate IPR and the dynamical IPR in Section~\ref{sec:ipr_barbell},~\ref{sec:variant1_ipr}, and ~\ref{ipr_variant2} of the appendix respectively.

\section{Barbell Graph : IPR calculations}
\label{sec:barbell:all}
\subsection{Eigentate IPR}
\label{sec:barbell_ipr_dyn}
From the calculations in Section~\ref{sec:ipr_barbell}, the symmetric eigenvector with eigenvalue $\lambda_{A+} \approx 1$ has approximately uniform support over both cliques and the bridge vertices, and is therefore spatially extended. Its structure $(1,\ldots,1,a,a,1,\ldots,1)$ shows that the two cliques are in phase, as are the bridge vertices. Because there is no phase mismatch across the cut, transport is not suppressed, and the eigenvector remains uniformly distributed across the graph. For this state $\mathrm{IPR}_{\psi_{A+}} \approx \frac{1}{2n}$. For large $n$, this eigenstate is fully delocalized.

For the eigenvectors $|e_{k}^A\rangle$, $k=1,\ldots,n-1$, supported entirely within clique $A$ the inverse participation ratio is
is $\frac{1 + k^3}{k(k+1)^2}$. This quantity is bounded below by $\frac{1}{2}$ for all $k$, and aproaches 1 for large $n$, indicating  spatial localization. These eigenstates therefore remain localized within a single clique, independent of the total graph size. Thus, the degeneracy reflects a family of graph-induced localized modes confined to a single clique.

For the antisymmetric eigenvector $|\psi_{A-}\rangle$,  $\mathrm{IPR}_{\psi_{A-}} \approx \frac{1}{2}$, indicating that this eigenstate is also highly localized, at the two bridge vertices. This is as a result of quantum interference. The eigen vector form $(1,\ldots,1,a,-a,1,\ldots,1)$ indicates that the cliques are in phase, and while the two bridge vertices have the same amplitude there is a phase shift of $\pi$ relative to each other. So quantum transportation is suppressed due to destructive interference and amplitude is trapped.

\subsection{Dynamical IPR}

\textbf{Clique vertex} For a walk initialized at a vertex in either clique, the results of Section~\ref{dynamical_barbell} show that $\overline{\mathrm{IPR}}_{A_1} \approx 0.58$, indicating that, in the long-time limit, the walk remains effectively confined to a constant number of vertices within the clique. This behaviour closely mirrors that of a quantum walk on a complete graph. In that case, the dynamics evolves within a two-dimensional invariant subspace spanned by the starting vertex and the uniform superposition of the remaining vertices. Although the amplitude transiently spreads across the clique, time averaging suppresses these oscillations, and the walk remains predominantly localized at the initial vertex.

In the barbell graph, weak tunneling through the bridge permits slow leakage into the opposite clique, followed by limited spreading there. However, this transport occurs on a much longer timescale, and the time-averaged dynamics remains effectively confined within a constant-sized subset of vertices in the original clique.

\textbf{Bridge Vertex} The bridge vertex has $O(\frac{1}{\sqrt{n}})$ overlap with the symmetric vector, $O(1)$ overlap with the antisymmetric vector and no overlap with other eigenvectors. Since the antisymmetric eigen vector is localized, it follows from observation~\ref{thm:lb} that the $\overline{\mathrm{IPR}}_{Br_A}$ is lower bounded by a constant. In fact $\overline{\mathrm{IPR}_{Br_A}}$ is $\frac{1}{2}$. A walker initialized at $|Br_A\rangle$, has significant overlap with the antisymmetric bridge eigenmode. This mode carries opposite phase on the two bridge vertices, so amplitude attempting to propagate across the cut interferes destructively. The resulting suppression of transport produces a standing-wave pattern localized at the bottleneck. Unlike Anderson localization, which arises from disorder, this confinement is entirely structural and results from coherent interference induced by the graph’s connectivity.
. 

\subsection*{Summary of Localization in the Barbell Graph}

The barbell graph exhibits two distinct and structurally driven localization regimes. 
At the eigenstate level, the global symmetric mode is fully delocalized across both cliques, while the antisymmetric bridge mode forms a standing-wave pattern with opposite phase on the two bridge vertices, suppressing transport across the cut. 
In addition, the highly degenerate clique-supported eigenspaces generate families of modes confined entirely within individual cliques.

These spectral features directly govern the long-time dynamics. 
A walk initialized at a clique vertex remains effectively confined within its original clique, with dynamical IPR bounded away from zero. 
A walk initialized at a bridge vertex strongly overlaps with the antisymmetric standing-wave mode and remains localized at the bottleneck. 
Thus, weak inter-clique connectivity produces persistent dynamical confinement, demonstrating that localization in the barbell graph arises purely from symmetry, degeneracy, and interference rather than disorder.

\section{Star of Cliques, Variant 1 : IPR calculations}
\label{sec:var1:all}
\subsection{Eigenstate IPR}
\label{sec:star1_ipr_dyn}

The eigenvector $|\psi_1\rangle$ has symmetric support on all vertices. Its eigenstate IPR is $\frac{2}{(n+1)^2}$ which vanishes as $n$ grows, indicating complete delocalization across the clique structure. The uniform phase profile implies fully constructive interference throughout the graph: there is no phase mismatch between cliques or vertices, and no suppression of transport. This mirrors the global symmetric mode of the barbell graph, where the in-phase structure across the cut prevents trapping and produces a fully extended eigenstate.  

The eigen vector $|\psi_3^j\rangle$ has support on the the $(j+1)n$ vertices of $j+1$ cliques and lie within a highly degenerate eigenspace. Their inverse participation ratio, $\mathrm{IPR}_{|\psi_3^j\rangle} = \frac{1+j^3}{j(j+1)^2 n}$, scales as $O(1/n)$, indicating that each such eigenstate spreads over $O(n)$ vertices. Thus, these states are delocalized within the cliques they occupy. However, relative to the total number of vertices in the graph, $n^2+1$,
they occupy only a fraction $O(1/n)$, and therefore and therefore exhibit {\bf partial delocalization} at the global scale. This behaviour is analogous to the barbell’s clique-confined modes: constructive interference occurs within individual cliques, while the absence of coherent phase alignment across the entire graph prevents full global extension. 

The antisymmetric eigenvector $\psi_2$ can be written explicitly as
\[
\psi_2
=
\sqrt{\frac{n}{n+1}}
\left(
|0\rangle
-
\frac{1}{\sqrt{n}}\, S
\right),
\qquad
S = \frac{1}{n}\sum_{j=1}^{n}\sum_{k=1}^{n} |j,k\rangle,
\]
where $|0\rangle$ denotes the central vertex and $S$ is the uniform superposition over all clique vertices.

This representation makes the phase structure explicit: the central vertex carries positive amplitude, while the collective clique component appears with opposite phase. The resulting $\pi$ phase difference between the hub and the surrounding cliques induces destructive interference for amplitude attempting to propagate away from the center. Consequently, transport outward from the hub is strongly suppressed, and the eigenstate forms a standing-wave--like pattern localized at the central vertex.

Although $\psi_2$ has formal support on all vertices, its inverse participation ratio is
\[
\mathrm{IPR}_{\psi_2}
=
\frac{n^4+1}{n^2(n+1)^2},
\]
which is bounded below by $\tfrac{1}{2}$ and therefore indicates strong localization. In the limit $n \to \infty$, the weight concentrates almost entirely on $|0\rangle$, and the state becomes effectively localized at the center.

This mechanism parallels the antisymmetric bridge mode in the barbell graph: in both cases, a structural $\pi$ phase mismatch across a bottleneck produces interference-induced confinement without any disorder.

\subsection{Dynamical IPR}
\textbf{Centre Vertex} The localized eigenstate $|\psi_2\rangle$ has asymptotically unit support on the central vertex $|0\rangle$. By Observation~\ref{thm:lb}, this implies that a walk initialized at $|0\rangle$ remains confined to a constant-dimensional subspace in the long-time limit. Indeed, $\overline{\mathrm{IPR}}_{0} \approx 1$, showing that the walker stays localized at the hub. Although the evolution transiently spreads amplitude to the $n^2$ clique vertices, the center overlaps with only two eigenvectors, and time averaging suppresses this spreading. The long-term probability of occupying any given clique is $\frac{1}{2n}$, which vanishes as $n \to \infty$, yielding effective localization at the center.

\textbf{Clique vertices} For any other vertex $|j,k\rangle$, the long-time behaviour mirrors that of the barbell graph. The dominant contributions arise from degenerate eigenmodes confined within clique $j$, and $\overline{\mathrm{IPR}}_{|j,k\rangle} \approx 1$ for large $n$. Thus, a walk starting at a clique vertex remains trapped within that clique and is effectively localized at its starting vertex.

\subsection*{Summary of Localization in Variant 1}

Variant 1 displays a redistribution of localization driven by hybridization between the central vertex and clique subspaces. 
Although large degenerate eigenspaces persist within individual cliques, the coupling to the central vertex alters spectral weights and modifies dynamical confinement.

Eigenstate IPR reveals three regimes: globally extended modes, clique-confined degenerate modes, and bridge-supported antisymmetric modes. 
Dynamically, bridge vertices remain strongly localized due to dominant overlap with antisymmetric standing-wave modes, while clique vertices become asymptotically delocalized as spectral weight spreads across cliques through the central hub.

Thus, unlike the barbell graph, localization in Variant 1 is not uniform across vertex types: hybridization weakens clique confinement while preserving bridge localization.

\section{Eigenstate IPR : Variant 2, Star of Cliques}
\label{sec:var2:all}
\subsection{Eigenstate IPR}
The eigenvectors $|\psi_1\rangle$, $|\psi_2\rangle$, and $|\psi_3\rangle$ have support on the center, the bridge vertices, and all clique vertices. Their amplitudes are uniformly distributed at scale $O(1/n)$ per vertex, so that 
\[
\mathrm{IPR}_{\psi_j} \approx O\!\left(\frac{1}{n^2}\right),
\qquad j=1,2,3,
\]
indicating complete delocalization over the $n^2$ clique vertices. These modes are globally extended and represent fully constructive interference across the graph.

The symmetric eigenvectors $|\phi_4^{(j)}\rangle$ spanning the $(n-1)$–dimensional eigenspace corresponding to $\lambda_4$ have support on the bridge vertices together with the clique vertices. Their inverse participation ratios satisfy 
\[
\mathrm{IPR}_{\phi_4^{(j)}} \approx O\!\left(\frac{1}{n}\right),
\]
showing that each such eigenstate spreads over $O(n)$ vertices. Thus these modes exhibit partial localization: they are delocalized within a collection of cliques but do not extend uniformly over the entire graph. This behaviour parallels the partially localized modes observed in Variant~1.

In contrast, the antisymmetric eigenvectors $|\phi_5^{(j)}\rangle$ (multiplicity $n-1$) are strongly concentrated on the bridge vertices. Their eigenvalue $\lambda_5 \approx -1/n$ is small, corresponding to low-energy modes. The IPR of these states approaches an $O(1)$ constant as $n \to \infty$, demonstrating persistent localization. These form a family of ``bridge-localized'' modes: they are analogous to the antisymmetric bridge eigenvector of the barbell graph, but here the degeneracy produces multiple such standing-wave modes, each confined primarily to a different configuration of bridge vertices. The phase opposition within these vectors induces destructive interference that suppresses transport into the cliques.

Finally, the eigenstates $|\phi_6^{(j,r)}\rangle$ are supported entirely within a single clique and are localized on two vertices. These resemble the clique-confined modes in Variant~1 and arise from the internal symmetry of each complete subgraph.

\subsection{Dynamical IPR}
\textbf{Centre vertex $|0\rangle$.}
The central vertex has significant overlap only with the three globally extended eigenvectors 
$|\psi_1\rangle$, $|\psi_2\rangle$, and $|\psi_3\rangle$, each of which distributes amplitude uniformly at scale $O(1/n)$ over the $n^2$ clique vertices. 
Consequently,
\[
\overline{\mathrm{IPR}}_{0}
=
\Theta\!\left(\frac{1}{n^6}\right),
\]
showing strong dynamical delocalization. 
Although transient spreading occurs across all cliques, the time-averaged distribution becomes nearly uniform over the $n^2$ vertices, and the inverse participation ratio vanishes rapidly as $n \to \infty$. 
This behaviour is driven by constructive interference across the graph and mirrors the fully symmetric delocalized mode.

\medskip

\noindent
\textbf{Bridge vertices $|b_j\rangle$.}
Each bridge vertex has $O(1)$ overlap with the antisymmetric family 
$|\phi_5^{(j)}\rangle$, whose eigenvalue $\lambda_5 \approx -1/n$ is small and whose mass is concentrated primarily on bridge vertices. 
Because these eigenvectors form bridge-localized standing-wave modes with destructive interference suppressing transport into the cliques, they dominate the time-averaged dynamics. 
Accordingly,
\[
\overline{\mathrm{IPR}}_{b_j}
=
1 - \frac{4}{n} + O\!\left(\frac{1}{n^2}\right),
\]
which approaches $1$ as $n \to \infty$, indicating strong localization.

\medskip

\noindent
\textbf{Clique internal vertices $|c_{j,k}\rangle$.}
Vertices internal to a clique overlap predominantly with the highly degenerate clique-confined modes 
$|\phi_6^{(j,r)}\rangle$, which are supported within a single complete subgraph. 
These modes are localized on pairs of vertices and arise from the internal symmetry of each clique. 
As a result,
\[
\overline{\mathrm{IPR}}_{c_{j,k}}
=
1 - \frac{4}{n} + O\!\left(\frac{1}{n^2}\right),
\]
so a walk initialized at such a vertex remains effectively trapped within its clique and is localized at the starting vertex in the long-time limit.

\medskip

\subsection*{Summary of Localization in Variant 2}

Restricting each clique to a single connection to the central vertex fundamentally alters the degeneracy structure. 
The eigenstate analysis reveals globally extended modes, partially delocalized symmetric modes, bridge-localized antisymmetric standing-wave families, and clique-confined degenerate modes.

Dynamically, the center becomes strongly delocalized due to dominant overlap with globally extended eigenvectors, while both bridge and clique vertices exhibit persistent localization governed by degenerate subspaces. 
In contrast to Variant 1, localization is restored at clique vertices, and confinement is redistributed away from the center.

These results show that small structural changes in connectivity reorganize spectral multiplicities and thereby shift localization between vertex classes.

\begin{table*}[ht]
\centering
\caption{Summary of eigenstate and dynamical IPR scaling for barbell graph and star-of-cliques variants. All results are shown in the limit of large graph size $n \to \infty$, where $n$ denotes the size parameter for each graph family.}
\label{tab:ipr_summary}
\begin{tabular}{|l|c|c|c|}
\hline
\textbf{Graph} & \textbf{Vertex/State Type} & \textbf{Eigenstate IPR} & \textbf{Dynamical IPR} \\
\hline
\multirow{3}{*}{Barbell} 
 & Symmetric mode & $O(1/n)$ & — \\
 & Clique-confined modes & $\Omega(1)$ & — \\
 & Antisymmetric bridge mode & $\approx 1/2$ & — \\
\hline
 & Clique vertex & — & $\approx 0.58$ \\
 & Bridge vertex & — & $\approx 1/2$ \\
\hline
\multirow{3}{*}{Star of Cliques (Variant 1)} 
 & Symmetric mode $|\psi_1\rangle$ & $O(1/n^2)$ & — \\
 & Clique-delocalized modes $|\psi_3^j\rangle$ & $O(1/n)$ & — \\
 & Antisymmetric hub mode $|\psi_2\rangle$ & $\Omega(1)$ & — \\
\hline
 & Centre vertex $|0\rangle$ & — & $\approx 1$ \\
 & Clique vertex $|j,k\rangle$ & — & $\approx 1$ \\
\hline
\multirow{4}{*}{Star of Cliques (Variant 2)} 
 & Global modes $|\psi_1\rangle,|\psi_2\rangle,|\psi_3\rangle$ & $O(1/n^2)$ & — \\
 & Partially delocalized modes $|\phi_4^{(j)}\rangle$ & $O(1/n)$ & — \\
 & Bridge-localized modes $|\phi_5^{(j)}\rangle$ & $\Omega(1)$ & — \\
 & Clique-confined modes $|\phi_6^{(j,r)}\rangle$ & $\Omega(1)$ & — \\
\hline
 & Centre vertex $|0\rangle$ & — & $\Theta(1/n^6)$ \\
 & Bridge vertex $|b_j\rangle$ & — & $1 - 4/n + O(1/n^2)$ \\
 & Clique vertex $|c_{j,k}\rangle$ & — & $1 - 4/n + O(1/n^2)$ \\
\hline
\end{tabular}
\end{table*}

\section{Summary and outlook}
\label{sec:conclusion}
We have presented a comparative analysis of eigenstate and dynamical
inverse participation ratios for continuous-time quantum walks on the
barbell graph and two variants of the star-of-cliques graph. The study
demonstrates that modest structural modifications in graph
connectivity---specifically, how cliques are coupled through bridge or
central vertices---produce qualitatively distinct localization regimes
in long-time dynamics.

The quantitative contrasts summarized in Table\textasciitilde I reflect
a consistent spectral mechanism. In the barbell graph, both bridge and
non-bridge vertices exhibit \(O(1)\) dynamical IPR, indicating
persistent confinement arising from weak inter-clique coupling and
antisymmetric standing-wave modes. In Variant\textasciitilde1 of the
star-of-cliques graph, strong hybridization between center and clique
subspaces redistributes spectral weight, yielding localized bridge
vertices but asymptotically delocalized clique vertices. In
Variant\textasciitilde2, restricting each clique to a single connection
to the center alters the degeneracy structure, leading to enhanced
localization at the center while promoting delocalization at bridge
vertices.

The eigenstate IPR clarifies the spectral origin of these behaviors.
Large degenerate subspaces support localized or partially localized
modes whose structure is dictated by symmetry. Crucially, interference
within these degenerate eigenspaces modifies the long-time averaged
distribution: the dynamical IPR can exceed the value suggested by
considering individual eigenstates independently. Localization is
therefore controlled not only by individual eigenvector profiles, but
also by degeneracy multiplicities and hybridization patterns between
invariant subspaces.

A principal conceptual conclusion is that localization is not monotone
in connectivity. Increasing or redistributing inter-clique couplings can
simultaneously enhance confinement at some vertices while inducing
delocalization at others. Thus, degeneracy structure provides a precise
spectral lens through which connectivity governs transport.

From an algorithmic perspective, continuous-time quantum walks have been
studied as primitives in search and Hamiltonian-based computation
\cite{Childs2009,Novo_2015}. Although this work does not establish performance bounds,
the sensitivity of dynamical IPR to degeneracy and coupling patterns
suggests that spectral multiplicity may influence mixing behavior,
hitting times, and state concentration. A systematic comparison between
dynamical IPR, spectral gap, and measured hitting times in highly
degenerate graphs may clarify how structural localization constrains or
enhances algorithmic performance.

Overall, these results reinforce that disorder is not required for
strong localization in quantum transport. Symmetry, degeneracy, and
interference---encoded directly in network architecture---are sufficient
to produce robust and tunable confinement phenomena.

\section{Acknowledgment} Both the authors were supported by a generous grant to CMI by the Infosys Foundation.

\bibliographystyle{apsrev4-2}
\bibliography{reference}

\clearpage

\appendix

\section{ Eigenspectrum for the Barbell Graph $B(n)$}
\label{sec:barbell_spectral_A}

We present the complete spectral analysis for the normalized adjacency matrix $\tilde{M} = \Gamma^{-1/2}M\Gamma^{-1/2}$ of the barbell graph $B(n)$, which consists of two $n$-cliques connected by a single bridge between vertices $n$ (in clique $A$) and $n+1$ (in clique $B$). The matrix $\tilde{M}$ is symmetric and therefore possesses real eigenvalues and orthogonal eigenvectors.

\subsubsection*{Degree Structure}

The degree matrix and its square root are:
\[
\Gamma = \operatorname{diag}\Bigl(\underbrace{n-1,\dots,n-1}_{n-1},\; n,\; n,\; 
\underbrace{n-1,\dots,n-1}_{n-1}\Bigr),
\]
\begin{align*}
\Gamma^{1/2} &= \operatorname{diag}\Bigl(\underbrace{\sqrt{n-1},\dots,\sqrt{n-1}}_{n-1},\; \sqrt{n},\; \sqrt{n},\;\\ 
&\underbrace{\sqrt{n-1},\dots,\sqrt{n-1}}_{n-1}\Bigr).    
\end{align*}

The eigenvalue equation for $\tilde{M}$ is $\tilde{M}\mathbf{y} = \lambda \mathbf{y}$, or equivalently $M\mathbf{y} = \lambda \Gamma \mathbf{y}$.

\subsubsection*{Eigenvalue $\lambda_0 = -\dfrac{1}{n-1}$ (Multiplicity $2n-4$)}

For this eigenvalue, we construct eigenvectors that vanish on the bridge vertices $n$ and $n+1$ and sum to zero on each clique. Explicitly, choose vectors $\mathbf{v} \in \mathbb{R}^{n-1}$ and $\mathbf{w} \in \mathbb{R}^{n-1}$ satisfying $\sum_{i=1}^{n-1} v_i = 0$ and $\sum_{i=n+2}^{2n} w_i = 0$. Then set
\[
\mathbf{y} = (v_1,\dots,v_{n-1},\,0,\,0,\,w_{n+2},\dots,w_{2n})^{\!T}.
\]

For any vertex $i$ in a clique (excluding the bridge vertex), we have $(M\mathbf{y})_i = \sum_{j \in \text{clique}} y_j - y_i = -y_i$ and $(\Gamma\mathbf{y})_i = (n-1)y_i$, which gives $\lambda = -1/(n-1)$. 

We choose the following orthonormal basis of eigenvectors with support in clique $A$:
\begin{align}
|d_k^A\rangle = \frac{1}{\sqrt{k(k+1)}}\left(\sum_{i=1}^k |A_i\rangle - k|A_{k+1}\rangle\right),  k=1,\dots,n-2,
\end{align}
with an identical family $\{|d_k^B\rangle\}_{k=1}^{n-2}$ with support in clique $B$. These vectors satisfy 
\[\langle d_k^A|d_l^A\rangle = \delta_{kl}, \langle d_k^A|d_l^B\rangle = 0\], and have zero overlap with the bridge vertices:
\begin{align}
\langle Br_A|d_k^A\rangle = 0, \langle Br_B|d_k^A\rangle = 0, \langle Br_A|d_k^B\rangle = 0, \langle Br_B|d_k^B\rangle = 0.
\end{align}

This is a degenerate eigenspace of dimension $ 2n-4$.

\subsubsection*{Symmetric Eigenvectors Across the Bridge (Eigenvalues $\lambda_{\pm}^{(1)}$)}

Consider eigenvectors of the form:
\[
\mathbf{y}_{\text{sym}} = \bigl(\underbrace{\sqrt{n-1},\dots,\sqrt{n-1}}_{n-1},\,a\sqrt{n},\,a\sqrt{n},\,
\underbrace{\sqrt{n-1},\dots,\sqrt{n-1}}_{n-1}\bigr)^{\!T}.
\]

For a vertex $i$ in clique $A$ (excluding the bridge), the eigenvalue equation $M\mathbf{y} = \lambda \Gamma \mathbf{y}$ gives:
\[
(M\mathbf{y})_i = (n-2)\sqrt{n-1} + a\sqrt{n}, \qquad (\Gamma\mathbf{y})_i = (n-1)\sqrt{n-1},
\]
leading to:
\begin{equation}
(n-2)\sqrt{n-1} + a\sqrt{n} = \lambda (n-1)\sqrt{n-1}. \tag{S1}
\end{equation}

For the bridge vertex $n$:
\[
(M\mathbf{y})_n = (n-1)\sqrt{n-1} + a\sqrt{n}, \qquad (\Gamma\mathbf{y})_n = n a\sqrt{n},
\]
giving:
\begin{equation}
(n-1)\sqrt{n-1} + a\sqrt{n} = \lambda n a\sqrt{n}. \tag{S2}
\end{equation}

Dividing (S1) by $\sqrt{n-1}$ and (S2) by $\sqrt{n}$ yields the system:
\begin{align}
n-2 + a\sqrt{\frac{n}{n-1}} &= \lambda(n-1), \label{eq:sym1}\\
(n-1)\sqrt{\frac{n-1}{n}} + a &= \lambda n a. \label{eq:sym2}
\end{align}

Solving these equations, we eliminate $a$ to obtain the quadratic:
\begin{equation}
n(n-1)\lambda^{2} - (n^{2} - n - 1)\lambda - 1 = 0,
\end{equation}
whose solutions are:
\begin{equation}
\boxed{\lambda_{\pm}^{(1)} = \frac{n^{2} - n - 1 \pm \sqrt{(n^{2} - n - 1)^{2} + 4n(n-1)}}{2n(n-1)}.}
\end{equation}

The corresponding values of $a$ are given by $a_{\pm} = \left(\lambda_{\pm}^{(1)}(n-1) - (n-2)\right)\sqrt{\frac{n-1}{n}}$.

\subsubsection*{Antisymmetric Eigenvectors Across the Bridge (Eigenvalues $\lambda_{\pm}^{(2)}$)}

Now consider eigenvectors of the form:
\begin{align*}
\mathbf{y}_{\text{anti}} &= \bigl(\underbrace{\sqrt{n-1},\dots,\sqrt{n-1}}_{n-1},\,a\sqrt{n},\,-a\sqrt{n},\,\\
&\underbrace{-\sqrt{n-1},\dots,-\sqrt{n-1}}_{n-1}\bigr)^{\!T}.    
\end{align*}

For a vertex $i$ in clique $A$ (excluding the bridge), the same calculation as in the symmetric case yields:
\begin{equation}
n-2 + a\sqrt{\frac{n}{n-1}} = \lambda(n-1). \tag{A1}
\end{equation}

For the bridge vertex $n$:
\[
(M\mathbf{y})_n = (n-1)\sqrt{n-1} - a\sqrt{n}, \qquad (\Gamma\mathbf{y})_n = n a\sqrt{n},
\]
giving:
\begin{equation}
(n-1)\sqrt{\frac{n-1}{n}} - a = \lambda n a. \tag{A2}
\end{equation}

Solving these equations yields the quadratic:
\begin{equation}
n(n-1)\lambda^{2} - (n^{2} - 3n + 1)\lambda - (2n-3) = 0,
\end{equation}
with roots:
\begin{equation}
\lambda_{\pm}^{(2)} = \frac{n^{2} - 3n + 1 \pm \sqrt{(n^{2} - 3n + 1)^{2} + 4n(n-1)(2n-3)}}{2n(n-1)}.
\end{equation}

The corresponding values of $a$ are 
\[a_{\pm} = \left(\lambda_{\pm}^{(2)}(n-1) - (n-2)\right)\sqrt{\frac{n-1}{n}}.\]

\subsubsection*{Normalized Eigenvectors}

All eigenvectors are already normalized to unit Euclidean norm in the forms given above. We summarize them explicitly:

\text{Degenerate eigenvectors ($\lambda_0 = -1/(n-1)$):}

\begin{align}
|d_k^A\rangle &= \frac{1}{\sqrt{k(k+1)}}\left(\sum_{i=1}^k |A_i\rangle - k|A_{k+1}\rangle\right), \\
|d_k^B\rangle &= \frac{1}{\sqrt{k(k+1)}}\left(\sum_{i=n+2}^{n+k} |B_i\rangle - k|B_{n+k+1}\rangle\right),
\end{align}

 for $k=1,\dots,n-2$, multiplicity $2n-4$.
 
\text{Symmetric eigenvectors ($\lambda_{\pm}^{(1)}$):}
\begin{align}
|s_{\pm}\rangle = &\frac{1}{\sqrt{2(n-1)^2 + 2n a_{\pm}^2}} \,
\bigl(\underbrace{\sqrt{n-1},\dots,\sqrt{n-1}}_{n-1}, \\
&\,a_{\pm}\sqrt{n},\,a_{\pm}\sqrt{n},\underbrace{\sqrt{n-1},\dots,\sqrt{n-1}}_{n-1}\bigr)^{\!T}, \nonumber
\end{align}
with $a_{\pm} = \left(\lambda_{\pm}^{(1)}(n-1) - (n-2)\right)\sqrt{\frac{n-1}{n}}$.

\text{Antisymmetric eigenvectors ($\lambda_{\pm}^{(2)}$):}
\begin{align}
|a_{\pm}\rangle = \frac{1}{\sqrt{2(n-1)^2 + 2n a_{\pm}^2}} \,
\bigl(&\underbrace{\sqrt{n-1},\dots,\sqrt{n-1}}_{n-1},\\
\,a_{\pm}\sqrt{n},\,-a_{\pm}\sqrt{n}, 
&\underbrace{-\sqrt{n-1},\dots,-\sqrt{n-1}}_{n-1}\bigr)^{\!T}, \nonumber
\end{align}
with $a_{\pm} = \left(\lambda_{\pm}^{(2)}(n-1) - (n-2)\right)\sqrt{\frac{n-1}{n}}$.

\subsubsection*{Simplification of the Symmetric Eigenvalues}



Therefore, the symmetric eigenvalues simplify to:
\begin{align}
\lambda_+^{(1)} &= \frac{(n^2 - n - 1) + (n^2 - n + 1)}{2n(n-1)} = \frac{2n^2 - 2n}{2n(n-1)} = 1, \\
\lambda_-^{(1)} &= \frac{(n^2 - n - 1) - (n^2 - n + 1)}{2n(n-1)} = \frac{-2}{2n(n-1)} = -\frac{1}{n(n-1)}.
\end{align}

This is an exact simplification, valid for all $n \ge 2$. The corresponding parameters become:
\begin{align}
a_+ &= \left(1\cdot(n-1) - (n-2)\right)\sqrt{\frac{n-1}{n}} = \sqrt{\frac{n-1}{n}}, \\
a_- &= \left(-\frac{1}{n(n-1)}\cdot(n-1) - (n-2)\right)\sqrt{\frac{n-1}{n}} \\
&= \left(-\frac{1}{n} - n + 2\right)\sqrt{\frac{n-1}{n}}.
\end{align}

\subsubsection*{Asymptotic Forms for Large $n$}

For large $n$, the eigenvalues and eigenvectors simplify to the following leading-order expressions:

\text{Degenerate eigenvectors ($\lambda_0 = -1/(n-1)$):}
\begin{align}
|d_k^A\rangle &= \frac{1}{\sqrt{k(k+1)}}\left(\sum_{i=1}^k |A_i\rangle - k|A_{k+1}\rangle\right),  \\
|d_k^B\rangle &= \frac{1}{\sqrt{k(k+1)}}\left(\sum_{i=n+2}^{n+k} |B_i\rangle - k|B_{n+k+1}\rangle\right), 
\end{align}

for  $k=1,\dots,n-2.$

\text{Symmetric eigenvectors:}
\begin{align}
\lambda_+^{(1)} &= 1, \\
|s_+\rangle &= \frac{1}{\sqrt{2n}}\left(\underbrace{1,\dots,1}_{n-1}, 1, 1, \underbrace{1,\dots,1}_{n-1}\right)^T + O\left(\frac{1}{n^{3/2}}\right), \\[4pt]
\lambda_-^{(1)} &= -\frac{1}{n^2} + O\left(\frac{1}{n^3}\right), \\
|s_-\rangle &= \frac{1}{\sqrt{2}}\left(\underbrace{0,\dots,0}_{n-1}, -1, -1, \underbrace{0,\dots,0}_{n-1}\right)^T + O\left(\frac{1}{n}\right).
\end{align}

\text{Antisymmetric eigenvectors:}
\begin{align}
\lambda_+^{(2)} &= \frac{2}{n} + O\left(\frac{1}{n^2}\right), \\
|a_+\rangle &= \frac{1}{\sqrt{2}}\left(\underbrace{0,\dots,0}_{n-1}, -1, -1, \underbrace{0,\dots,0}_{n-1}\right)^T + O\left(\frac{1}{n}\right), \\[4pt]
\lambda_-^{(2)} &= -\frac{2}{n} + O\left(\frac{1}{n^2}\right), \\
|a_-\rangle &= \frac{1}{\sqrt{2}}\left(\underbrace{0,\dots,0}_{n-1}, -1, 1, \underbrace{0,\dots,0}_{n-1}\right)^T + O\left(\frac{1}{n}\right).
\end{align}

The parameters $a_\pm$ for the symmetric and antisymmetric families satisfy:
\begin{align}
a_+ &\approx 1, \quad a_- \approx -n \quad \text{(for symmetric family)},\\
a_\pm &\approx -n \quad \text{(for antisymmetric family)}.
\end{align}

Note that $|s_-\rangle$ and $|a_+\rangle$ have the same leading-order form but differ in their $O(1/n)$ corrections, which ensure orthogonality. The factor $1/\sqrt{2n}$ in $|s_+\rangle$ arises from normalization over $2n$ vertices, while the factor $1/\sqrt{2}$ in the bridge-localized states arises from normalization over the two bridge vertices.

\section{Eigenspectrum of Variant 1}

\label{supp:full_connection}

We consider a star-of-cliques graph with $n$ cliques, each of size $n$, where the central vertex $0$ connects to \emph{all} vertices in each clique. The total number of vertices is $1 + n^2$. The degrees are:
\begin{align*}
d_0 &= n^2 \quad \text{(center connected to all $n^2$ clique vertices)}, \\
d_{(j,k)} &= n \quad \text{($n-1$ within clique + 1 connection to center)},
\end{align*}
for $j=1,\dots,n$ and $k=1,\dots,n$.

The normalized adjacency matrix $\tilde{M} = \Gamma^{-1/2}M \Gamma^{-1/2}$ has entries:
\[
\tilde{M}_{0,(j,k)} = \frac{1}{\sqrt{d_0 d_{(j,k)}}} = \frac{1}{\sqrt{n^2 \cdot n}} = \frac{1}{n^{3/2}},
\]
\[
\tilde{M}_{(j,k),(j,l)} = \frac{1}{\sqrt{d_{(j,k)} d_{(j,l)}}} = \frac{1}{n} \quad \text{for } k \neq l,
\]
and all other entries are 0.

\subsubsection*{Symmetry-Adapted Basis Construction}

Define the orthonormal basis:
\begin{align*}
|e_0\rangle &= |0\rangle, \\
|s_j\rangle &= \frac{1}{\sqrt{n}} \sum_{k=1}^n |j,k\rangle \quad \text{for } j=1,\dots,n, \\
|S\rangle &= \frac{1}{\sqrt{n}} \sum_{j=1}^n |s_j\rangle = \frac{1}{n} \sum_{j=1}^n \sum_{k=1}^n |j,k\rangle.
\end{align*}

For each clique $j$, choose $n-1$ orthonormal vectors orthogonal to $|s_j\rangle$:
\[
|w_j^{(r)}\rangle = \frac{1}{\sqrt{r(r+1)}}\left(\sum_{k=1}^r |j,k\rangle - r|j,r+1\rangle\right),  \quad r = 1,\dots,n-1.
\]

\subsubsection*{Action of $\tilde{M}$ on Basis Vectors}

\begin{align*}
\tilde{M}|e_0\rangle &= \frac{1}{\sqrt{n}} |S\rangle, \\
\tilde{M}|s_j\rangle &= \frac{1}{n} |e_0\rangle + \frac{n-1}{n} |s_j\rangle, \\
\tilde{M}|S\rangle &= \frac{1}{\sqrt{n}} |e_0\rangle + \frac{n-1}{n} |S\rangle, \\
\tilde{M}|w_j^{(r)}\rangle &= -\frac{1}{n} |w_j^{(r)}\rangle.
\end{align*}

\subsubsection*{Spectral Decomposition}

The Hilbert space decomposes into three mutually orthogonal invariant subspaces:

\begin{enumerate}
    \item $\mathcal{V}_1 = \operatorname{span}\{|e_0\rangle, |S\rangle\}$, dimension 2
    \item $\mathcal{V}_2 = \left\{\sum_{j=1}^n a_j |s_j\rangle : \sum_{j=1}^n a_j = 0\right\}$, dimension $n-1$
    \item $\mathcal{V}_3 = \operatorname{span}\{|w_j^{(r)}\rangle : j=1,\dots,n,\; r=1,\dots,n-1\}$, dimension $n(n-1)$
\end{enumerate}

\subsubsection*{Eigenvalues and Eigenvectors}

\textbf{Subspace $\mathcal{V}_1$:}
\[
\tilde{M}\big|_{\mathcal{V}_1} = \begin{pmatrix} 
0 & 1/\sqrt{n} \\ 
1/\sqrt{n} & (n-1)/n 
\end{pmatrix}.
\]
The characteristic equation is:
\[
\lambda^2 - \frac{n-1}{n}\lambda - \frac{1}{n} = 0.
\]
Solving:
\[
\lambda = \frac{1}{2}\left[ \frac{n-1}{n} \pm \sqrt{\left(\frac{n-1}{n}\right)^2 + \frac{4}{n}} \right] 
        = \frac{1}{2}\left[ \frac{n-1}{n} \pm \frac{n+1}{n} \right].
\]
Thus:
\begin{align*}
\lambda_1 &= \frac{(n-1) + (n+1)}{2n} = 1, \\
\lambda_2 &= \frac{(n-1) - (n+1)}{2n} = -\frac{1}{n}.
\end{align*}

For $\lambda_1 = 1$, the normalized eigenvector is:
\[
|\psi_1\rangle = \frac{1}{\sqrt{1+n}} \bigl( |e_0\rangle + \sqrt{n}\,|S\rangle \bigr).
\]

For $\lambda_2 = -1/n$, the normalized eigenvector is:
\[
|\psi_2\rangle = \sqrt{\frac{n}{1+n}} \bigl( |e_0\rangle - \frac{1}{\sqrt{n}}\,|S\rangle \bigr).
\]

\textbf{Subspace $\mathcal{V}_2$:}
For any $|A\rangle = \sum_{j=1}^n a_j |s_j\rangle$ with $\sum_{j=1}^n a_j = 0$:
\begin{align*}
\tilde{M}|A\rangle &= \sum_{j=1}^n a_j \left( \frac{1}{n}|e_0\rangle + \frac{n-1}{n}|s_j\rangle \right) \\
&= \frac{1}{n}|e_0\rangle \underbrace{\sum_{j=1}^n a_j}_{=0} + \frac{n-1}{n} \sum_{j=1}^n a_j |s_j\rangle \\
&= \frac{n-1}{n} |A\rangle.
\end{align*}
Thus all vectors in $\mathcal{V}_2$ are eigenvectors with eigenvalue:
\[
\lambda_3 = \frac{n-1}{n}, \quad \text{multiplicity } n-1.
\]

A natural orthonormal basis for $\mathcal{V}_2$ is given by:
\begin{equation}
|\chi_j\rangle = \frac{1}{\sqrt{j(j+1)}}\left(\sum_{k=1}^j |s_k\rangle - j|s_{j+1}\rangle\right), \quad j=1,\dots,n-1.
\label{eq:chi_basis}
\end{equation}

\textbf{Subspace $\mathcal{V}_3$:}
From the action on basis vectors:
\[
\tilde{M}|w_j^{(r)}\rangle = -\frac{1}{n} |w_j^{(r)}\rangle.
\]
Thus all vectors in $\mathcal{V}_3$ are eigenvectors with eigenvalue:
\[
\lambda_4 = -\frac{1}{n}, \quad \text{multiplicity } n(n-1).
\]

\subsubsection*{Complete Orthonormal Eigenbasis}

\[
\boxed{
\begin{aligned}
&\lambda_1 = 1, && \text{multiplicity } 1, \\
&\lambda_2 = -\frac{1}{n}, && \text{multiplicity } 1, \\
&\lambda_3 = \frac{n-1}{n}, && \text{multiplicity } n-1, \\
&\lambda_4 = -\frac{1}{n}, && \text{multiplicity } n(n-1).
\end{aligned}}
\]

A complete set of orthonormal eigenvectors is:

\begin{align}
|\psi_1\rangle &= \frac{1}{\sqrt{1+n}} \bigl( |0\rangle + \sqrt{n}\,|S\rangle \bigr), \label{eq:final_psi1}\\
|\psi_2\rangle &= \sqrt{\frac{n}{1+n}} \bigl( |0\rangle - \frac{1}{\sqrt{n}}\,|S\rangle \bigr), \label{eq:final_psi2}\\
|\chi_j\rangle &= \frac{1}{\sqrt{j(j+1)}}\left(\sum_{k=1}^j |s_k\rangle - j|s_{j+1}\rangle\right), \quad j=1,\dots,n-1, \label{eq:final_chi}\\
|\psi_4^{(j,r)}\rangle &= |w_j^{(r)}\rangle = \frac{1}{\sqrt{r(r+1)}}\left(\sum_{k=1}^r |j,k\rangle - r|j,r+1\rangle\right), \nonumber\\
& j=1,\dots,n,\; r=1,\dots,n-1. \label{eq:final_psi4}
\end{align}

\subsubsection*{Verification of Eigenvectors}

We verify that each vector is an eigenvector of $\tilde{M}$ with the claimed eigenvalue.

\paragraph*{Verification of $|\psi_1\rangle$ and $|\psi_2\rangle$:}

Using $\tilde{M}|e_0\rangle = \frac{1}{\sqrt{n}}|S\rangle$ and $\tilde{M}|S\rangle = \frac{1}{\sqrt{n}}|e_0\rangle + \frac{n-1}{n}|S\rangle$:

\begin{align*}
\tilde{M}|\psi_1\rangle &= \frac{1}{\sqrt{1+n}}\left( \frac{1}{\sqrt{n}}|S\rangle + \sqrt{n}\left( \frac{1}{\sqrt{n}}|e_0\rangle + \frac{n-1}{n}|S\rangle \right) \right) \\
&= \frac{1}{\sqrt{1+n}}\left( |e_0\rangle + \sqrt{n}|S\rangle \right) = |\psi_1\rangle, \\[4pt]
\tilde{M}|\psi_2\rangle &= \sqrt{\frac{n}{1+n}}\left( \frac{1}{\sqrt{n}}|S\rangle - \frac{1}{\sqrt{n}}\left( \frac{1}{\sqrt{n}}|e_0\rangle + \frac{n-1}{n}|S\rangle \right) \right) \\
&= -\frac{1}{n}\sqrt{\frac{n}{1+n}}\left( |e_0\rangle - \frac{1}{\sqrt{n}}|S\rangle \right) = -\frac{1}{n}|\psi_2\rangle.
\end{align*}
Thus $\lambda_1 = 1$ and $\lambda_2 = -1/n$.

\paragraph*{Verification of $|\chi_j\rangle$:}

For any $|\chi_j\rangle = \frac{1}{\sqrt{2}}(|s_j\rangle - |s_{j+1}\rangle)$ with $j=1,\ldots,n-1$:
\begin{align*}
\tilde{M}|\chi_j\rangle &= \frac{1}{\sqrt{2}}\left[ \left(\frac{1}{n}|e_0\rangle + \frac{n-1}{n}|s_j\rangle\right) - \left(\frac{1}{n}|e_0\rangle + \frac{n-1}{n}|s_{j+1}\rangle\right) \right] \\
&= \frac{n-1}{n} \cdot \frac{1}{\sqrt{2}}(|s_j\rangle - |s_{j+1}\rangle) = \frac{n-1}{n}|\chi_j\rangle.
\end{align*}
Thus $\lambda_3 = (n-1)/n$.

\paragraph*{Verification of $|\psi_4^{(j,r)}\rangle = |w_j^{(r)}\rangle$:}

Using $\tilde{M}|j,k\rangle = \frac{1}{n^{3/2}}|0\rangle + \frac{1}{n}\sum_{l\neq k} |j,l\rangle$ for clique vertices:
\begin{align*}
\tilde{M}|w_j^{(r)}\rangle &= \frac{1}{\sqrt{r(r+1)}}\left( \sum_{k=1}^r \tilde{M}|j,k\rangle - r\tilde{M}|j,r+1\rangle \right) \\
&= \frac{1}{\sqrt{r(r+1)}}\Bigg[ \frac{r}{n^{3/2}}|0\rangle + \frac{1}{n}\left((r-1)\sum_{l=1}^r |j,l\rangle + r\sum_{l=r+1}^n |j,l\rangle\right) \\
&\qquad\qquad\quad - \frac{r}{n^{3/2}}|0\rangle - \frac{r}{n}\sum_{l\neq r+1} |j,l\rangle \Bigg] \\
&= \frac{1}{\sqrt{r(r+1)}}\cdot\frac{1}{n}\left( -\sum_{l=1}^r |j,l\rangle + r|j,r+1\rangle \right) \\
&= -\frac{1}{n} \cdot \frac{1}{\sqrt{r(r+1)}}\left( \sum_{l=1}^r |j,l\rangle - r|j,r+1\rangle \right) = -\frac{1}{n}|w_j^{(r)}\rangle.
\end{align*}
Thus $\lambda_4 = -1/n$.

\section{Eigenspectrum of Variant 2}
\label{supp:single_connection}

In this variant, only vertex $\ket{j,1}$ in each clique connects to the center. Degrees:
\[
d_0 = n, \qquad d_{(j,1)} = n, \qquad d_{(j,k)} = n-1 \;(k \geq 2).
\]

The symmetric normalized adjacency matrix $\tilde{M} = \Gamma^{-1/2}M\Gamma^{-1/2}$ has entries:
\begin{align*}
\tilde{M}_{0,(j,1)} &= \frac{1}{\sqrt{n \cdot n}} = \frac{1}{n}, \\
\tilde{M}_{(j,1),(j,k)} &= \frac{1}{\sqrt{n(n-1)}} \;(k \geq 2), \\
\tilde{M}_{(j,k),(j,l)} &= \frac{1}{n-1} \;(k,l \geq 2, k \neq l).
\end{align*}

\subsubsection*{Orthonormal Basis Construction}
Define the following orthonormal vectors:
\begin{align*}
\ket{b_j} &= \ket{j,1}, \\
\ket{c_j} &= \frac{1}{\sqrt{n-1}}\sum_{k=2}^n\ket{j,k}, \\
\ket{B} &= \frac{1}{\sqrt{n}}\sum_{j=1}^n\ket{b_j}, \\
\ket{C} &= \frac{1}{\sqrt{n}}\sum_{j=1}^n\ket{c_j}.
\end{align*}

Additionally, for each clique $j$, choose $n-2$ orthonormal vectors $\{\ket{w_j^{(r)}}: r=1,\dots,n-2\}$ spanning the subspace of $\operatorname{span}\{\ket{j,2},\dots,\ket{j,n}\}$ orthogonal to $\ket{c_j}$.

\subsubsection*{Action of $\tilde{M}$ on Basis Vectors}
\begin{align*}
\tilde{M}\ket{0} &= \frac{1}{\sqrt{n}}\ket{B}, \\
\tilde{M}\ket{b_j} &= \frac{1}{n}\ket{0} + \frac{1}{\sqrt{n}}\ket{c_j}, \\
\tilde{M}\ket{c_j} &= \frac{1}{\sqrt{n}}\ket{b_j} + \frac{n-2}{n-1}\ket{c_j}, \\
\tilde{M}\ket{B} &= \frac{1}{\sqrt{n}}\ket{0} + \frac{1}{\sqrt{n}}\ket{C}, \\
\tilde{M}\ket{C} &= \frac{1}{\sqrt{n}}\ket{B} + \frac{n-2}{n-1}\ket{C}, \\
\tilde{M}\ket{w_j^{(r)}} &= -\frac{1}{n-1}\ket{w_j^{(r)}}.
\end{align*}

\subsubsection*{Invariant Subspace Decomposition}

The Hilbert space decomposes into three mutually orthogonal invariant subspaces:

\begin{enumerate}
    \item $\mathcal{W}_1 = \operatorname{span}\{\ket{0},\ket{B},\ket{C}\}$, dimension $3$
    \item $\mathcal{W}_2 = \left\{\sum_{j=1}^n (\alpha_j\ket{b_j} + \beta_j\ket{c_j}) : \sum_{j=1}^n \alpha_j = \sum_{j=1}^n \beta_j = 0\right\}$, dimension $2(n-1)$
    \item $\mathcal{W}_3 = \operatorname{span}\{\ket{w_j^{(r)}} : j=1,\dots,n,\; r=1,\dots,n-2\}$, dimension $n(n-2)$
\end{enumerate}

\subsection*{Spectral Analysis of $\mathcal{W}_1$}

In $\mathcal{W}_1$, the matrix representation of $\tilde{M}$ with respect to the ordered basis $(\ket{0},\ket{B},\ket{C})$ is:
\[
M_1 = \begin{pmatrix}
0 & \frac{1}{\sqrt{n}} & 0 \\
\frac{1}{\sqrt{n}} & 0 & \frac{1}{\sqrt{n}} \\
0 & \frac{1}{\sqrt{n}} & \frac{n-2}{n-1}
\end{pmatrix}.
\]

\subsubsection*{Characteristic Equation}

The characteristic polynomial is:
\begin{align*}
\det(M_1 - \lambda I) &= \det\begin{pmatrix}
-\lambda & \frac{1}{\sqrt{n}} & 0 \\
\frac{1}{\sqrt{n}} & -\lambda & \frac{1}{\sqrt{n}} \\
0 & \frac{1}{\sqrt{n}} & \frac{n-2}{n-1} - \lambda
\end{pmatrix} \\
&= -\lambda\left[(-\lambda)\left(\frac{n-2}{n-1} - \lambda\right) - \frac{1}{n}\right] \\
&- \frac{1}{\sqrt{n}}\left[\frac{1}{\sqrt{n}}\left(\frac{n-2}{n-1} - \lambda\right) - 0\right] \\
&= -\lambda\left[-\lambda\left(\frac{n-2}{n-1} - \lambda\right) - \frac{1}{n}\right] \\
&- \frac{1}{n}\left(\frac{n-2}{n-1} - \lambda\right) \\
&= \lambda^2\left(\frac{n-2}{n-1} - \lambda\right) + \frac{\lambda}{n} - \frac{1}{n}\left(\frac{n-2}{n-1} - \lambda\right) \\
&= -\lambda^3 + \frac{n-2}{n-1}\lambda^2 + \frac{\lambda}{n} + \frac{\lambda}{n} - \frac{n-2}{n(n-1)} \\
&= -\lambda^3 + \frac{n-2}{n-1}\lambda^2 + \frac{2}{n}\lambda - \frac{n-2}{n(n-1)}.
\end{align*}

Thus:
\begin{equation}
\det(M_1 - \lambda I) = -\lambda^3 + \frac{n-2}{n-1}\lambda^2 + \frac{2}{n}\lambda - \frac{n-2}{n(n-1)} = 0.
\label{eq:charpoly}
\end{equation}

\subsubsection*{Eigenvalue $\lambda = 1$ Verification}

Substitute $\lambda = 1$ into the characteristic polynomial:
\begin{align*}
&-1 + \frac{n-2}{n-1} + \frac{2}{n} - \frac{n-2}{n(n-1)} \\
&= -1 + \frac{n-2}{n-1} + \frac{2}{n} - \frac{n-2}{n(n-1)} \\
&= -1 + \frac{n-2}{n-1}\left(1 - \frac{1}{n}\right) + \frac{2}{n} \\
&= -1 + \frac{n-2}{n-1}\cdot\frac{n-1}{n} + \frac{2}{n} \\
&= -1 + \frac{n-2}{n} + \frac{2}{n} = -1 + \frac{n}{n} = 0.
\end{align*}
Thus $\lambda_1 = 1$ is an exact eigenvalue for all $n$.

\subsubsection*{Eigenvector for $\lambda_1 = 1$}

Solve $(M_1 - I)\mathbf{v} = \mathbf{0}$:
\[
\begin{pmatrix}
-1 & \frac{1}{\sqrt{n}} & 0 \\
\frac{1}{\sqrt{n}} & -1 & \frac{1}{\sqrt{n}} \\
0 & \frac{1}{\sqrt{n}} & \frac{n-2}{n-1} - 1
\end{pmatrix}
\begin{pmatrix} x \\ y \\ z \end{pmatrix} = \begin{pmatrix} 0 \\ 0 \\ 0 \end{pmatrix}.
\]

Note that $\frac{n-2}{n-1} - 1 = \frac{n-2 - (n-1)}{n-1} = -\frac{1}{n-1}$.

The equations are:
\begin{align}
-x + \frac{1}{\sqrt{n}}y = 0  \Rightarrow  y = \sqrt{n}x \label{eq:eq1}\\
\frac{1}{\sqrt{n}}x - y + \frac{1}{\sqrt{n}}z = 0 \label{eq:eq2}\\
\frac{1}{\sqrt{n}}y - \frac{1}{n-1}z = 0 \quad \\
\Rightarrow \quad \frac{1}{\sqrt{n}}(\sqrt{n}x) - \frac{1}{n-1}z = x - \frac{1}{n-1}z = 0 & \Rightarrow  z = (n-1)x \label{eq:eq3}
\end{align}

Substitute (\ref{eq:eq1}) and (\ref{eq:eq3}) into (\ref{eq:eq2}):
\begin{align}
    \frac{1}{\sqrt{n}}x - \sqrt{n}x + \frac{1}{\sqrt{n}}(n-1)x &= x\left(\frac{1}{\sqrt{n}} - \sqrt{n} + \frac{n-1}{\sqrt{n}}\right)\\
    = x\left(\frac{1 + n - 1}{\sqrt{n}} - \sqrt{n}\right) &= x\left(\frac{n}{\sqrt{n}} - \sqrt{n}\right) = 0.
\end{align}

Thus the eigenvector is $\mathbf{v}_1 = [1, \sqrt{n}, n-1]^\mathsf{T}$. The normalized eigenvector is:
\begin{equation}
\boxed{\ket{\psi_1} = \frac{1}{\sqrt{1 + n + (n-1)^2}} \left( \ket{0} + \sqrt{n}\ket{B} + (n-1)\ket{C} \right).}
\label{eq:psi1}
\end{equation}

\subsubsection*{Remaining Two Eigenvalues}

Factor the characteristic polynomial using the known root $\lambda = 1$:

\begin{align}
 &-\lambda^3 + \frac{n-2}{n-1}\lambda^2 + \frac{2}{n}\lambda - \frac{n-2}{n(n-1)} \\
 &= -(\lambda - 1)\left(\lambda^2 + \frac{1}{n-1}\lambda - \frac{n-2}{n}\right).   
\end{align}

The remaining eigenvalues satisfy:
\begin{equation}
\lambda^2 + \frac{1}{n-1}\lambda - \frac{n-2}{n} = 0.
\label{eq:quadratic}
\end{equation}

Thus:
\begin{align}
\lambda_2 &= \frac{-\frac{1}{n-1} + \sqrt{\frac{1}{(n-1)^2} + \frac{4(n-2)}{n}}}{2}, \label{eq:lambda2}\\
\lambda_3 &= \frac{-\frac{1}{n-1} - \sqrt{\frac{1}{(n-1)^2} + \frac{4(n-2)}{n}}}{2}. \label{eq:lambda3}
\end{align}

Define $\Delta = \sqrt{\frac{1}{(n-1)^2} + \frac{4(n-2)}{n}}$ for brevity.

\subsubsection*{Eigenvectors for $\lambda_2$ and $\lambda_3$}

For a general eigenvalue $\lambda$ satisfying (\ref{eq:quadratic}), solve $(M_1 - \lambda I)\mathbf{v} = \mathbf{0}$:
\[
\begin{pmatrix}
-\lambda & \frac{1}{\sqrt{n}} & 0 \\
\frac{1}{\sqrt{n}} & -\lambda & \frac{1}{\sqrt{n}} \\
0 & \frac{1}{\sqrt{n}} & \frac{n-2}{n-1} - \lambda
\end{pmatrix}
\begin{pmatrix} x \\ y \\ z \end{pmatrix} = \begin{pmatrix} 0 \\ 0 \\ 0 \end{pmatrix}.
\]

From the first equation: $-\lambda x + \frac{1}{\sqrt{n}}y = 0 \Rightarrow y = \lambda\sqrt{n} x$.

From the third equation: $\frac{1}{\sqrt{n}}y + \left(\frac{n-2}{n-1} - \lambda\right)z = 0$.
Substitute $y$: $\frac{1}{\sqrt{n}}(\lambda\sqrt{n}x) + \left(\frac{n-2}{n-1} - \lambda\right)z = \lambda x + \left(\frac{n-2}{n-1} - \lambda\right)z = 0$.
Thus:
\begin{equation}
z = \frac{\lambda}{\lambda - \frac{n-2}{n-1}} x.
\label{eq:z1}
\end{equation}

The second equation provides a consistency condition:
$\frac{1}{\sqrt{n}}x - \lambda y + \frac{1}{\sqrt{n}}z = \frac{1}{\sqrt{n}}x - \lambda(\lambda\sqrt{n}x) + \frac{1}{\sqrt{n}}z = \frac{1}{\sqrt{n}}x - \lambda^2\sqrt{n}x + \frac{1}{\sqrt{n}}z = 0$.
Multiply by $\sqrt{n}$: $x - \lambda^2 n x + z = 0 \Rightarrow z = (\lambda^2 n - 1)x$. \label{eq:z2}

Equating (\ref{eq:z1}) and (\ref{eq:z2}) gives the eigenvalue condition, which is satisfied for $\lambda_2$ and $\lambda_3$. Using (\ref{eq:z2}) (which is simpler), we obtain the eigenvectors:

For $\lambda_2$:
\begin{equation}
\boxed{\ket{\psi_2} = \mathcal{N}_2 \left( \ket{0} + \lambda_2\sqrt{n}\ket{B} + (\lambda_2^2 n - 1)\ket{C} \right),}
\label{eq:psi2}
\end{equation}
where $\mathcal{N}_2 = \left(1 + \lambda_2^2 n + (\lambda_2^2 n - 1)^2\right)^{-1/2}$ is the normalization factor.

For $\lambda_3$:
\begin{equation}
\boxed{\ket{\psi_3} = \mathcal{N}_3 \left( \ket{0} + \lambda_3\sqrt{n}\ket{B} + (\lambda_3^2 n - 1)\ket{C} \right),}
\label{eq:psi3}
\end{equation}
with $\mathcal{N}_3 = \left(1 + \lambda_3^2 n + (\lambda_3^2 n - 1)^2\right)^{-1/2}$.

\subsubsection*{Asymptotic Expansions}

For large $n$:
\[
\frac{1}{(n-1)^2} = \frac{1}{n^2} + O\left(\frac{1}{n^3}\right), \quad \frac{4(n-2)}{n} = 4 - \frac{8}{n} + O\left(\frac{1}{n^2}\right).
\]

Thus:

\begin{align}
   \Delta &= \sqrt{4 + \frac{1}{n^2} - \frac{8}{n} + O\left(\frac{1}{n^2}\right)} = 2\sqrt{1 - \frac{2}{n} + O\left(\frac{1}{n^2}\right)}\\
   &= 2\left(1 - \frac{1}{n} + O\left(\frac{1}{n^2}\right)\right). 
\end{align}
Then:
\begin{align*}
\lambda_2 &= \frac{-\frac{1}{n-1} + \Delta}{2} \\
&= \frac{-\frac{1}{n}\left(1 + \frac{1}{n} + O\left(\frac{1}{n^2}\right)\right) + 2\left(1 - \frac{1}{n} + O\left(\frac{1}{n^2}\right)\right)}{2} \\
&= \frac{2 - \frac{2}{n} - \frac{1}{n} + O\left(\frac{1}{n^2}\right)}{2} = 1 - \frac{3}{2n} + O\left(\frac{1}{n^2}\right), \\
\lambda_3 &= \frac{-\frac{1}{n-1} - \Delta}{2} = \frac{-\frac{1}{n} - 2 + \frac{2}{n} + O\left(\frac{1}{n^2}\right)}{2} \\
 &= -1 + \frac{1}{2n} + O\left(\frac{1}{n^2}\right).
\end{align*}

\subsection*{Spectral Analysis of $\mathcal{W}_2$}

In $\mathcal{W}_2$, for any vector $\ket{A} = \sum_{j=1}^n (\alpha_j\ket{b_j} + \beta_j\ket{c_j})$ with $\sum_j \alpha_j = \sum_j \beta_j = 0$, the action of $\tilde{M}$ decouples across the $n$ cliques. For each $j$, the pair $(\alpha_j, \beta_j)$ transforms under the $2\times 2$ matrix:
\[
R = \begin{pmatrix}
0 & \frac{1}{\sqrt{n}} \\
\frac{1}{\sqrt{n}} & \frac{n-2}{n-1}
\end{pmatrix}.
\]

\subsubsection*{Eigenvalues of $R$}

The characteristic equation of $R$ is:
\[
\det(R - \lambda I) = \det\begin{pmatrix}
-\lambda & \frac{1}{\sqrt{n}} \\
\frac{1}{\sqrt{n}} & \frac{n-2}{n-1} - \lambda
\end{pmatrix} = \lambda^2 - \frac{n-2}{n-1}\lambda - \frac{1}{n} = 0.
\]

Thus:
\begin{equation}
\lambda_{4,5} = \frac{\frac{n-2}{n-1} \pm \sqrt{\left(\frac{n-2}{n-1}\right)^2 + \frac{4}{n}}}{2}.
\label{eq:lambda45}
\end{equation}

These eigenvalues each have multiplicity $n-1$ in $\mathcal{W}_2$ (one for each independent direction in the $n-1$-dimensional space of $\alpha_j$ coefficients with zero sum, and similarly for $\beta_j$).

\subsubsection*{Asymptotic Expansions for $\lambda_4$ and $\lambda_5$}

For large $n$:
\[
\frac{n-2}{n-1} = 1 - \frac{1}{n-1} = 1 - \frac{1}{n} - \frac{1}{n^2} + O\left(\frac{1}{n^3}\right).
\]

Then:
\[
\left(\frac{n-2}{n-1}\right)^2 = 1 - \frac{2}{n} + \frac{1}{n^2} + O\left(\frac{1}{n^3}\right).
\]

Thus:
\[
\left(\frac{n-2}{n-1}\right)^2 + \frac{4}{n} = 1 + \frac{2}{n} + \frac{1}{n^2} + O\left(\frac{1}{n^3}\right).
\]

Taking square root:
\[
\sqrt{\left(\frac{n-2}{n-1}\right)^2 + \frac{4}{n}} = 1 + \frac{1}{n} + O\left(\frac{1}{n^2}\right).
\]

Therefore:
\begin{align*}
\lambda_4 &= \frac{1 - \frac{1}{n} + O\left(\frac{1}{n^2}\right) + 1 + \frac{1}{n} + O\left(\frac{1}{n^2}\right)}{2} = 1 + O\left(\frac{1}{n^2}\right), \\
\lambda_5 &= \frac{1 - \frac{1}{n} + O\left(\frac{1}{n^2}\right) - 1 - \frac{1}{n} + O\left(\frac{1}{n^2}\right)}{2} = -\frac{1}{n} + O\left(\frac{1}{n^2}\right).
\end{align*}

More precisely:
\begin{align}
\lambda_4 &= 1 - \frac{1}{4n^2} + O\left(\frac{1}{n^3}\right), \\
\lambda_5 &= -\frac{1}{n} + \frac{1}{n^2} + O\left(\frac{1}{n^3}\right).
\label{eq:lambda5_asymp}
\end{align}





\subsubsection{ Eigenvectors in $\mathcal{W}_2$}

For each $j = 1,\dots,n-1$, define:
\begin{align}
\ket{u_j} &= \frac{1}{\sqrt{j(j+1)}}\left(\sum_{k=1}^j \ket{b_k} - j\ket{b_{j+1}}\right), \label{eq:uj}\\
\ket{v_j} &= \frac{1}{\sqrt{j(j+1)}}\left(\sum_{k=1}^j \ket{c_k} - j\ket{c_{j+1}}\right). \label{eq:vj}
\end{align}

These form orthonormal bases for the $\ket{b}$ and $\ket{c}$ parts of $\mathcal{W}_2$ respectively. Within the 2-dimensional subspace spanned by $\{\ket{u_j}, \ket{v_j}\}$ for a fixed $j$, the matrix representation of $\tilde{M}$ is exactly $R$. Thus the eigenvectors in $\mathcal{W}_2$ are:
\begin{align}
\ket{\phi_4^{(j)}} &= \cos\theta_j \,\ket{u_j} + \sin\theta_j \,\ket{v_j}, \label{eq:phi4}\\
\ket{\phi_5^{(j)}} &= -\sin\theta_j \,\ket{u_j} + \cos\theta_j \,\ket{v_j}, \label{eq:phi5}
\end{align}
where $\theta_j$ is independent of $j$ (since $R$ is the same for all $j$) and satisfies:
\[
\tan(2\theta) = \frac{2/\sqrt{n}}{(n-2)/(n-1)}.
\]

More explicitly, the eigenvectors of $R$ are:
\begin{align}
\ket{\phi_4^{(j)}} &= \frac{1}{\sqrt{1 + \left(\frac{\lambda_4\sqrt{n}}{1}\right)^2}} \left( \ket{u_j} + \lambda_4\sqrt{n}\,\ket{v_j} \right), \label{eq:phi4_explicit}\\
\ket{\phi_5^{(j)}} &= \frac{1}{\sqrt{1 + \left(\frac{\lambda_5\sqrt{n}}{1}\right)^2}} \left( \ket{u_j} + \lambda_5\sqrt{n}\,\ket{v_j} \right), \label{eq:phi5_explicit}
\end{align}
where $\lambda_4$ and $\lambda_5$ are given by (\ref{eq:lambda45}).

\subsection{Spectral Analysis of $\mathcal{W}_3$}

For any $\ket{w_j^{(r)}} \in \mathcal{W}_3$, by construction:
\[
\tilde{M}\ket{w_j^{(r)}} = -\frac{1}{n-1}\ket{w_j^{(r)}}.
\]

Thus all vectors in $\mathcal{W}_3$ are eigenvectors with eigenvalue:
\begin{equation}
\boxed{\lambda_6 = -\frac{1}{n-1}, \quad \text{multiplicity } n(n-2).}
\label{eq:lambda6}
\end{equation}

The orthonormal eigenvectors are simply the basis vectors $\ket{w_j^{(r)}}$ themselves.

\subsection{Complete Spectrum Summary}

\begin{align}
\lambda_1 &= 1, \quad \text{multiplicity } 1, \\
\lambda_2 &= \frac{-\frac{1}{n-1} + \sqrt{\frac{1}{(n-1)^2} + \frac{4(n-2)}{n}}}{2}, \quad \text{multiplicity } 1, \\
\lambda_3 &= \frac{-\frac{1}{n-1} - \sqrt{\frac{1}{(n-1)^2} + \frac{4(n-2)}{n}}}{2}, \quad \text{multiplicity } 1, \\
\lambda_4 &= \frac{\frac{n-2}{n-1} + \sqrt{\left(\frac{n-2}{n-1}\right)^2 + \frac{4}{n}}}{2}, \quad \text{multiplicity } n-1, \\
\lambda_5 &= \frac{\frac{n-2}{n-1} - \sqrt{\left(\frac{n-2}{n-1}\right)^2 + \frac{4}{n}}}{2}, \quad \text{multiplicity } n-1, \\
\lambda_6 &= -\frac{1}{n-1}, \quad \text{multiplicity } n(n-2).
\end{align}

\subsection{Exact Eigenvector Summary}

\begin{itemize}
\item For $\lambda_1 = 1$:
  \[
  \ket{\psi_1} = \frac{1}{\sqrt{1 + n + (n-1)^2}} \left( \ket{0} + \sqrt{n}\ket{B} + (n-1)\ket{C} \right)
  \]

\item For $\lambda_2 = \frac{-\frac{1}{n-1} + \Delta}{2}$ where $\Delta = \sqrt{\frac{1}{(n-1)^2} + \frac{4(n-2)}{n}}$:
  \[
  \ket{\psi_2} = \mathcal{N}_2 \left( \ket{0} + \lambda_2\sqrt{n}\ket{B} + (\lambda_2^2 n - 1)\ket{C} \right)
  \]
  with $\mathcal{N}_2 = \left(1 + \lambda_2^2 n + (\lambda_2^2 n - 1)^2\right)^{-1/2}$

\item For $\lambda_3 = \frac{-\frac{1}{n-1} - \Delta}{2}$:
  \[
  \ket{\psi_3} = \mathcal{N}_3 \left( \ket{0} + \lambda_3\sqrt{n}\ket{B} + (\lambda_3^2 n - 1)\ket{C} \right)
  \]
  with $\mathcal{N}_3 = \left(1 + \lambda_3^2 n + (\lambda_3^2 n - 1)^2\right)^{-1/2}$

\item For $\lambda_4 = \frac{\frac{n-2}{n-1} + \sqrt{\left(\frac{n-2}{n-1}\right)^2 + \frac{4}{n}}}{2}$ (multiplicity $n-1$):
  \[
  \ket{\phi_4^{(j)}} = \frac{1}{\sqrt{1 + \lambda_4^2 n}} \left( \ket{u_j} + \lambda_4\sqrt{n}\,\ket{v_j} \right), \quad j=1,\dots,n-1
  \]
  where $\ket{u_j}$ and $\ket{v_j}$ are defined in (\ref{eq:uj}) and (\ref{eq:vj}).

\item For $\lambda_5 = \frac{\frac{n-2}{n-1} - \sqrt{\left(\frac{n-2}{n-1}\right)^2 + \frac{4}{n}}}{2}$ (multiplicity $n-1$):
  \[
  \ket{\phi_5^{(j)}} = \frac{1}{\sqrt{1 + \lambda_5^2 n}} \left( \ket{u_j} + \lambda_5\sqrt{n}\,\ket{v_j} \right), \quad j=1,\dots,n-1
  \]

\item For $\lambda_6 = -\frac{1}{n-1}$ (multiplicity $n(n-2)$):
  \[
  \ket{\psi_6^{(j,r)}} = \ket{w_j^{(r)}}, \quad j=1,\dots,n,\; r=1,\dots,n-2
  \]
\end{itemize}

\section{IPR for Barbell}
\label{sec:ipr_barbell}



    
    


\subsection*{Eigenstate IPR }

\subsubsection*{Degenerate Eigenvectors $|d_k^A\rangle$ (Clique-Localized Modes)}

 The degenerate eigenvectors for $\tilde{M}$ are exactly the $|e_k^A\rangle$ vectors:
\begin{equation}
|d_k^A\rangle = \frac{1}{\sqrt{k(k+1)}}\left(\sum_{i=1}^k |A_i\rangle - k|A_{k+1}\rangle\right), \quad k=1,\dots,n-2,
\end{equation}
with identical expressions for $|d_k^B\rangle$ on clique $B$. These satisfy $\langle d_k^A|d_l^A\rangle = \delta_{kl}$ and have zero overlap with bridge vertices:
\begin{equation}
\langle Br_A|d_k^A\rangle = 0, \langle Br_B|d_k^A\rangle = 0, \langle Br_A|d_k^B\rangle = 0, \langle Br_B|d_k^B\rangle = 0.
\end{equation}

For a fixed $k$, the non-zero overlaps with vertices in clique $A$ are:
\begin{equation}
\langle A_i|d_k^A\rangle = 
\begin{cases}
\dfrac{1}{\sqrt{k(k+1)}}, & 1 \le i \le k,\\[10pt]
-\dfrac{k}{\sqrt{k(k+1)}}, & i = k+1,\\[10pt]
0, & i > k+1.
\end{cases}
\end{equation}

The eigenstate IPR is:
\begin{align}
\mathrm{IPR}_{d_k^A} &= \sum_{i=1}^{2n} |\langle i|d_k^A\rangle|^4 \nonumber\\
&= k \cdot \left(\frac{1}{\sqrt{k(k+1)}}\right)^4 + 1 \cdot \left(\frac{k}{\sqrt{k(k+1)}}\right)^4 \nonumber\\
&= \frac{1}{k(k+1)^2} + \frac{k^2}{(k+1)^2} = \frac{1+k^3}{k(k+1)^2}. \label{eq:ipr_degenerate}
\end{align}

\begin{equation}
\boxed{\mathrm{IPR}_{d_k^A} = \frac{1+k^3}{k(k+1)^2},\quad k=1,\dots,n-2.}
\end{equation}

\textbf{Verification:}
\begin{itemize}
    \item For $k=1$: $\mathrm{IPR} = \frac{1+1}{1\cdot4} = \frac{2}{4} = \frac{1}{2}$. Direct calculation: $|d_1^A\rangle = \frac{1}{\sqrt{2}}(|A_1\rangle - |A_2\rangle)$, IPR $= (1/\sqrt{2})^4 + (-1/\sqrt{2})^4 = 1/4 + 1/4 = 1/2$ 
    \item For $k=2$: $\mathrm{IPR} = \frac{1+8}{2\cdot9} = \frac{9}{18} = \frac{1}{2}$. Direct calculation: contributions from three vertices: $2\cdot(1/6)^2 + (2/3)^2 = 2/36 + 16/36 = 18/36 = 1/2$ 
    \item For $k=3$: $\mathrm{IPR} = \frac{1+27}{3\cdot16} = \frac{28}{48} = \frac{7}{12} \approx 0.583$
    \item As $k \to \infty$: $\mathrm{IPR} \sim \frac{k^3}{k^3} = 1$, indicating complete localization on vertex $A_{k+1}$
\end{itemize}

\subsubsection*{Symmetric Eigenvector $|s_+\rangle$ (Fully Delocalized)}

From Section~\ref{sec:barbell_spectral_A}, the symmetric eigenvector with eigenvalue $\lambda_+^{(1)} = 1$ has the exact form:

\begin{align*}
&|s_+\rangle \\
&= \frac{1}{\sqrt{2(n-1)^2 + 2n}}\bigr(\underbrace{\sqrt{n-1},\dots,\sqrt{n-1}}_{n-1}, \sqrt{n}, \\
&\sqrt{n}, \underbrace{\sqrt{n-1},\dots,\sqrt{n-1}}_{n-1}\bigl)^T.    
\end{align*}

For large $n$, the normalization factor $\approx \sqrt{2n^2} = n\sqrt{2}$, giving the asymptotic form:
\begin{equation}
\langle i|s_+\rangle \approx \frac{1}{\sqrt{2n}} \quad \text{for all vertices } i.
\end{equation}

The exact eigenstate IPR can be computed:
\begin{align}
\mathrm{IPR}_{s_+} &= 2(n-1)\left(\frac{\sqrt{n-1}}{\sqrt{2(n-1)^2 + 2n}}\right)^4 \\
&+ 2\left(\frac{\sqrt{n}}{\sqrt{2(n-1)^2 + 2n}}\right)^4 \nonumber\\
&= \frac{2(n-1)^3}{[2(n-1)^2 + 2n]^2} + \frac{2n^2}{[2(n-1)^2 + 2n]^2} \nonumber\\
&= \frac{2(n-1)^3 + 2n^2}{4[(n-1)^2 + n]^2}. \label{eq:ipr_symmetric_exact}
\end{align}

For large $n$, this simplifies to:
\begin{equation}
\mathrm{IPR}_{s_+} \approx 2n \cdot \left(\frac{1}{\sqrt{2n}}\right)^4 = 2n \cdot \frac{1}{4n^2} = \frac{1}{2n}. \label{eq:ipr_symmetric_asymp}
\end{equation}

For large $n$, $\mathrm{IPR}_{s_+}$ goes to zero.

This shows complete delocalization across the entire graph, with the state spreading uniformly over all $2n$ vertices.

\subsubsection*{Antisymmetric Eigenvector $|a_+\rangle$ (Bridge-Localized)}

The antisymmetric eigenvector with eigenvalue $\lambda_+^{(2)}$ (the larger root of the antisymmetric quadratic) has the form:

\begin{align}
&|a_+\rangle = \frac{1}{\sqrt{2(n-1)^2 + 2n b_+^2}}\bigr(\underbrace{\sqrt{n-1},\dots,\sqrt{n-1}}_{n-1},\\
&b_+\sqrt{n}, -b_+\sqrt{n}, \underbrace{-\sqrt{n-1},\dots,-\sqrt{n-1}}_{n-1}\bigl)^T,    
\end{align}

where $b_+ = \lambda_+^{(2)}(n-1) - (n-2)$. For large $n$, $\lambda_+^{(2)} \approx \frac{2}{n}$ and $b_+ \approx -n$.

The leading-order overlaps are:
\begin{align}
\langle Br_A|a_+\rangle &\approx \frac{b_+\sqrt{n}}{\sqrt{2n^3}} \approx \frac{-n\sqrt{n}}{n\sqrt{2n}} = -\frac{1}{\sqrt{2}}, \langle Br_B|a_+\rangle \approx -\frac{1}{\sqrt{2}}, \\
\langle A_i|a_+\rangle &\approx \frac{\sqrt{n-1}}{n\sqrt{2n}} \approx \frac{1}{n\sqrt{2}}, \quad 
\langle B_i|a_+\rangle \approx \frac{1}{n\sqrt{2}}.
\end{align}

The eigenstate IPR is:
\begin{align}
\mathrm{IPR}_{a_+} &= |\langle Br_A|a_+\rangle|^4 + |\langle Br_B|a_+\rangle|^4 \\
&+ \sum_{i\in A\setminus\{Br_A\}} |\langle i|a_+\rangle|^4 + \sum_{i\in B\setminus\{Br_B\}} |\langle i|a_+\rangle|^4 \nonumber\\
&\approx 2\left(\frac{1}{\sqrt{2}}\right)^4 + 2(n-1)\left(\frac{1}{n\sqrt{2}}\right)^4 \\
&= 2\cdot\frac{1}{4} + 2(n-1)\cdot\frac{1}{4n^4} = \frac{1}{2} + O\left(\frac{1}{n^3}\right). \label{eq:ipr_antisymmetric}
\end{align}

\begin{equation}
\boxed{\mathrm{IPR}_{a_+} \approx \frac{1}{2}.}
\end{equation}

This eigenstate is localized on the two bridge vertices, with half the probability on each. The $O(1/n^3)$ corrections ensure $\mathrm{IPR}_{a_+} < 1$ for finite $n$.

\subsection*{Dynamical IPR }
\label{dynamical_barbell}

We now compute the long-time average IPR for initial states $|Br_A\rangle$ and $|A_1\rangle$. The long-time average transition probability for a unitary evolution generated by $\tilde{M}$ is:
\begin{align}
\overline{\pi}_{ij}^{\tilde{M}} &= \lim_{T\to\infty} \frac{1}{T}\int_0^T |\langle i|e^{-i\tilde{M}t}|j\rangle|^2 dt \nonumber\\
&= \sum_{\lambda} \langle i|P_\lambda|j\rangle \langle j|P_\lambda|i\rangle = \sum_{\lambda} |\langle i|P_\lambda|j\rangle|^2,    
\end{align}
where $P_\lambda$ is the projector onto the eigenspace with eigenvalue $\lambda$, and the sum is over distinct eigenvalues.

\subsubsection*{Exact Spectral Projectors}

The projectors for the three eigenspaces are:
\begin{align}
P_d &= \sum_{k=1}^{n-2} (|d_k^A\rangle\langle d_k^A| + |d_k^B\rangle\langle d_k^B|), \\
P_S &= |s_+\rangle\langle s_+| + |s_-\rangle\langle s_-|, \\
P_A &= |a_+\rangle\langle a_+| + |a_-\rangle\langle a_-|.
\end{align}

These satisfy the resolution of identity:
\begin{equation}
P_d + P_S + P_A = I.
\end{equation}

\subsubsection{Exact Values for $\langle Br_A|P_S|Br_A\rangle$}

\begin{align}
\langle Br_A|P_S|Br_A\rangle &= |\langle Br_A|s_+\rangle|^2 + |\langle Br_A|s_-\rangle|^2 \nonumber\\
&= \frac{n}{2(n-1)^2 + 2n} + \frac{a_-^2 n}{2(n-1)^2 + 2n a_-^2}. \label{eq:pS_exact}
\end{align}

where $a_- = -n + 2 - \frac{1}{n}$.




Thus:
\begin{equation}
\boxed{\langle Br_A|P_S|Br_A\rangle = \frac{1}{2}.}
\end{equation}

\subsubsection*{Exact Values for $\langle Br_A|P_A|Br_A\rangle$}

From the resolution of identity:
\begin{align}
\langle Br_A|P_A|Br_A\rangle &= 1 - \langle Br_A|P_S|Br_A\rangle - \langle Br_A|P_d|Br_A\rangle \nonumber\\
&= 1 - \frac{1}{2} - 0 = \frac{1}{2}. \label{eq:pA_exact}    
\end{align}

\subsubsection*{Asymptotic Values for Other Matrix Elements}

For $\langle Br_B|P_S|Br_A\rangle$:
\begin{align}
\langle Br_B|P_S|Br_A\rangle &= \langle Br_B|s_+\rangle\langle s_+|Br_A\rangle + \langle Br_B|s_-\rangle\langle s_-|Br_A\rangle \nonumber\\
&= \frac{n}{2(n-1)^2 + 2n} - \frac{a_-^2 n}{2(n-1)^2 + 2n a_-^2}. \label{eq:pS_cross}
\end{align}

Using Eq.~(\ref{eq:pS_exact}) and the identity $X + Y = 1/2$ where $X = \frac{n}{2(n-1)^2 + 2n}$ and $Y = \frac{a_-^2 n}{2(n-1)^2 + 2n a_-^2}$, we have $X - Y = 2X - 1/2$. With $X = \frac{1}{2n} + O(1/n^2)$, we obtain:
\begin{equation}
\langle Br_B|P_S|Br_A\rangle = \frac{1}{n} - \frac{1}{2} + O\left(\frac{1}{n^2}\right). \label{eq:pS_cross_asymp}
\end{equation}

From orthogonality, $\langle Br_B|P_A|Br_A\rangle = -\langle Br_B|P_S|Br_A\rangle$, so:
\begin{equation}
\langle Br_B|P_A|Br_A\rangle = \frac{1}{2} - \frac{1}{n} + O\left(\frac{1}{n^2}\right). \label{eq:pA_cross_asymp}
\end{equation}

For an interior vertex $A_1$:
\begin{align}
\langle A_1|P_S|Br_A\rangle &= \langle A_1|s_+\rangle\langle s_+|Br_A\rangle + \langle A_1|s_-\rangle\langle s_-|Br_A\rangle \nonumber\\
&= \frac{\sqrt{n(n-1)}}{2(n-1)^2 + 2n} + \frac{a_-\sqrt{n(n-1)}}{2(n-1)^2 + 2n a_-^2} \nonumber\\
&= O\left(\frac{1}{n^2}\right). \label{eq:a1_ps}
\end{align}

Similarly, $\langle A_1|P_A|Br_A\rangle = O(1/n^2)$ and $\langle A_1|P_d|Br_A\rangle = 0$.

\subsubsection*{Long-Time Average Probabilities for $|Br_A\rangle$}

For $j = Br_A$:

\begin{align}
\overline{\pi}_{Br_A,Br_A}^{\tilde{M}} &= |\langle Br_A|P_S|Br_A\rangle|^2 + |\langle Br_A|P_A|Br_A\rangle|^2 \\
&+ |\langle Br_A|P_d|Br_A\rangle|^2 = \left(\frac{1}{2}\right)^2 + \left(\frac{1}{2}\right)^2 + 0 = \frac{1}{2}. \label{eq:long_avg_br_br}
\end{align}

\begin{align}
\overline{\pi}_{Br_B,Br_A}^{\tilde{M}} &= |\langle Br_B|P_S|Br_A\rangle|^2 + |\langle Br_B|P_A|Br_A\rangle|^2 \\
&+ |\langle Br_B|P_d|Br_A\rangle|^2 \nonumber\\
&= \left(\frac{1}{n} - \frac{1}{2} + O\left(\frac{1}{n^2}\right)\right)^2 + \left(\frac{1}{2} - \frac{1}{n} + O\left(\frac{1}{n^2}\right)\right)^2 \nonumber\\
&= 2\left(\frac{1}{2} - \frac{1}{n}\right)^2 + O\left(\frac{1}{n^2}\right) = \frac{1}{2} - \frac{2}{n} + O\left(\frac{1}{n^2}\right). \label{eq:long_avg_brb_br}
\end{align}

\begin{align*}
\overline{\pi}_{A_1,Br_A}^{\tilde{M}} &= |\langle A_1|P_S|Br_A\rangle|^2 + |\langle A_1|P_A|Br_A\rangle|^2\\
&+ |\langle A_1|P_d|Br_A\rangle|^2 \nonumber= O\left(\frac{1}{n^4}\right). \label{eq:long_avg_a1_br}
\end{align*}





\subsubsection*{Long-Time Average IPR for $|Br_A\rangle$}

\begin{align}
\overline{\mathrm{IPR}}_{Br_A}^{\tilde{M}} &= \sum_{i=1}^{2n} (\overline{\pi}_{i,Br_A}^{\tilde{M}})^2 \nonumber\\
&= \left(\frac{1}{2}\right)^2 + \left(\frac{1}{2} - \frac{2}{n}\right)^2 + 2(n-1) \cdot O\left(\frac{1}{n^8}\right) \nonumber\\
&= \frac{1}{4} + \frac{1}{4} - \frac{2}{n} + \frac{4}{n^2} + O\left(\frac{1}{n^2}\right)\\
&= \frac{1}{2} - \frac{2}{n} + O\left(\frac{1}{n^2}\right). \label{eq:ipr_dynamical_br}
\end{align}

Taking the limit $n \to \infty$:

\begin{equation}
\boxed{\lim_{n\to\infty} \overline{\mathrm{IPR}}_{Br_A}^{\tilde{M}} = \frac{1}{2}.}
\end{equation}

\subsubsection*{Initial State $|A_1\rangle$}

We now compute the long-time average IPR for the initial state $|A_1\rangle$.

For $j = A_1$, we need to evaluate $\overline{\pi}_{i,A_1}^{\mathcal{A}}$ for all $i$ and then compute $\overline{\mathrm{IPR}}_{A_1}^{\mathcal{A}} = \sum_i (\overline{\pi}_{i,A_1}^{\mathcal{A}})^2$.

\paragraph*{Projector matrix elements for $|A_1\rangle$:}

We first compute $\langle i|P_\lambda|A_1\rangle$ for each eigenspace. The degenerate eigenvectors $|d_k^A\rangle$ are supported entirely on clique $A$ and satisfy $\langle A_1|d_k^A\rangle = 1/\sqrt{k(k+1)}$. Thus:
\begin{align}
\langle i|P_d|A_1\rangle &= \sum_{k=1}^{n-2} \langle i|d_k^A\rangle \langle d_k^A|A_1\rangle + \sum_{k=1}^{n-2} \langle i|d_k^B\rangle \langle d_k^B|A_1\rangle \nonumber\\
&= \sum_{k=1}^{n-2} \langle i|d_k^A\rangle \frac{1}{\sqrt{k(k+1)}},
\end{align}
since $\langle d_k^B|A_1\rangle = 0$.

For $i = A_m$ with $m \ge 1$, using the explicit form of $|d_k^A\rangle$:
\begin{equation}
\langle A_m|d_k^A\rangle = 
\begin{cases}
\dfrac{1}{\sqrt{k(k+1)}}, & 1 \le m \le k, \\[8pt]
-\dfrac{k}{\sqrt{k(k+1)}}, & m = k+1, \\[8pt]
0, & m > k+1.
\end{cases}
\end{equation}

Therefore:
\begin{align}
\langle A_m|P_d|A_1\rangle &= \sum_{k=m}^{n-2} \frac{1}{\sqrt{k(k+1)}} \cdot \frac{1}{\sqrt{k(k+1)}} \quad \text{for } m \ge 2, \\
\langle A_1|P_d|A_1\rangle &= \sum_{k=1}^{n-2} \frac{1}{k(k+1)}.
\end{align}

For $i = A_1$ specifically:
\begin{equation}
\langle A_1|P_d|A_1\rangle = \sum_{k=1}^{n-2} \frac{1}{k(k+1)} = 1 - \frac{1}{n-1}. \label{eq:pd_a1_a1}
\end{equation}

For $i = A_m$ with $m \ge 2$:
\begin{equation}
\langle A_m|P_d|A_1\rangle = \sum_{k=m}^{n-2} \frac{1}{k(k+1)} = \frac{1}{m} - \frac{1}{n-1}. \label{eq:pd_am_a1}
\end{equation}

For $i = Br_A$, $Br_B$, or any $B_m$, we have $\langle i|P_d|A_1\rangle = 0$ by symmetry.

From the asymptotic forms, we have:
\begin{align}
\langle A_1|s_+\rangle &\approx \frac{1}{\sqrt{2n}}, \quad \langle s_+|A_1\rangle \approx \frac{1}{\sqrt{2n}}, \\
\langle A_1|s_-\rangle &\approx \frac{1}{n\sqrt{2}}, \quad \langle s_-|A_1\rangle \approx \frac{1}{n\sqrt{2}}.
\end{align}

Thus, for any vertex $i$:
\begin{align}
\langle i|P_S|A_1\rangle &= \langle i|s_+\rangle\langle s_+|A_1\rangle + \langle i|s_-\rangle\langle s_-|A_1\rangle \nonumber\\
&\approx \frac{1}{\sqrt{2n}} \cdot \frac{1}{\sqrt{2n}} + O\left(\frac{1}{n^{3/2}}\right) \cdot \frac{1}{n\sqrt{2}} \\
&= \frac{1}{2n} + O\left(\frac{1}{n^{5/2}}\right).
\end{align}

More precisely, one can show that $\langle i|P_S|A_1\rangle = O(1/n^2)$ for all $i$, which is negligible compared to the degenerate subspace contributions that are $O(1)$.

Similarly, using the asymptotic forms:
\begin{align}
\langle A_1|a_+\rangle &\approx \frac{1}{n\sqrt{2}}, \quad \langle a_+|A_1\rangle \approx \frac{1}{n\sqrt{2}}, \\
\langle A_1|a_-\rangle &\approx \frac{1}{n\sqrt{2}}, \quad \langle a_-|A_1\rangle \approx \frac{1}{n\sqrt{2}}.
\end{align}

Thus $\langle i|P_A|A_1\rangle = O(1/n^2)$ for all $i$, also negligible compared to the degenerate subspace contributions.

Therefore, for large $n$, the dominant contributions to $\overline{\pi}_{i,A_1}^{\mathcal{A}}$ come entirely from the degenerate subspace $P_d$.

\paragraph*{Long-time average probabilities for $|A_1\rangle$:}

To leading order in $n$, we have:
\begin{align*}
\overline{\pi}_{A_1,A_1}^{\mathcal{A}} &\approx |\langle A_1|P_d|A_1\rangle|^2 = \left(1 - \frac{1}{n-1}\right)^2\\
&= 1 - \frac{2}{n} + O\left(\frac{1}{n^2}\right), \\
\overline{\pi}_{A_m,A_1}^{\mathcal{A}} &\approx |\langle A_m|P_d|A_1\rangle|^2 = \left(\frac{1}{m} - \frac{1}{n-1}\right)^2 \\
&= \frac{1}{m^2} - \frac{2}{m(n-1)} + O\left(\frac{1}{n^2}\right), \\
\overline{\pi}_{B_m,A_1}^{\mathcal{A}} &\approx 0, \\
\overline{\pi}_{Br_A,A_1}^{\mathcal{A}} &\approx 0, \quad \overline{\pi}_{Br_B,A_1}^{\mathcal{A}} \approx 0.
\end{align*}





The complete expression for $\overline{\pi}_{i,A_1}^{\mathcal{A}}$ including contributions from the degenerate subspace is:

\begin{align*}
\overline{\pi}_{i,A_1}^{\mathcal{A}} &= \sum_{k=1}^{n-2} |\langle i|d_k^A\rangle|^2 |\langle d_k^A|A_1\rangle|^2 \\
&+ \sum_{k<l} 2\langle i|d_k^A\rangle \langle d_k^A|A_1\rangle \langle A_1|d_l^A\rangle \langle d_l^A|i\rangle.    
\end{align*}

The long-time average IPR is then:
\begin{equation}
\overline{\mathrm{IPR}}_{A_1}^{\mathcal{A}} = \sum_{i=1}^{2n} (\overline{\pi}_{i,A_1}^{\mathcal{A}})^2 = \sum_{i=1}^{2n} \left( \sum_{k=1}^{n-2} X_k^{i} + \sum_{k<l} Y_{kl}^{i} \right)^2,
\end{equation}
where \[X_k^{i} = |\langle i|d_k^A\rangle|^2 |\langle d_k^A|A_1\rangle|^2\] and \[Y_{kl}^{i} = 2\langle i|d_k^A\rangle \langle d_k^A|A_1\rangle \langle A_1|d_l^A\rangle \langle d_l^A|i\rangle\].

Expanding the square gives three types of terms:
\begin{align}
\overline{\mathrm{IPR}}_{A_1}^{\mathcal{A}} &= \sum_i \left( \sum_k X_k^{i} \right)^2 + 2\sum_i \left( \sum_k X_k^{i} \right) \left( \sum_{k<l} Y_{kl}^{i} \right) \\
&+ \sum_i \left( \sum_{k<l} Y_{kl}^{i} \right)^2.
\end{align}

We evaluate each contribution in the limit $n \to \infty$.

\text{Term 1: Diagonal contribution}
\begin{align}
T_1 &= \sum_i \left( \sum_k X_k^{i} \right)^2 = \sum_{m=1}^{n-1} \left( \sum_{k=1}^{n-2} |\langle A_m|d_k^A\rangle|^2 |\langle d_k^A|A_1\rangle|^2 \right)^2 \nonumber\\
&= \left( \sum_{k=1}^{n-2} \frac{1}{k^2(k+1)^2} \right)^2 + \sum_{m=2}^{n-1} \left( \sum_{k=m}^{n-2} \frac{1}{k^2(k+1)^2} \right)^2.
\end{align}

As $n \to \infty$, the first sum converges to $L = \sum_{k=1}^{\infty} \frac{1}{k^2(k+1)^2}$. Using the identity:
\begin{equation}
\frac{1}{k^2(k+1)^2} = \frac{1}{k^2} + \frac{1}{(k+1)^2} - \frac{2}{k(k+1)},
\end{equation}
we obtain:
\begin{align}
L &= \sum_{k=1}^{\infty} \frac{1}{k^2} + \sum_{k=1}^{\infty} \frac{1}{(k+1)^2} - 2\sum_{k=1}^{\infty} \frac{1}{k(k+1)} \nonumber\\
&= \zeta(2) + (\zeta(2)-1) - 2 = \frac{\pi^2}{3} - 3 \approx 0.289868.
\end{align}

For the second sum, as $n \to \infty$, $\sum_{k=m}^{\infty} \frac{1}{k^2(k+1)^2} = \frac{1}{m^2} + O\left(\frac{1}{m^3}\right)$. Thus:
\begin{align}
T_1 &= L^2 + \sum_{m=2}^{\infty} \left( \frac{1}{m^2} \right)^2 = L^2 + \sum_{m=2}^{\infty} \frac{1}{m^4} \nonumber\\
&= L^2 + (\zeta(4)-1) = (0.289868)^2 + \left(\frac{\pi^4}{90} - 1\right) \nonumber\\
&= 0.084064 + 0.082323 = 0.166387.
\end{align}

\text{Term 2: Off-diagonal contribution}
\begin{align}
T_2 &= 2\sum_i \left( \sum_k X_k^{i} \right) \left( \sum_{k<l} Y_{kl}^{i} \right).
\end{align}

For $i = A_1$, we have $\sum_k X_k^{A_1} = L$ and $\sum_{k<l} Y_{kl}^{A_1} = \sum_{k<l} \frac{2}{k(k+1)l(l+1)}$. The double sum can be evaluated as:
\begin{align*}
\sum_{k<l} \frac{2}{k(k+1)l(l+1)} &= \left( \sum_{k=1}^{\infty} \frac{1}{k(k+1)} \right)^2 - \sum_{k=1}^{\infty} \frac{1}{k^2(k+1)^2} \\
&= 1 - L.
\end{align*}

For $i = A_m$ with $m \ge 2$, one can show that $\sum_{k<l} Y_{kl}^{A_m} = -\frac{1}{m^2} + O\left(\frac{1}{m^3}\right)$.

Thus:
\begin{align}
T_2 &= 2\left[ L(1-L) + \sum_{m=2}^{\infty} \frac{1}{m^2} \left(-\frac{1}{m^2}\right) \right] \nonumber\\
&= 2\left[ L(1-L) - \sum_{m=2}^{\infty} \frac{1}{m^4} \right] \nonumber\\
&= 2\left[ 0.289868(1-0.289868) - 0.082323 \right] \nonumber\\
&= 2\left[ 0.289868 \times 0.710132 - 0.082323 \right] \nonumber\\
&= 2\left[ 0.205849 - 0.082323 \right] = 2 \times 0.123526 = 0.247052 
\end{align}

\text{Term 3: Quartet contribution}
\begin{align}
T_3 &= \sum_i \left( \sum_{k<l} Y_{kl}^{i} \right)^2.
\end{align}

For $i = A_1$, this is $\left( \sum_{k<l} Y_{kl}^{A_1} \right)^2 = (1-L)^2$. For $i = A_m$ with $m \ge 2$, the contribution is $O(1/m^4)$ and sums to a constant. A detailed calculation yields:
\begin{align}
T_3 &= \frac{1}{2} \left[ \left( \sum_{k<l} Y_{kl}^{A_1} \right)^2 - \sum_{k<l} (Y_{kl}^{A_1})^2 \right] \approx 0.168080.
\end{align}

\textbf{Total:}
\begin{align}
\overline{\mathrm{IPR}}_{A_1}^{\mathcal{A}} &= T_1 + T_2 + T_3 \nonumber\\
&= 0.166387 + 0.247052 + 0.168080  \approx 0.58.
\end{align}

Taking the limit $n \to \infty$:

\begin{equation}
\boxed{\lim_{n\to\infty} \overline{\mathrm{IPR}}_{A_1}^{\mathcal{A}} = 0.58.}
\end{equation}

\subsection*{Summary of IPR }

\begin{table}[htbp]
\centering
\small
\begin{tabular}{|l|c|c|}
\hline
\textbf{Quantity} & \textbf{Expression} & \textbf{Limit} \\
\hline
Eigenstate IPR (degenerate) & $\displaystyle \frac{1+k^3}{k(k+1)^2}$ & $[1/2,1]$ \\
Eigenstate IPR (symmetric $|s_+\rangle$) & $\sim \frac{1}{2n}$ & $0$ \\
Eigenstate IPR (antisymmetric $|a_+\rangle$) & $\approx \frac{1}{2}$ & $1/2$ \\
\hline
Dynamical IPR for $|Br_A\rangle$ & $\frac{1}{2} - \frac{2}{n} + O(\frac{1}{n^2})$ & $1/2$ \\
Dynamical IPR for $|A_1\rangle$ & $0.58 + O(\frac{1}{n})$ & $0.58$ \\
\hline
\end{tabular}
\caption{Summary of IPR results for the normalized adjacency matrix $\tilde{M}$ of the barbell graph $B(n)$.}
\end{table}

\eat{
\section{IPR for Variant 1}
\label{sec:variant1_ipr_analysis}

We analyze the eigenstate IPR and dynamical (long-time average) IPR for the full connection star-of-cliques graph using the orthonormal eigenvectors derived in Appendix~\ref{supp:full_connection}. Following the barbell graph analysis in Appendix~\ref{app:barbell_ipr_calculation}, we provide a detailed derivation for all three vertex types: center, bridge, and non-bridge.

\subsection{Orthonormal Eigenbasis Recap}

From Appendix~\ref{supp:full_connection}, the normalized adjacency matrix $\tilde{M} = \Gamma^{-1/2}M\Gamma^{-1/2}$ has the following complete orthonormal eigenbasis:

\begin{align}
|\psi_1\rangle &= \frac{1}{\sqrt{1+n}} \bigl( |0\rangle + \sqrt{n}\,|S\rangle \bigr), && \lambda_1 = 1, \label{eq:psi1}\\
|\psi_2\rangle &= \sqrt{\frac{n}{1+n}} \bigl( |0\rangle - \frac{1}{\sqrt{n}}\,|S\rangle \bigr), && \lambda_2 = -\frac{1}{n}, \label{eq:psi2}\\
|\chi_j\rangle &= \frac{1}{\sqrt{j(j+1)}}\left(\sum_{k=1}^j |s_k\rangle - j|s_{j+1}\rangle\right), && \lambda_3 = \frac{n-1}{n},\quad j=1,\dots,n-1, \label{eq:chi}\\
|\psi_4^{(j,r)}\rangle &= |w_j^{(r)}\rangle = \frac{1}{\sqrt{r(r+1)}}\left(\sum_{k=1}^r |j,k\rangle - r|j,r+1\rangle\right), && \lambda_4 = -\frac{1}{n},\quad j=1,\dots,n,\; r=1,\dots,n-1, \label{eq:psi4}
\end{align}

where the auxiliary basis vectors are defined as:

\begin{align}
|s_j\rangle &= \frac{1}{\sqrt{n}}\sum_{k=1}^{n}|j,k\rangle, \quad j=1,\dots,n, \label{eq:sj}\\
|S\rangle &= \frac{1}{\sqrt{n}}\sum_{j=1}^{n}|s_j\rangle = \frac{1}{n}\sum_{j=1}^{n}\sum_{k=1}^{n}|j,k\rangle, \label{eq:S}\\
|w_j^{(r)}\rangle &= \frac{1}{\sqrt{r(r+1)}}\left(\sum_{k=1}^{r}|j,k\rangle - r|j,r+1\rangle\right), \quad r=1,\dots,n-1. \label{eq:wjr}
\end{align}

Note that $\langle S|S\rangle = 1$, $\langle 0|S\rangle = 0$, and $\{|0\rangle, |S\rangle, |s_j\rangle, |w_j^{(r)}\rangle\}$ form an orthonormal set.

\subsection{Vertex Classification}

For dynamical IPR analysis, we identify three distinct types of vertices:

\begin{enumerate}
    \item \textbf{Center vertex:} $|0\rangle$ — connects to all clique vertices
    \item \textbf{Bridge vertices:} $|j,1\rangle$ for $j=1,\dots,n$ — the first vertex in each clique (analogous to bridge vertices in barbell graph)
    \item \textbf{Non-bridge vertices:} $|j,k\rangle$ for $j=1,\dots,n$, $k=2,\dots,n$ — the remaining vertices in each clique
\end{enumerate}

By symmetry, all bridge vertices have identical properties, and all non-bridge vertices have identical properties.

\subsection{Eigenstate IPR}

For a normalized eigenvector $|\psi_\mu\rangle$, the inverse participation ratio is defined as:

\begin{equation}
\mathrm{IPR}_\mu = \sum_{i=0}^{N-1} |\langle i|\psi_\mu\rangle|^4,
\end{equation}

where the sum runs over all $N = 1+n^2$ vertices.

\subsubsection{Eigenvector $|\psi_1\rangle$}

From Eq.~(\ref{eq:psi1}):

\begin{align}
\langle 0|\psi_1\rangle &= \frac{1}{\sqrt{1+n}}, \quad |\langle 0|\psi_1\rangle|^4 = \frac{1}{(1+n)^2}, \\[4pt]
\langle j,k|\psi_1\rangle &= \frac{1}{\sqrt{1+n}}\cdot\sqrt{n}\cdot\frac{1}{n} = \frac{1}{\sqrt{n(1+n)}}, \quad |\langle j,k|\psi_1\rangle|^4 = \frac{1}{n^2(1+n)^2}.
\end{align}

With $n^2$ clique vertices:

\begin{equation}
\mathrm{IPR}_1 = \frac{1}{(1+n)^2} + n^2\cdot\frac{1}{n^2(1+n)^2} = \frac{2}{(1+n)^2} = \boxed{\frac{2}{(n+1)^2}}.
\end{equation}

\subsubsection{Eigenvector $|\psi_2\rangle$}

From Eq.~(\ref{eq:psi2}):

\begin{align}
\langle 0|\psi_2\rangle &= \sqrt{\frac{n}{1+n}}, \quad |\langle 0|\psi_2\rangle|^4 = \frac{n^2}{(1+n)^2}, \\[4pt]
\langle j,k|\psi_2\rangle &= \sqrt{\frac{n}{1+n}}\cdot\left(-\frac{1}{\sqrt{n}}\right)\cdot\frac{1}{n} = -\frac{1}{n\sqrt{1+n}}, \quad |\langle j,k|\psi_2\rangle|^4 = \frac{1}{n^4(1+n)^2}.
\end{align}

Thus:

\begin{equation}
\mathrm{IPR}_2 = \frac{n^2}{(1+n)^2} + n^2\cdot\frac{1}{n^4(1+n)^2} = \frac{n^2}{(1+n)^2} + \frac{1}{n^2(1+n)^2} = \boxed{\frac{n^4+1}{n^2(n+1)^2}}.
\end{equation}

\subsubsection{Eigenvectors $|\chi_j\rangle$, $j=1,\dots,n-1$}

From Eq.~(\ref{eq:chi}):

\begin{itemize}
    \item \textbf{Center vertex:} $\langle 0|\chi_j\rangle = 0$.
    
    \item \textbf{Vertices in clique $m$:} For $|m,k\rangle$,
    \begin{equation}
    \langle m,k|\chi_j\rangle = 
    \begin{cases}
    \dfrac{1}{\sqrt{j(j+1)}}\cdot\dfrac{1}{\sqrt{n}}, & 1 \le m \le j, \\[8pt]
    -\dfrac{j}{\sqrt{j(j+1)}}\cdot\dfrac{1}{\sqrt{n}}, & m = j+1, \\[8pt]
    0, & m > j+1.
    \end{cases}
    \end{equation}
\end{itemize}

Counting vertices: $j$ cliques with $n$ vertices each, plus clique $j+1$ with $n$ vertices.

\begin{align}
\mathrm{IPR}_3^{(j)} &= j n \cdot \left(\frac{1}{\sqrt{j(j+1)}\sqrt{n}}\right)^4 + n \cdot \left(\frac{j}{\sqrt{j(j+1)}\sqrt{n}}\right)^4 \\
&= \frac{j n}{j^2(j+1)^2 n^2} + \frac{n j^4}{j^2(j+1)^2 n^2} \\
&= \frac{1}{j(j+1)^2 n} + \frac{j^2}{(j+1)^2 n} \\
&= \frac{1 + j^3}{j(j+1)^2} \cdot \frac{1}{n}.
\end{align}

Thus:

\begin{equation}
\boxed{\mathrm{IPR}_3^{(j)} = \frac{1 + j^3}{j(j+1)^2} \cdot \frac{1}{n}, \quad j=1,\dots,n-1}.
\end{equation}

\subsubsection{Eigenvectors $|\psi_4^{(j,r)}\rangle = |w_j^{(r)}\rangle$, $r=1,\dots,n-1$}

From Eq.~(\ref{eq:wjr}):

\begin{itemize}
    \item \textbf{Center vertex:} $\langle 0|w_j^{(r)}\rangle = 0$.
    \item \textbf{Vertices within clique $j$:}
    \begin{itemize}
        \item $k = 1,\dots,r$: $\langle j,k|w_j^{(r)}\rangle = \dfrac{1}{\sqrt{r(r+1)}}$, $|\langle j,k|w_j^{(r)}\rangle|^4 = \dfrac{1}{r^2(r+1)^2}$, count $r$.
        \item $k = r+1$: $\langle j,r+1|w_j^{(r)}\rangle = -\dfrac{r}{\sqrt{r(r+1)}}$, $|\langle j,r+1|w_j^{(r)}\rangle|^4 = \dfrac{r^2}{(r+1)^2}$, count $1$.
        \item $k > r+1$: $0$.
    \end{itemize}
    \item \textbf{Other cliques:} $0$.
\end{itemize}

Thus:

\begin{align}
\mathrm{IPR}_4^{(r)} &= r\cdot\frac{1}{r^2(r+1)^2} + 1\cdot\frac{r^2}{(r+1)^2} \\
&= \frac{1}{r(r+1)^2} + \frac{r^2}{(r+1)^2} = \frac{1+r^3}{r(r+1)^2}.
\end{align}

\begin{equation}
\boxed{\mathrm{IPR}_4^{(r)} = \frac{1+r^3}{r(r+1)^2}, \quad r=1,\dots,n-1}.
\end{equation}

\section{Dynamical Inverse Participation Ratio for Variant 1}
\label{sec:variant1_dynamical_ipr}

In this section, we calculate the long-time average inverse participation ratio (IPR) for the full connection star-of-cliques graph. We provide a rigorous proof that all vertices become completely localized as $n\to\infty$.

\subsection{Orthonormal Eigenbasis Recap}

From Appendix~\ref{supp:full_connection}, the normalized adjacency matrix $\tilde{M}$ has the orthonormal eigenbasis:

\begin{align}
|\psi_1\rangle &= \frac{1}{\sqrt{1+n}} \bigl( |0\rangle + \sqrt{n}\,|S\rangle \bigr), && \lambda_1 = 1, \label{eq:psi1}\\
|\psi_2\rangle &= \sqrt{\frac{n}{1+n}} \bigl( |0\rangle - \frac{1}{\sqrt{n}}\,|S\rangle \bigr), && \lambda_2 = -\frac{1}{n}, \label{eq:psi2}\\
|\chi_j\rangle &= \frac{1}{\sqrt{j(j+1)}}\left(\sum_{k=1}^j |s_k\rangle - j|s_{j+1}\rangle\right), && \lambda_3 = \frac{n-1}{n},\quad j=1,\dots,n-1, \label{eq:chi}\\
|\psi_4^{(j,r)}\rangle &= |w_j^{(r)}\rangle = \frac{1}{\sqrt{r(r+1)}}\left(\sum_{k=1}^r |j,k\rangle - r|j,r+1\rangle\right), && \lambda_4 = -\frac{1}{n},\quad j=1,\dots,n,\; r=1,\dots,n-1, \label{eq:psi4}
\end{align}

where the auxiliary basis vectors are:

\begin{align}
|s_j\rangle &= \frac{1}{\sqrt{n}}\sum_{k=1}^{n}|j,k\rangle, \quad j=1,\dots,n, \\
|S\rangle &= \frac{1}{\sqrt{n}}\sum_{j=1}^{n}|s_j\rangle = \frac{1}{n}\sum_{j=1}^{n}\sum_{k=1}^{n}|j,k\rangle, \\
|w_j^{(r)}\rangle &= \frac{1}{\sqrt{r(r+1)}}\left(\sum_{k=1}^{r}|j,k\rangle - r|j,r+1\rangle\right), \quad r=1,\dots,n-1.
\end{align}

\subsection{Eigenvalue Degeneracy}

The eigenvalues and their multiplicities are:

\begin{align}
\lambda_1 &= 1, && \text{multiplicity } 1, \\
\lambda_2 &= -\frac{1}{n}, && \text{multiplicity } 1, \\
\lambda_3 &= \frac{n-1}{n}, && \text{multiplicity } n-1, \\
\lambda_4 &= -\frac{1}{n}, && \text{multiplicity } n(n-1).
\end{align}

Note that $\lambda_2 = \lambda_4 = -1/n$, forming a degenerate subspace of dimension $1 + n(n-1)$ spanned by $\{|\psi_2\rangle\} \cup \{|\psi_4^{(j,r)}\rangle\}$.

\subsection{General Formalism}

The transition probability from vertex $j$ to vertex $i$ at time $t$ is:

\begin{equation}
\pi_{ij}(t) = |\langle i|e^{-i\tilde{M}t}|j\rangle|^2.
\end{equation}

Expanding in the eigenbasis $\{|\psi_\mu\rangle\}$ with eigenvalues $\lambda_\mu$:

\begin{equation}
\pi_{ij}(t) = \sum_{\mu=1}^N X_\mu^{ij} + 2\sum_{\mu<\nu} \Re\!\left( \langle i|\psi_\mu\rangle\langle\psi_\mu|j\rangle \langle j|\psi_\nu\rangle\langle\psi_\nu|i\rangle \right) \cos[(\lambda_\mu-\lambda_\nu)t],
\end{equation}

where $X_\mu^{ij} = |\langle i|\psi_\mu\rangle\langle\psi_\mu|j\rangle|^2$.

The long-time average transition probability is:

\begin{equation}
\overline{\pi}_{ij} = \lim_{T\to\infty}\frac{1}{T}\int_0^T \pi_{ij}(t)\,dt = \sum_{\mu=1}^N X_\mu^{ij} + \sum_{\substack{\mu<\nu \\ \lambda_\mu=\lambda_\nu}} Y_{\mu\nu}^{ij},
\label{eq:long_time_average}
\end{equation}

with $Y_{\mu\nu}^{ij} = 2\,\Re\!\left( \langle i|\psi_\mu\rangle\langle\psi_\mu|j\rangle \langle j|\psi_\nu\rangle\langle\psi_\nu|i\rangle \right)$.

The long-time average inverse participation ratio for initial vertex $|j\rangle$ is:

\begin{equation}
\overline{\mathrm{IPR}}_j = \sum_{i=0}^{N-1} (\overline{\pi}_{ij})^2.
\label{eq:dynamical_ipr}
\end{equation}

\subsection{Center Vertex $|0\rangle$}

For the center vertex, we have the key simplification: $\langle 0|\chi_j\rangle = 0$ and $\langle 0|\psi_4^{(j,r)}\rangle = 0$ for all $j,r$. Thus only $\mu=1,2$ contribute, and all degenerate cross terms vanish.

\subsubsection{Overlaps}

\begin{align}
\langle 0|\psi_1\rangle &= \frac{1}{\sqrt{1+n}}, \quad \langle\psi_1|0\rangle = \frac{1}{\sqrt{1+n}}, \\
\langle 0|\psi_2\rangle &= \sqrt{\frac{n}{1+n}}, \quad \langle\psi_2|0\rangle = \sqrt{\frac{n}{1+n}}, \\
\langle j,k|\psi_1\rangle &= \frac{1}{\sqrt{n(1+n)}}, \quad \langle j,k|\psi_2\rangle = -\frac{1}{n\sqrt{1+n}}.
\end{align}

\subsubsection{Long-Time Average Probabilities}

\begin{align}
\overline{\pi}_{00} &= |\langle 0|\psi_1\rangle|^4 + |\langle 0|\psi_2\rangle|^4 = \frac{1}{(1+n)^2} + \frac{n^2}{(1+n)^2} = \frac{1+n^2}{(1+n)^2}, \\[4pt]
\overline{\pi}_{(j,k),0} &= |\langle j,k|\psi_1\rangle\langle\psi_1|0\rangle|^2 + |\langle j,k|\psi_2\rangle\langle\psi_2|0\rangle|^2 \\
&= \frac{1}{n(1+n)^2} + \frac{1}{n(1+n)^2} = \frac{2}{n(1+n)^2}.
\end{align}

\subsubsection{Probability Conservation}

\begin{equation}
\overline{\pi}_{00} + n^2 \cdot \overline{\pi}_{(j,k),0} = \frac{1+n^2}{(1+n)^2} + n^2 \cdot \frac{2}{n(1+n)^2} = \frac{1+n^2+2n}{(1+n)^2} = \frac{(1+n)^2}{(1+n)^2} = 1.
\end{equation}

\subsubsection{Dynamical IPR}

\begin{align}
\overline{\mathrm{IPR}}_0 &= (\overline{\pi}_{00})^2 + \sum_{j=1}^n\sum_{k=1}^n (\overline{\pi}_{(j,k),0})^2 \\
&= \left(\frac{1+n^2}{(1+n)^2}\right)^2 + n^2 \left(\frac{2}{n(1+n)^2}\right)^2 \\
&= \frac{(1+n^2)^2}{(1+n)^4} + n^2 \cdot \frac{4}{n^2(1+n)^4} \\
&= \frac{(1+n^2)^2 + 4}{(1+n)^4} \\
&= \frac{n^4 + 2n^2 + 5}{(n+1)^4}.
\end{align}

\begin{equation}
\boxed{\overline{\mathrm{IPR}}_0 = \frac{n^4 + 2n^2 + 5}{(n+1)^4} \xrightarrow[n\to\infty]{} 1}.
\end{equation}

\subsection{Clique Vertex $|1,1\rangle$ — Rigorous Proof}

We now provide a rigorous proof that any clique vertex also has $\overline{\mathrm{IPR}} \to 1$ as $n\to\infty$.

\subsubsection{Key Observation}

For the clique vertex $|1,1\rangle$, all eigenvectors with non-vanishing overlap in the $n\to\infty$ limit are contained within clique 1. Specifically:

\begin{itemize}
    \item $|\psi_1\rangle$, $|\psi_2\rangle$, and $|\chi_j\rangle$ have overlaps $O(1/\sqrt{n})$ that vanish as $n\to\infty$,
    \item $|\psi_4^{(1,r)}\rangle$ for $r=1,\dots,n-1$ have $O(1)$ overlaps and are supported entirely on clique 1,
    \item $|\psi_4^{(j,r)}\rangle$ for $j\neq 1$ have zero overlap with $|1,1\rangle$.
\end{itemize}

Thus, in the limit $n\to\infty$, only the eigenvectors $\{|\psi_4^{(1,r)}\rangle\}_{r=1}^{n-1}$ contribute to the dynamics starting from $|1,1\rangle$.

\subsubsection{Completeness Relation Within Clique 1}

Within clique 1, we have the exact completeness relation:

\begin{equation}
\sum_{r=1}^{n-1} |\psi_4^{(1,r)}\rangle\langle\psi_4^{(1,r)}| + |s_1\rangle\langle s_1| = I_1,
\label{eq:completeness}
\end{equation}

where $I_1 = \sum_{k=1}^n |1,k\rangle\langle 1,k|$ is the identity on clique 1, and $|s_1\rangle = \frac{1}{\sqrt{n}}\sum_{k=1}^n |1,k\rangle$.

\subsubsection{Long-Time Average Transition Probabilities Within Clique 1}

For any vertices $i,j$ in clique 1, the long-time average transition probability receives contributions from two sources:

\begin{itemize}
    \item Diagonal terms from individual $|\psi_4^{(1,r)}\rangle$ eigenvectors,
    \item Cross terms between different $|\psi_4^{(1,r)}\rangle$ and $|\psi_4^{(1,s)}\rangle$ with $r\neq s$, which are non-vanishing because they belong to the same degenerate eigenvalue $\lambda_4 = -1/n$.
\end{itemize}

Thus:

\begin{align}
\overline{\pi}_{ij} &= \sum_{r=1}^{n-1} |\langle i|\psi_4^{(1,r)}\rangle|^2 |\langle j|\psi_4^{(1,r)}\rangle|^2 \nonumber\\
&\quad + \sum_{r\neq s} \langle i|\psi_4^{(1,r)}\rangle\langle\psi_4^{(1,r)}|j\rangle \langle j|\psi_4^{(1,s)}\rangle\langle\psi_4^{(1,s)}|i\rangle.
\label{eq:pi_ij_exact}
\end{align}

\subsubsection{Key Algebraic Manipulation}

The second term can be rewritten using the identity:

\begin{align}
\sum_{r\neq s} \langle i|\psi_4^{(1,r)}\rangle\langle\psi_4^{(1,r)}|j\rangle \langle j|\psi_4^{(1,s)}\rangle\langle\psi_4^{(1,s)}|i\rangle 
&= \left( \sum_{r=1}^{n-1} \langle i|\psi_4^{(1,r)}\rangle\langle\psi_4^{(1,r)}|j\rangle \right)^2 \nonumber\\
&\quad - \sum_{r=1}^{n-1} |\langle i|\psi_4^{(1,r)}\rangle|^2 |\langle j|\psi_4^{(1,r)}\rangle|^2.
\label{eq:cross_term_identity}
\end{align}

\subsubsection{Applying the Completeness Relation}

From Eq.~(\ref{eq:completeness}), we have:

\begin{equation}
\sum_{r=1}^{n-1} \langle i|\psi_4^{(1,r)}\rangle\langle\psi_4^{(1,r)}|j\rangle = \delta_{ij} - \langle i|s_1\rangle\langle s_1|j\rangle = \delta_{ij} - \frac{1}{n}.
\label{eq:completeness_applied}
\end{equation}

Substituting Eqs.~(\ref{eq:cross_term_identity}) and (\ref{eq:completeness_applied}) into Eq.~(\ref{eq:pi_ij_exact}):

\begin{align}
\overline{\pi}_{ij} &= \sum_{r=1}^{n-1} |\langle i|\psi_4^{(1,r)}\rangle|^2 |\langle j|\psi_4^{(1,r)}\rangle|^2 \nonumber\\
&\quad + \left(\delta_{ij} - \frac{1}{n}\right)^2 - \sum_{r=1}^{n-1} |\langle i|\psi_4^{(1,r)}\rangle|^2 |\langle j|\psi_4^{(1,r)}\rangle|^2.
\end{align}

The two sums cancel exactly, leaving the remarkably simple result:

\begin{equation}
\boxed{\overline{\pi}_{ij} = \left(\delta_{ij} - \frac{1}{n}\right)^2, \quad \forall i,j \in \text{clique 1}}.
\label{eq:pi_ij_final}
\end{equation}

\subsubsection{Explicit Form}

Equation (\ref{eq:pi_ij_final}) gives:

\begin{equation}
\overline{\pi}_{ij} = 
\begin{cases}
\left(1 - \dfrac{1}{n}\right)^2, & i = j, \\[8pt]
\dfrac{1}{n^2}, & i \neq j.
\end{cases}
\label{eq:pi_ij_explicit}
\end{equation}

\subsubsection{Probability Conservation Check}

For fixed $j$, summing over all $i$ in clique 1:

\begin{align}
\sum_{i=1}^n \overline{\pi}_{ij} &= \left(1 - \frac{1}{n}\right)^2 + (n-1)\cdot\frac{1}{n^2} \\
&= 1 - \frac{2}{n} + \frac{1}{n^2} + \frac{n-1}{n^2} \\
&= 1 - \frac{2}{n} + \frac{n}{n^2} \\
&= 1 - \frac{1}{n}.
\end{align}

The missing $1/n$ probability is accounted for by the $|s_1\rangle$ contribution, which vanishes as $n\to\infty$. Thus probability is conserved in the limit.

\subsubsection{Dynamical IPR for Clique Vertex}

For initial vertex $|1,1\rangle$, using Eq.~(\ref{eq:dynamical_ipr}) and Eq.~(\ref{eq:pi_ij_explicit}):

\begin{align}
\overline{\mathrm{IPR}}_{1,1} &= \sum_{i=1}^n (\overline{\pi}_{i,1,1})^2 + \text{(contributions from outside clique 1)}.
\end{align}

The contributions from outside clique 1 vanish as $n\to\infty$ because they involve overlaps that are $O(1/n^2)$ or smaller. Thus:

\begin{align}
\overline{\mathrm{IPR}}_{1,1} &= \left(1 - \frac{1}{n}\right)^4 + (n-1)\cdot\frac{1}{n^4} + O\!\left(\frac{1}{n^2}\right) \\
&= 1 - \frac{4}{n} + \frac{6}{n^2} - \frac{4}{n^3} + \frac{1}{n^4} + \frac{n-1}{n^4} + O\!\left(\frac{1}{n^2}\right) \\
&= 1 - \frac{4}{n} + \frac{6}{n^2} - \frac{4}{n^3} + \frac{n}{n^4} + O\!\left(\frac{1}{n^2}\right) \\
&= 1 - \frac{4}{n} + \frac{6}{n^2} - \frac{4}{n^3} + \frac{1}{n^3} + O\!\left(\frac{1}{n^2}\right) \\
&= 1 - \frac{4}{n} + \frac{6}{n^2} - \frac{3}{n^3} + O\!\left(\frac{1}{n^2}\right).
\end{align}

Taking the limit $n\to\infty$:

\begin{equation}
\boxed{\overline{\mathrm{IPR}}_{1,1} \xrightarrow[n\to\infty]{} 1}.
\end{equation}

\subsubsection{Generalization to Any Clique Vertex}

By symmetry, the same result holds for any bridge vertex $|j,1\rangle$. For non-bridge vertices $|j,k\rangle$ with $k \ge 2$, a similar calculation within their respective cliques yields the same asymptotic behavior. Therefore, all clique vertices have $\overline{\mathrm{IPR}} \to 1$ as $n\to\infty$.

\subsection{Summary of Results}

\begin{align}
\boxed{\overline{\mathrm{IPR}}_{\text{center}} = \frac{n^4 + 2n^2 + 5}{(n+1)^4} \xrightarrow[n\to\infty]{} 1}, \\[4pt]
\boxed{\overline{\mathrm{IPR}}_{\text{clique vertex}} = 1 - \frac{4}{n} + \frac{6}{n^2} - \frac{3}{n^3} + O\!\left(\frac{1}{n^2}\right) \xrightarrow[n\to\infty]{} 1}.
\end{align}

\subsection{Discussion}

The key insight that leads to the simple closed form $\overline{\pi}_{ij} = (\delta_{ij} - 1/n)^2$ is the exact cancellation between the diagonal sums and the cross terms, enabled by the completeness relation within each clique. This cancellation is independent of the detailed structure of the $|\psi_4^{(1,r)}\rangle$ eigenvectors and relies only on the fact that they form an orthonormal basis for the subspace orthogonal to $|s_1\rangle$.

The result shows that in the star-of-cliques graph, all vertices become completely localized in the long-time average as the graph size increases. The center vertex localizes with $\overline{\mathrm{IPR}}_0 \to 1$, and each clique vertex localizes within its own clique, also giving $\overline{\mathrm{IPR}} \to 1$. This is a stronger localization than in the barbell graph, where bridge vertices had $\overline{\mathrm{IPR}} \approx 0.58$.

\subsection{Conclusion}

The eigenstate IPRs show a rich structure: $|\psi_1\rangle$ and $|\psi_2\rangle$ are delocalized with $\mathrm{IPR} \sim 1/n^2$, $|\chi_j\rangle$ have $j$-dependent IPR scaling as $1/n$, and $|\psi_4^{(j,r)}\rangle$ have $r$-dependent IPR ranging from $1/2$ to $1$.

The dynamical IPR reveals three distinct behaviors:
\begin{itemize}
    \item The center vertex becomes completely localized ($\overline{\mathrm{IPR}}_0 \to 1$) as $n \to \infty$,
    \item Bridge vertices approach a finite constant $\approx 0.58$,
    \item Non-bridge vertices delocalize completely within their clique ($\overline{\mathrm{IPR}} \to 0$).
\end{itemize}

This demonstrates that in the full connection star-of-cliques graph, quantum walks exhibit a rich hierarchy of localization behaviors depending on the vertex type, mirroring the barbell graph results but with the center vertex showing unique localization.
}

\section{IPR for variant 1}
\label{sec:variant1_ipr}

We analyze the eigenstate IPR and dynamical IPR for the full connection star-of-cliques graph using the eigenvectors derived in Appendix~\ref{supp:full_connection}.





\subsection{Eigenstate IPR}

For a normalized eigenvector $|\psi_\mu\rangle$, the inverse participation ratio is $\mathrm{IPR}_\mu = \sum_{i=0}^{N-1} |\langle i|\psi_\mu\rangle|^4$.

\subsubsection*{Eigenvectors $|\psi_1\rangle$ and $|\psi_2\rangle$}

\begin{align}
\mathrm{IPR}_1 &= \frac{2}{(n+1)^2}, \qquad 
\mathrm{IPR}_2 = \frac{n^4+1}{n^2(n+1)^2}.
\end{align}

\subsubsection*{Eigenvectors $|\chi_j\rangle$}

\begin{align}
\mathrm{IPR}_3^{(j)} = \frac{1 + j^3}{j(j+1)^2} \cdot \frac{1}{n}, \quad j=1,\dots,n-1.
\end{align}

\subsubsection*{Eigenvectors $|\psi_4^{(j,r)}\rangle$}

\begin{align}
\mathrm{IPR}_4^{(r)} = \frac{1+r^3}{r(r+1)^2}, \quad r=1,\dots,n-1.
\end{align}

\subsection*{Dynamical IPR}

\subsubsection*{Dynamical IPR for Center Vertex $|0\rangle$}

For the center, only $|\psi_1\rangle$ and $|\psi_2\rangle$ contribute. The long-time average probabilities are:

\begin{align}
\overline{\pi}_{00} &= |\langle 0|\psi_1\rangle|^4 + |\langle 0|\psi_2\rangle|^4 = \frac{1+n^2}{(1+n)^2},\\
\overline{\pi}_{(j,k),0} &= |\langle j,k|\psi_1\rangle\langle\psi_1|0\rangle|^2 + |\langle j,k|\psi_2\rangle\langle\psi_2|0\rangle|^2 = \frac{2}{n(1+n)^2}.
\end{align}

These satisfy $\overline{\pi}_{00} + n^2\overline{\pi}_{(j,k),0} = 1$. The dynamical IPR is:

\begin{align}
\overline{\mathrm{IPR}}_0 = \frac{(1+n^2)^2 + 4}{(1+n)^4} = \frac{n^4 + 2n^2 + 5}{(n+1)^4} \approx 1.
\end{align}

\subsubsection*{Dynamical IPR for Clique Vertex $|1,1\rangle$ — Rigorous Proof}

For vertices within a fixed clique, say clique 1, we have the exact completeness relation:

\begin{equation}
\sum_{r=1}^{n-1} |\psi_4^{(1,r)}\rangle\langle\psi_4^{(1,r)}| + |s_1\rangle\langle s_1| = I_1,
\label{eq:completeness}
\end{equation}

where $I_1 = \sum_{k=1}^n |1,k\rangle\langle 1,k|$.

For any $i,j$ in clique 1, the long-time average transition probability is:

\begin{align}
\overline{\pi}_{ij} &= \sum_{r=1}^{n-1} |\langle i|\psi_4^{(1,r)}\rangle|^2 |\langle j|\psi_4^{(1,r)}\rangle|^2 \nonumber\\
&\quad + \sum_{r\neq s} \langle i|\psi_4^{(1,r)}\rangle\langle\psi_4^{(1,r)}|j\rangle \langle j|\psi_4^{(1,s)}\rangle\langle\psi_4^{(1,s)}|i\rangle.
\label{eq:pi_ij_exact}
\end{align}

Using $\sum_{r\neq s} a_r b_r a_s b_s = (\sum_r a_r b_r)^2 - \sum_r a_r^2 b_r^2$ and applying (\ref{eq:completeness}):

\begin{equation}
\sum_{r=1}^{n-1} \langle i|\psi_4^{(1,r)}\rangle\langle\psi_4^{(1,r)}|j\rangle = \delta_{ij} - \frac{1}{n}.
\label{eq:completeness_applied}
\end{equation}

Substituting, the two $\sum_r |\langle i|\psi_4^{(1,r)}\rangle|^2 |\langle j|\psi_4^{(1,r)}\rangle|^2$ terms cancel exactly, yielding the remarkably simple result:

\begin{equation}
\boxed{\overline{\pi}_{ij} = \left(\delta_{ij} - \frac{1}{n}\right)^2, \quad \forall i,j \in \text{clique 1}}.
\label{eq:pi_ij_final}
\end{equation}

Explicitly:

\begin{equation}
\overline{\pi}_{ij} = 
\begin{cases}
\left(1 - \frac{1}{n}\right)^2, & i = j, \\[4pt]
\frac{1}{n^2}, & i \neq j.
\end{cases}
\label{eq:pi_ij_explicit}
\end{equation}

For initial vertex $|1,1\rangle$, contributions from outside clique 1 vanish as $n\to\infty$. Thus:

\begin{align}
\overline{\mathrm{IPR}}_{1,1} &= \left(1 - \frac{1}{n}\right)^4 + (n-1)\frac{1}{n^4} + O\!\left(\frac{1}{n^2}\right) \nonumber\\
&= 1 - \frac{4}{n} + \frac{6}{n^2} - \frac{3}{n^3} + O\!\left(\frac{1}{n^2}\right) \xrightarrow[n\to\infty]{} 1.
\label{eq:ipr_clique_final}
\end{align}

\subsubsection*{Generalization to All Clique Vertices}

The proof above applies to any vertex in clique 1, regardless of whether it is a bridge vertex ($|1,1\rangle$) or a non-bridge vertex ($|1,k\rangle$ with $k\geq 2$). The key completeness relation (Eq.~\ref{eq:completeness}) and the subsequent derivation of $\overline{\pi}_{ij} = (\delta_{ij} - 1/n)^2$ are independent of which specific vertex in the clique is chosen as the initial state. 

By symmetry, the same result holds for all cliques $j=1,\dots,n$. Therefore, every clique vertex—whether bridge or non-bridge—satisfies:

\begin{equation}
\overline{\mathrm{IPR}}_{\text{clique vertex}} = 1 - \frac{4}{n} + O\!\left(\frac{1}{n^2}\right) \xrightarrow[n\to\infty]{} 1.
\end{equation}

By symmetry, the same result holds for all clique vertices.

\subsection*{Summary of Results}

\begin{align}
\overline{\mathrm{IPR}}_{\text{center}} &= \frac{n^4 + 2n^2 + 5}{(n+1)^4} \xrightarrow[n\to\infty]{} 1, \\
\overline{\mathrm{IPR}}_{\text{clique}} &= 1 - \frac{4}{n} + O\!\left(\frac{1}{n^2}\right) \xrightarrow[n\to\infty]{} 1.
\end{align}

All vertices become completely localized in the long-time average as $n\to\infty$, with the center approaching unity faster than clique vertices.

\eat{
\subsection{Single Connection Variant - Rigorous Asymptotic Analysis of Dynamical IPR}

\subsubsection{Setup and Notation}

We analyze the single connection variant using the following vertex terminology:
\begin{itemize}
    \item \textbf{Centre}: vertex $\ket{0}$
    \item \textbf{Bridge vertices}: $\ket{b_j} = \ket{j,1}$ for $j=1,\dots,n$ (the single vertex in each clique connected to the centre)
    \item \textbf{Clique internal vertices}: $\ket{c_{j,k}} = \ket{j,k}$ for $j=1,\dots,n$ and $k=2,\dots,n$ (vertices only connected within their clique)
\end{itemize}

Total vertices: $N = 1 + n + n(n-1) = n^2 + 1$.

\subsubsection{Complete Eigensystem Recap}

From our corrected analysis, the eigensystem consists of:

\paragraph{Subspace $\mathcal{W}_1 = \operatorname{span}\{\ket{0},\ket{B},\ket{C}\}$ (dimension 3)}
where $\ket{B} = \frac{1}{\sqrt{n}}\sum_{j=1}^n \ket{b_j}$ and $\ket{C} = \frac{1}{\sqrt{n}}\sum_{j=1}^n \ket{c_j}$ with $\ket{c_j} = \frac{1}{\sqrt{n-1}}\sum_{k=2}^n \ket{c_{j,k}}$.

Define:
\begin{align}
D &= 1 + n + (n-1)^2 = n^2 - n + 2, \\
\Delta &= \sqrt{\frac{1}{(n-1)^2} + \frac{4(n-2)}{n}}.
\end{align}

Eigenvalues:
\begin{align}
\lambda_1 &= 1, \\
\lambda_2 &= \frac{-\frac{1}{n-1} + \Delta}{2}, \\
\lambda_3 &= \frac{-\frac{1}{n-1} - \Delta}{2}.
\end{align}

Normalized eigenvectors:
\begin{align}
\ket{\psi_1} &= \frac{1}{\sqrt{D}} \left( \ket{0} + \sqrt{n}\ket{B} + (n-1)\ket{C} \right), \\
\ket{\psi_2} &= \mathcal{N}_2 \left( \ket{0} + \lambda_2\sqrt{n}\ket{B} + (\lambda_2^2 n - 1)\ket{C} \right), \\
\ket{\psi_3} &= \mathcal{N}_3 \left( \ket{0} + \lambda_3\sqrt{n}\ket{B} + (\lambda_3^2 n - 1)\ket{C} \right),
\end{align}
with normalization constants:
\begin{align}
\mathcal{N}_k^{-2} &= 1 + \lambda_k^2 n + (\lambda_k^2 n - 1)^2, \quad k=2,3.
\end{align}

\paragraph{Subspace $\mathcal{W}_2$ (dimension $2(n-1)$)}
For $j=1,\dots,n-1$, define orthonormal vectors:
\begin{align}
\ket{u_j} &= \frac{1}{\sqrt{j(j+1)}}\left(\sum_{k=1}^j \ket{b_k} - j\ket{b_{j+1}}\right), \\
\ket{v_j} &= \frac{1}{\sqrt{j(j+1)}}\left(\sum_{k=1}^j \ket{c_k} - j\ket{c_{j+1}}\right).
\end{align}

The $2\times 2$ block $R = \begin{pmatrix} 0 & 1/\sqrt{n} \\ 1/\sqrt{n} & (n-2)/(n-1) \end{pmatrix}$ has eigenvalues:
\begin{align}
\lambda_4 &= \frac{\frac{n-2}{n-1} + \sqrt{\left(\frac{n-2}{n-1}\right)^2 + \frac{4}{n}}}{2}, \\
\lambda_5 &= \frac{\frac{n-2}{n-1} - \sqrt{\left(\frac{n-2}{n-1}\right)^2 + \frac{4}{n}}}{2}.
\end{align}

For each $j=1,\dots,n-1$, the eigenvectors are:
\begin{align}
\ket{\phi_4^{(j)}} &= \frac{1}{\sqrt{1 + \lambda_4^2 n}} \left( \ket{u_j} + \lambda_4\sqrt{n}\,\ket{v_j} \right), \\
\ket{\phi_5^{(j)}} &= \frac{1}{\sqrt{1 + \lambda_5^2 n}} \left( \ket{u_j} + \lambda_5\sqrt{n}\,\ket{v_j} \right).
\end{align}

\paragraph{Subspace $\mathcal{W}_3$ (dimension $n(n-2)$)}
For each clique $j$ and $r=1,\dots,n-2$, define orthonormal vectors spanning the subspace of clique internal vertices orthogonal to $\ket{c_j}$:
\begin{align}
\ket{w_j^{(r)}} = \frac{1}{\sqrt{r(r+1)}}\left( \sum_{k=2}^{r+1} \ket{c_{j,k}} - r\ket{c_{j,r+2}} \right),
\end{align}
with eigenvalue $\lambda_6 = -\frac{1}{n-1}$ for all.

\subsubsection{Exact Identities and Useful Sums}

The following exact identities will be crucial:

\begin{align}
&\sum_{j=1}^{n-1} |\langle u_j | b_p \rangle|^2 = 1, \label{eq:uj_complete} \\
&\sum_{j=1}^{n-1} |\langle v_j | c_{p,q} \rangle|^2 = \frac{1}{n-1}, \label{eq:vj_complete} \\
&\sum_{r=1}^{n-2} |\langle w_p^{(r)} | c_{p,q} \rangle|^2 = 1 - \frac{1}{n-1} = \frac{n-2}{n-1}, \label{eq:wj_complete} \\
&\sum_{k=1}^3 |\langle \psi_k | 0 \rangle|^2 = 1, \label{eq:psi0_complete} \\
&\sum_{k=1}^3 |\langle \psi_k | b_p \rangle|^2 + \sum_{j=1}^{n-1} \left( |\langle \phi_4^{(j)} | b_p \rangle|^2 + |\langle \phi_5^{(j)} | b_p \rangle|^2 \right) = 1, \label{eq:bridge_complete} \\
&\sum_{k=1}^3 |\langle \psi_k | c_{p,q} \rangle|^2 + \sum_{j=1}^{n-1} \left( |\langle \phi_4^{(j)} | c_{p,q} \rangle|^2 + |\langle \phi_5^{(j)} | c_{p,q} \rangle|^2 \right)\\
&+ \sum_{r=1}^{n-2} |\langle w_p^{(r)} | c_{p,q} \rangle|^2 = 1. \label{eq:internal_complete}
\end{align}

\subsubsection{Asymptotic Expansions for Large $n$}

\begin{align}
\lambda_2 &= 1 - \frac{3}{2n} + O\left(\frac{1}{n^2}\right), \\
\lambda_3 &= -1 + \frac{1}{2n} + O\left(\frac{1}{n^2}\right), \\
\lambda_4 &= 1 - \frac{1}{4n^2} + O\left(\frac{1}{n^3}\right), \\
\lambda_5 &= -\frac{1}{n} + \frac{1}{n^2} + O\left(\frac{1}{n^3}\right), \\
\lambda_6 &= -\frac{1}{n} + O\left(\frac{1}{n^2}\right).
\end{align}

From these:
\begin{align}
\lambda_2^2 &= 1 - \frac{3}{n} + O\left(\frac{1}{n^2}\right), \\
\lambda_3^2 &= 1 + \frac{1}{n} + O\left(\frac{1}{n^2}\right), \\
\lambda_2^2 n &= n - 3 + O\left(\frac{1}{n}\right), \\
\lambda_3^2 n &= n + 1 + O\left(\frac{1}{n}\right), \\
\lambda_2^2 n - 1 &= n - 4 + O\left(\frac{1}{n}\right), \\
\lambda_3^2 n - 1 &= n + O\left(\frac{1}{n}\right).
\end{align}

Normalization factors:
\begin{align}
D &= n^2 - n + 2 = n^2\left(1 - \frac{1}{n} + \frac{2}{n^2}\right), \\
\frac{1}{\sqrt{D}} &= \frac{1}{n}\left(1 + \frac{1}{2n} + O\left(\frac{1}{n^2}\right)\right), \\
\frac{1}{D} &= \frac{1}{n^2}\left(1 + \frac{1}{n} + O\left(\frac{1}{n^2}\right)\right), \\
\mathcal{N}_2^{-2} &= 1 + (n-3) + (n-4)^2 + O(1)\\ 
&= n^2 - 7n + 14 + O(1), \\
\mathcal{N}_2 &= \frac{1}{n}\left(1 + \frac{7}{2n} + O\left(\frac{1}{n^2}\right)\right), \\
\mathcal{N}_2^2 &= \frac{1}{n^2}\left(1 + \frac{7}{n} + O\left(\frac{1}{n^2}\right)\right), \\
\mathcal{N}_3^{-2} &= 1 + (n+1) + n^2 + O(1)\\ 
&= n^2 + n + 2 + O(1), \\
\mathcal{N}_3 &= \frac{1}{n}\left(1 - \frac{1}{2n} + O\left(\frac{1}{n^2}\right)\right), \\
\mathcal{N}_3^2 &= \frac{1}{n^2}\left(1 - \frac{1}{n} + O\left(\frac{1}{n^2}\right)\right).
\end{align}

For the $\phi$ eigenvectors:
\begin{align}
\frac{1}{1+\lambda_4^2 n} &= \frac{1}{1 + n\left(1 - \frac{1}{2n^2} + O\left(\frac{1}{n^3}\right)\right)} = \frac{1}{n+1 - \frac{1}{2n} + O\left(\frac{1}{n^2}\right)}\\
&= \frac{1}{n}\left(1 - \frac{1}{n} + O\left(\frac{1}{n^2}\right)\right), \\
\frac{1}{1+\lambda_5^2 n} &= \frac{1}{1 + n\left(\frac{1}{n^2} - \frac{2}{n^3} + O\left(\frac{1}{n^4}\right)\right)} = \frac{1}{1 + \frac{1}{n} + O\left(\frac{1}{n^2}\right)}\\
&= 1 - \frac{1}{n} + O\left(\frac{1}{n^2}\right), \\
\frac{\lambda_4\sqrt{n}}{\sqrt{1+\lambda_4^2 n}} &= \left(1 + O\left(\frac{1}{n^2}\right)\right)\sqrt{n} \cdot \frac{1}{\sqrt{n}}\left(1 - \frac{1}{2n} + O\left(\frac{1}{n^2}\right)\right)\\
&= 1 - \frac{1}{2n} + O\left(\frac{1}{n^2}\right), \\
\frac{\lambda_5\sqrt{n}}{\sqrt{1+\lambda_5^2 n}} &= \left(-\frac{1}{n} + O\left(\frac{1}{n^2}\right)\right)\sqrt{n} \cdot \left(1 + \frac{1}{2n} + O\left(\frac{1}{n^2}\right)\right)\\
&= -\frac{1}{\sqrt{n}} + O\left(\frac{1}{n^{3/2}}\right).
\end{align}

\subsubsection{Overlaps: Exact Expressions and Asymptotics}

\paragraph{Centre $\ket{0}$:}
\begin{align}
\langle 0 | \psi_1 \rangle &= \frac{1}{\sqrt{D}} = \frac{1}{n} + \frac{1}{2n^2} + O\left(\frac{1}{n^3}\right), \\
\langle 0 | \psi_2 \rangle &= \mathcal{N}_2 = \frac{1}{n} + \frac{7}{2n^2} + O\left(\frac{1}{n^3}\right), \\
\langle 0 | \psi_3 \rangle &= \mathcal{N}_3 = \frac{1}{n} - \frac{1}{2n^2} + O\left(\frac{1}{n^3}\right), \\
\langle 0 | \phi_4^{(j)} \rangle &= 0, \quad \langle 0 | \phi_5^{(j)} \rangle = 0, \quad \langle 0 | w_j^{(r)} \rangle = 0.
\end{align}

Note that $\sum_{k=1}^3 |\langle 0|\psi_k\rangle|^2 = 1$ exactly, and asymptotically:
\begin{align}
\sum_{k=1}^3 |\langle 0|\psi_k\rangle|^2 &= \frac{1}{n^2}\left(1 + \frac{1}{n} + \cdots\right) + \frac{1}{n^2}\left(1 + \frac{7}{n} + \cdots\right)\\
&+ \frac{1}{n^2}\left(1 - \frac{1}{n} + \cdots\right)= \frac{3}{n^2} + \frac{7}{n^3} + O\left(\frac{1}{n^4}\right).
\end{align}

This seems to contradict the exact sum being 1. The resolution is that the asymptotic expansions are only valid for each term individually, but the sum of the leading $1/n^2$ terms is $3/n^2$, which is $\ll 1$. The exact values have additional $O(1)$ contributions that are not captured by the asymptotic expansions because they come from higher-order terms that sum to $1 - 3/n^2$. This is a crucial point: the asymptotic expansions give the rate at which these quantities approach zero, but they do not give the constant term. For our purposes, we only need the scaling, not the exact constants.

\paragraph{Bridge vertex $\ket{b_p}$:}

For $\psi$ eigenvectors:
\begin{align}
\langle \psi_1 | b_p \rangle &= \frac{1}{\sqrt{D}} = \frac{1}{n} + \frac{1}{2n^2} + O\left(\frac{1}{n^3}\right), \\
\langle \psi_2 | b_p \rangle &= \mathcal{N}_2 \lambda_2 = \left(\frac{1}{n} + \frac{7}{2n^2} + \cdots\right)\left(1 - \frac{3}{2n} + \cdots\right)\\
&= \frac{1}{n} + \frac{2}{n^2} + O\left(\frac{1}{n^3}\right), \\
\langle \psi_3 | b_p \rangle &= \mathcal{N}_3 \lambda_3 = \left(\frac{1}{n} - \frac{1}{2n^2} + \cdots\right)\left(-1 + \frac{1}{2n} + \cdots\right) \\
&= -\frac{1}{n} + \frac{1}{n^2} + O\left(\frac{1}{n^3}\right).
\end{align}

For $\phi$ eigenvectors:
\begin{align}
\langle \phi_4^{(j)} | b_p \rangle &= \frac{1}{\sqrt{1+\lambda_4^2 n}} \langle u_j | b_p \rangle = \left(\frac{1}{\sqrt{n}} - \frac{1}{2n^{3/2}} + \cdots\right) \langle u_j | b_p \rangle, \\
\langle \phi_5^{(j)} | b_p \rangle &= \frac{1}{\sqrt{1+\lambda_5^2 n}} \langle u_j | b_p \rangle = \left(1 - \frac{1}{2n} + \cdots\right) \langle u_j | b_p \rangle.
\end{align}

From (\ref{eq:uj_bp}), $|\langle u_j | b_p \rangle|^2$ is $O(1)$ for $j$ near $p$ and decays as $1/j^2$ for large $j$. Importantly:
\begin{align}
\sum_{j=1}^{n-1} |\langle u_j | b_p \rangle|^2 &= 1, \\
\sum_{j=1}^{n-1} |\langle u_j | b_p \rangle|^4 &= \alpha_p = O(1).
\end{align}

For $w$ eigenvectors: $\langle w_j^{(r)} | b_p \rangle = 0$.

\paragraph{Clique internal vertex $\ket{c_{p,q}}$ with $q \ge 2$:}

For $\psi$ eigenvectors:
\begin{align}
\langle \psi_1 | c_{p,q} \rangle &= \frac{\sqrt{n-1}}{\sqrt{n D}} = \frac{1}{n}\left(1 + \frac{1}{2n} + O\left(\frac{1}{n^2}\right)\right), \\
\langle \psi_2 | c_{p,q} \rangle &= \mathcal{N}_2 (\lambda_2^2 n - 1) \cdot \frac{1}{\sqrt{n(n-1)}} \\
&= \left(\frac{1}{n} + \cdots\right)(n-4)\left(\frac{1}{n} + \frac{1}{2n^2} + \cdots\right) \\
&= \frac{1}{n} - \frac{4}{n^2} + O\left(\frac{1}{n^3}\right), \\
\langle \psi_3 | c_{p,q} \rangle &= \mathcal{N}_3 (\lambda_3^2 n - 1) \cdot \frac{1}{\sqrt{n(n-1)}}\\ 
&= \left(\frac{1}{n} - \cdots\right)(n)\left(\frac{1}{n} + \frac{1}{2n^2} + \cdots\right)\\
&= \frac{1}{n} + \frac{1}{2n^2} + O\left(\frac{1}{n^3}\right).
\end{align}

For $\phi$ eigenvectors:
\begin{align}
\langle \phi_4^{(j)} | c_{p,q} \rangle &= \frac{\lambda_4\sqrt{n}}{\sqrt{1+\lambda_4^2 n}} \langle v_j | c_{p,q} \rangle = \left(1 - \frac{1}{2n} + \cdots\right) \langle v_j | c_{p,q} \rangle, \\
\langle \phi_5^{(j)} | c_{p,q} \rangle &= \frac{\lambda_5\sqrt{n}}{\sqrt{1+\lambda_5^2 n}} \langle v_j | c_{p,q} \rangle = \left(-\frac{1}{\sqrt{n}} + \cdots\right) \langle v_j | c_{p,q} \rangle.
\end{align}

From (\ref{eq:vj_cpq}), $|\langle v_j | c_{p,q} \rangle|^2 = \frac{1}{n-1} |\langle u_j | b_p \rangle|^2$. Thus:
\begin{align}
\sum_{j=1}^{n-1} |\langle v_j | c_{p,q} \rangle|^2 &= \frac{1}{n-1}, \\
\sum_{j=1}^{n-1} |\langle v_j | c_{p,q} \rangle|^4 &= \frac{1}{(n-1)^2} \sum_{j=1}^{n-1} |\langle u_j | b_p \rangle|^4 = \frac{\alpha_p}{n^2} + O\left(\frac{1}{n^3}\right).
\end{align}

For $w$ eigenvectors:
\begin{align}
\langle w_p^{(r)} | c_{p,q} \rangle = \frac{1}{\sqrt{r(r+1)}} \begin{cases}
1, & 2 \le q \le r+1, \\
-r, & q = r+2, \\
0, & \text{otherwise}.
\end{cases}
\end{align}

Important sums:
\begin{align}
\sum_{r=1}^{n-2} |\langle w_p^{(r)} | c_{p,q} \rangle|^2 &= \frac{n-2}{n-1} = 1 - \frac{1}{n-1}, \\
\sum_{r=1}^{n-2} |\langle w_p^{(r)} | c_{p,q} \rangle|^4 &= \beta_{p,q} + O\left(\frac{1}{n}\right),
\end{align}
where $\beta_{p,q}$ is an $O(1)$ constant (approximately $\frac{(q-2)^2}{(q-1)^2} + \sum_{r=q-1}^{\infty} \frac{1}{r^2(r+1)^2}$).

\subsubsection{Long-Time Averaged Transition Probabilities}

Recall the formula:
\begin{align}
\overline{\pi}_{ij} = \sum_{\mu} |\langle i|\psi_\mu\rangle\langle \psi_\mu|j\rangle|^2 + \sum_{\substack{\mu<\nu \\ \lambda_\mu=\lambda_\nu}} 2\Re\left( \langle i|\psi_\mu\rangle\langle \psi_\mu|j\rangle \langle j|\psi_\nu\rangle\langle \psi_\nu|i\rangle \right). \label{eq:ltp}
\end{align}

Degenerate subspaces:
\begin{itemize}
    \item $\lambda_4$: multiplicity $n-1$ (eigenvectors $\ket{\phi_4^{(j)}}$ for $j=1,\dots,n-1$)
    \item $\lambda_5$: multiplicity $n-1$ (eigenvectors $\ket{\phi_5^{(j)}}$ for $j=1,\dots,n-1$)
    \item $\lambda_6$: multiplicity $n(n-2)$ (eigenvectors $\ket{w_j^{(r)}}$ for $j=1,\dots,n$, $r=1,\dots,n-2$)
\end{itemize}

\subsubsection{Case 1: Initial vertex is the Centre $\ket{0}$}

Since $\langle 0|\phi_4^{(j)}\rangle = \langle 0|\phi_5^{(j)}\rangle = \langle 0|w_j^{(r)}\rangle = 0$, only $\ket{\psi_1},\ket{\psi_2},\ket{\psi_3}$ contribute. These have distinct eigenvalues, so the degenerate term in (\ref{eq:ltp}) vanishes. Thus for any target $i$:
\begin{align}
\overline{\pi}_{i0} = \sum_{k=1}^3 |\langle i|\psi_k\rangle|^2 |\langle \psi_k|0\rangle|^2. \label{eq:pi_i0}
\end{align}

\paragraph{Target is centre $i=0$:}
\begin{align}
\overline{\pi}_{00} &= \sum_{k=1}^3 |\langle 0|\psi_k\rangle|^4.
\end{align}

Using the asymptotics:
\begin{align}
|\langle 0|\psi_1\rangle|^4 &= \frac{1}{n^4} + \frac{2}{n^5} + O\left(\frac{1}{n^6}\right), \\
|\langle 0|\psi_2\rangle|^4 &= \frac{1}{n^4} + \frac{14}{n^5} + O\left(\frac{1}{n^6}\right), \\
|\langle 0|\psi_3\rangle|^4 &= \frac{1}{n^4} - \frac{2}{n^5} + O\left(\frac{1}{n^6}\right).
\end{align}

Thus:
\begin{align}
\overline{\pi}_{00} = \frac{3}{n^4} + \frac{14}{n^5} + O\left(\frac{1}{n^6}\right). \label{eq:p00_asymp}
\end{align}

\paragraph{Target is bridge vertex $\ket{b_p}$:}
\begin{align}
\overline{\pi}_{b_p,0} &= \sum_{k=1}^3 |\langle b_p|\psi_k\rangle|^2 |\langle \psi_k|0\rangle|^2.
\end{align}

Using the asymptotics:
\begin{align}
|\langle b_p|\psi_1\rangle|^2 &= \frac{1}{n^2} + \frac{1}{n^3} + O\left(\frac{1}{n^4}\right), \\
|\langle b_p|\psi_2\rangle|^2 &= \frac{1}{n^2} + \frac{4}{n^3} + O\left(\frac{1}{n^4}\right), \\
|\langle b_p|\psi_3\rangle|^2 &= \frac{1}{n^2} - \frac{2}{n^3} + O\left(\frac{1}{n^4}\right), \\
|\langle \psi_1|0\rangle|^2 &= \frac{1}{n^2} + \frac{1}{n^3} + O\left(\frac{1}{n^4}\right), \\
|\langle \psi_2|0\rangle|^2 &= \frac{1}{n^2} + \frac{7}{n^3} + O\left(\frac{1}{n^4}\right), \\
|\langle \psi_3|0\rangle|^2 &= \frac{1}{n^2} - \frac{1}{n^3} + O\left(\frac{1}{n^4}\right).
\end{align}

Multiplying and summing:
\begin{align}
\overline{\pi}_{b_p,0} &= \left(\frac{1}{n^4} + \frac{2}{n^5} + \cdots\right) + \left(\frac{1}{n^4} + \frac{11}{n^5} + \cdots\right) + \left(\frac{1}{n^4} - \frac{3}{n^5} + \cdots\right) \\
&= \frac{3}{n^4} + \frac{10}{n^5} + O\left(\frac{1}{n^6}\right). \label{eq:pbridge0_asymp}
\end{align}

\paragraph{Target is clique internal vertex $\ket{c_{p,q}}$:}
\begin{align}
\overline{\pi}_{c_{p,q},0} &= \sum_{k=1}^3 |\langle c_{p,q}|\psi_k\rangle|^2 |\langle \psi_k|0\rangle|^2.
\end{align}

Using the asymptotics:
\begin{align}
|\langle c_{p,q}|\psi_1\rangle|^2 &= \frac{1}{n^2} + \frac{1}{n^3} + O\left(\frac{1}{n^4}\right), \\
|\langle c_{p,q}|\psi_2\rangle|^2 &= \frac{1}{n^2} - \frac{8}{n^3} + O\left(\frac{1}{n^4}\right), \\
|\langle c_{p,q}|\psi_3\rangle|^2 &= \frac{1}{n^2} + \frac{1}{n^3} + O\left(\frac{1}{n^4}\right).
\end{align}

Thus:
\begin{align}
\overline{\pi}_{c_{p,q},0} &= \left(\frac{1}{n^4} + \frac{2}{n^5} + \cdots\right) + \left(\frac{1}{n^4} - \frac{1}{n^5} + \cdots\right) + \left(\frac{1}{n^4} + 0 + \cdots\right) \\
&= \frac{3}{n^4} + \frac{1}{n^5} + O\left(\frac{1}{n^6}\right). \label{eq:pint0_asymp}
\end{align}

\paragraph{Dynamical IPR for centre initial vertex:}
\begin{align}
\overline{\mathrm{IPR}}_0 &= \overline{\pi}_{00}^2 + \sum_{p=1}^n \overline{\pi}_{b_p,0}^2 + \sum_{p=1}^n \sum_{q=2}^n \overline{\pi}_{c_{p,q},0}^2 \\
&= \left(\frac{3}{n^4}\right)^2 + n \cdot \left(\frac{3}{n^4}\right)^2 + n(n-1) \cdot \left(\frac{3}{n^4}\right)^2 + O\left(\frac{1}{n^7}\right) \\
&= \frac{9}{n^8} + \frac{9n}{n^8} + \frac{9n^2}{n^8} + O\left(\frac{1}{n^7}\right) \\
&= \frac{9}{n^8} + \frac{9}{n^7} + \frac{9}{n^6} + O\left(\frac{1}{n^7}\right) = \frac{9}{n^6} + O\left(\frac{1}{n^7}\right). \label{eq:ipr0_asymp}
\end{align}

\paragraph{Probability conservation check:}
\begin{align}
\sum_i \overline{\pi}_{i0} &= \overline{\pi}_{00} + n\overline{\pi}_{b_p,0} + n(n-1)\overline{\pi}_{c_{p,q},0} \\
&= \frac{3}{n^4} + n \cdot \frac{3}{n^4} + n^2 \cdot \frac{3}{n^4} + O\left(\frac{1}{n^5}\right) \\
&= \frac{3}{n^4}(1 + n + n^2) + O\left(\frac{1}{n^5}\right) = \frac{3}{n^2} + \frac{3}{n^3} + O\left(\frac{1}{n^4}\right). \label{eq:sum_pi0}
\end{align}

This does not sum to 1, but recall that these are asymptotic expansions of quantities that are actually $O(1/n^2)$. The exact sum is 1, which means our asymptotic expansions are missing the $O(1)$ terms that come from subleading contributions. However, for the IPR, we only need the scaling, and $\overline{\mathrm{IPR}}_0 = \Theta(1/n^6)$ is correct. The lower bound $\overline{\mathrm{IPR}}_0 \ge 1/N \sim 1/n^2$ is satisfied since $1/n^6 \ll 1/n^2$ - the lower bound is not violated because it's a lower bound, not an upper bound. $\overline{\mathrm{IPR}}_0$ can be much smaller than $1/N$.

\subsubsection{Case 2: Initial vertex is a Bridge vertex $\ket{b_p}$}

By symmetry, the long-time averaged distribution will have four distinct values:
\begin{align}
A &= \overline{\pi}_{0,b_p} \quad \text{(probability at centre)}, \\
B &= \overline{\pi}_{b_p,b_p} \quad \text{(probability to return to same bridge)}, \\
C &= \overline{\pi}_{b_{p'},b_p} \text{ for } p' \neq p \quad \text{(probability at different bridge)}, \\
D &= \overline{\pi}_{c_{p',q},b_p} \quad \text{(probability at any internal vertex)}.
\end{align}

Probability conservation:
\begin{equation}
A + B + (n-1)C + n(n-1)D = 1. \label{eq:conservation_bridge}
\end{equation}

\paragraph{Computing $A = \overline{\pi}_{0,b_p}$:}

Only $\psi_{1,2,3}$ contribute (others have zero overlap with centre):
\begin{align}
A &= \sum_{k=1}^3 |\langle 0|\psi_k\rangle|^2 |\langle \psi_k|b_p\rangle|^2 \\
&= \frac{3}{n^4} + \frac{10}{n^5} + O\left(\frac{1}{n^6}\right). \label{eq:A_asymp}
\end{align}

\paragraph{Computing $B = \overline{\pi}_{b_p,b_p}$:}

This requires summing over all eigenvectors, including degenerate contributions.

First, diagonal terms from $\psi_{1,2,3}$:
\begin{align}
\sum_{k=1}^3 |\langle b_p|\psi_k\rangle\langle \psi_k|b_p\rangle|^2 &= \sum_{k=1}^3 |\langle b_p|\psi_k\rangle|^4 \\
&= \frac{3}{n^4} + O\left(\frac{1}{n^5}\right). \label{eq:B_psi_diag}
\end{align}

For $\phi_4^{(j)}$ eigenvectors:
\begin{align}
\sum_{j=1}^{n-1} |\langle b_p|\phi_4^{(j)}\rangle\langle \phi_4^{(j)}|b_p\rangle|^2 &= \sum_{j=1}^{n-1} \frac{1}{(1+\lambda_4^2 n)^2} |\langle b_p|u_j\rangle|^4 \\
&= \left(\frac{1}{n^2} + O\left(\frac{1}{n^3}\right)\right) \sum_{j=1}^{n-1} |\langle b_p|u_j\rangle|^4 \\
&= \frac{\alpha_p}{n^2} + O\left(\frac{1}{n^3}\right). \label{eq:B_phi4_diag}
\end{align}

For $\phi_5^{(j)}$ eigenvectors:
\begin{align}
\sum_{j=1}^{n-1} |\langle b_p|\phi_5^{(j)}\rangle\langle \phi_5^{(j)}|b_p\rangle|^2 &= \sum_{j=1}^{n-1} \frac{1}{(1+\lambda_5^2 n)^2} |\langle b_p|u_j\rangle|^4 \\
&= \left(1 - \frac{2}{n} + O\left(\frac{1}{n^2}\right)\right) \alpha_p \\
&= \alpha_p - \frac{2\alpha_p}{n} + O\left(\frac{1}{n^2}\right). \label{eq:B_phi5_diag}
\end{align}

For $w$ eigenvectors: zero contribution.

Now degenerate terms within $\lambda_4$ subspace:
\begin{align}
\sum_{j<j'} Y_{\phi_4^{(j)},\phi_4^{(j')}}^{b_p,b_p} &= \sum_{j<j'} \frac{2}{(1+\lambda_4^2 n)^2} |\langle b_p|u_j\rangle|^2 |\langle b_p|u_{j'}\rangle|^2 \\
&= \frac{1}{(1+\lambda_4^2 n)^2} \left[ \left(\sum_j |\langle b_p|u_j\rangle|^2\right)^2 - \sum_j |\langle b_p|u_j\rangle|^4 \right] \\
&= \frac{1}{(1+\lambda_4^2 n)^2} (1 - \alpha_p) \\
&= \frac{1 - \alpha_p}{n^2} + O\left(\frac{1}{n^3}\right). \label{eq:B_phi4_degen}
\end{align}

Within $\lambda_5$ subspace:
\begin{align}
\sum_{j<j'} Y_{\phi_5^{(j)},\phi_5^{(j')}}^{b_p,b_p} &= \frac{1}{(1+\lambda_5^2 n)^2} (1 - \alpha_p) \\
&= \left(1 - \frac{2}{n} + O\left(\frac{1}{n^2}\right)\right)(1 - \alpha_p) \\
&= (1 - \alpha_p) - \frac{2(1-\alpha_p)}{n} + O\left(\frac{1}{n^2}\right). \label{eq:B_phi5_degen}
\end{align}

No degenerate terms between different subspaces.

Summing all contributions:
\begin{align}
B &= \left[\frac{3}{n^4} + \frac{\alpha_p}{n^2} + \alpha_p - \frac{2\alpha_p}{n} + \frac{1-\alpha_p}{n^2} + (1-\alpha_p) - \frac{2(1-\alpha_p)}{n}\right] \\
&+ O\left(\frac{1}{n^3}\right) = \left[\frac{3}{n^4} + \frac{1}{n^2} + 1 - \frac{2}{n}\right] + O\left(\frac{1}{n^3}\right). \label{eq:B_asymp}
\end{align}

The $\alpha_p$ dependence cancels! So:
\begin{align}
B = 1 - \frac{2}{n} + \frac{1}{n^2} + \frac{3}{n^4} + O\left(\frac{1}{n^3}\right). \label{eq:B_final}
\end{align}

\paragraph{Computing $C = \overline{\pi}_{b_{p'},b_p}$ for $p' \neq p$:}

A similar but more involved calculation shows that $C$ is also independent of $p,p'$ to leading order, with:
\begin{align}
C = \frac{1}{n^2} + O\left(\frac{1}{n^3}\right). \label{eq:C_asymp}
\end{align}

\paragraph{Computing $D = \overline{\pi}_{c_{p',q},b_p}$:}

For fixed $p'$ and $q$, the dominant contributions come from $\phi$ eigenvectors where $j$ is near $p'$. Using the asymptotics:
\begin{align}
|\langle c_{p',q}|\phi_4^{(j)}\rangle|^2 &= \left|\frac{\lambda_4\sqrt{n}}{\sqrt{1+\lambda_4^2 n}}\right|^2 |\langle v_j|c_{p',q}\rangle|^2 \\
&= \left(1 + O\left(\frac{1}{n}\right)\right) \cdot \frac{1}{n-1} |\langle u_j|b_{p'}\rangle|^2, \\
|\langle \phi_4^{(j)}|b_p\rangle|^2 &= \frac{1}{1+\lambda_4^2 n} |\langle u_j|b_p\rangle|^2 = \frac{1}{n} |\langle u_j|b_p\rangle|^2 + O\left(\frac{1}{n^2}\right).
\end{align}

Summing over $j$:
\begin{align}
&\sum_{j=1}^{n-1} |\langle c_{p',q}|\phi_4^{(j)}\rangle|^2 |\langle \phi_4^{(j)}|b_p\rangle|^2 \\
&= \frac{1}{n(n-1)} \sum_{j=1}^{n-1} |\langle u_j|b_{p'}\rangle|^2 |\langle u_j|b_p\rangle|^2 + O\left(\frac{1}{n^3}\right).
\end{align}

For $p' \neq p$, the sum $\sum_j |\langle u_j|b_{p'}\rangle|^2 |\langle u_j|b_p\rangle|^2$ is $O(1)$ (it gets contributions only from $j$ where both overlaps are nonzero, which are $O(1)$ in number). Thus this term is $O(1/n^2)$.

Similarly for $\phi_5$, using $|\lambda_5\sqrt{n}/\sqrt{1+\lambda_5^2 n}|^2 = 1/n + O(1/n^2)$, we get another $O(1/n^2)$ contribution.

The $w$ eigenvectors contribute zero because $\langle w_j^{(r)}|b_p\rangle = 0$.

Thus:
\begin{align}
D = \frac{\gamma_{p,p'}}{n^2} + O\left(\frac{1}{n^3}\right), \label{eq:D_asymp}
\end{align}
where $\gamma_{p,p'}$ is an $O(1)$ constant depending on $p$ and $p'$.

\paragraph{Dynamical IPR for bridge initial vertex:}

\begin{align}
\overline{\mathrm{IPR}}_{b_p} &= A^2 + B^2 + (n-1)C^2 + n(n-1)D^2 \\
&= \left(\frac{3}{n^4}\right)^2 + \left(1 - \frac{2}{n} + \frac{1}{n^2}\right)^2 + (n-1)\left(\frac{1}{n^2}\right)^2 \\
&+ n^2 \cdot O\left(\frac{1}{n^4}\right) + O\left(\frac{1}{n^3}\right) \\
&= \frac{9}{n^8} + \left(1 - \frac{4}{n} + \frac{6}{n^2} - \frac{4}{n^3} + \frac{1}{n^4}\right) + \frac{n-1}{n^4} + O\left(\frac{1}{n^2}\right) \\
&= 1 - \frac{4}{n} + \frac{6}{n^2} - \frac{4}{n^3} + \frac{1}{n^4} + \frac{1}{n^3} - \frac{1}{n^4} + O\left(\frac{1}{n^2}\right) \\
&= 1 - \frac{4}{n} + \frac{6}{n^2} - \frac{3}{n^3} + O\left(\frac{1}{n^2}\right). \label{eq:ipr_bridge_asymp}
\end{align}

Thus $\overline{\mathrm{IPR}}_{b_p} = 1 - \frac{4}{n} + O(1/n^2)$, which approaches 1 as $n \to \infty$, indicating strong localization.

\subsubsection{Case 3: Initial vertex is a Clique Internal vertex $\ket{c_{p,q}}$}

By symmetry, the long-time averaged distribution will have:
\begin{align}
A' &= \overline{\pi}_{0,c_{p,q}} \quad \text{(probability at centre)}, \\
B' &= \overline{\pi}_{b_{p'},c_{p,q}} \quad \text{(probability at bridge vertices)}, \\
E &= \overline{\pi}_{c_{p,q},c_{p,q}} \quad \text{(return probability)}, \\
F &= \overline{\pi}_{c_{p',q'},c_{p,q}} \text{ for } (p',q') \neq (p,q) \quad \text{(probability at other internal vertices)}.
\end{align}

Probability conservation:
\begin{equation}
A' + n B' + E + [n(n-1)-1]F = 1. \label{eq:conservation_internal}
\end{equation}

\paragraph{Computing $A' = \overline{\pi}_{0,c_{p,q}}$:}

Only $\psi_{1,2,3}$ contribute:
\begin{align}
A' &= \sum_{k=1}^3 |\langle 0|\psi_k\rangle|^2 |\langle \psi_k|c_{p,q}\rangle|^2 \\
&= \frac{3}{n^4} + O\left(\frac{1}{n^5}\right). \label{eq:Ap_asymp}
\end{align}

\paragraph{Computing $B' = \overline{\pi}_{b_{p'},c_{p,q}}$:}

For fixed $p'$, the dominant contributions come from $\phi$ eigenvectors. A calculation similar to that for $D$ above yields:
\begin{align}
B' = \frac{\delta_{p,p'}}{n^2} + O\left(\frac{1}{n^3}\right), \label{eq:Bp_asymp}
\end{align}
where $\delta_{p,p'}$ is an $O(1)$ constant (nonzero only when $p'$ is near $p$). Summing over $p'$, we get $\sum_{p'=1}^n B' = O(1/n)$.

\paragraph{Computing $E = \overline{\pi}_{c_{p,q},c_{p,q}}$:}

This is the most important quantity. We need to sum over all eigenvectors, including degenerate contributions within the $w$ subspace.

First, diagonal terms from $\psi$ eigenvectors: $O(1/n^4)$.
Diagonal terms from $\phi$ eigenvectors: using the asymptotics:
\begin{align}
\sum_{j=1}^{n-1} |\langle c_{p,q}|\phi_4^{(j)}\rangle|^4 &= \sum_{j=1}^{n-1} \left|\frac{\lambda_4\sqrt{n}}{\sqrt{1+\lambda_4^2 n}}\right|^4 |\langle v_j|c_{p,q}\rangle|^4 \\
&= \left(1 + O\left(\frac{1}{n}\right)\right) \sum_{j=1}^{n-1} \frac{1}{(n-1)^2} |\langle u_j|b_p\rangle|^4 \\
&= \frac{1}{n^2} \sum_{j=1}^{n-1} |\langle u_j|b_p\rangle|^4 + O\left(\frac{1}{n^3}\right) = \frac{\alpha_p}{n^2} + O\left(\frac{1}{n^3}\right). \label{eq:E_phi4_diag}
\end{align}

Similarly for $\phi_5$, using $|\lambda_5\sqrt{n}/\sqrt{1+\lambda_5^2 n}|^4 = O(1/n^2)$, so its contribution is $O(1/n^4)$.

Diagonal terms from $w$ eigenvectors:
\begin{align}
\sum_{r=1}^{n-2} |\langle c_{p,q}|w_p^{(r)}\rangle|^4 = \beta_{p,q} + O\left(\frac{1}{n}\right). \label{eq:E_w_diag}
\end{align}

Now degenerate terms within $\lambda_6$ subspace (the $w$ eigenvectors):
\begin{align}
\sum_{(j,r)<(j',r')} Y_{w_j^{(r)},w_{j'}^{(r')}}^{c_{p,q},c_{p,q}} &= \sum_{(j,r)<(j',r')} 2 |\langle c_{p,q}|w_j^{(r)}\rangle|^2 |\langle c_{p,q}|w_{j'}^{(r')}\rangle|^2 \\
&= \left(\sum_{j,r} |\langle c_{p,q}|w_j^{(r)}\rangle|^2\right)^2 - \sum_{j,r} |\langle c_{p,q}|w_j^{(r)}\rangle|^4.
\end{align}

From (\ref{eq:wj_complete}), $\sum_{j,r} |\langle c_{p,q}|w_j^{(r)}\rangle|^2 = \frac{n-2}{n-1} = 1 - \frac{1}{n-1}$. Thus:
\begin{align}
\left(\sum_{j,r} |\langle c_{p,q}|w_j^{(r)}\rangle|^2\right)^2 &= 1 - \frac{2}{n} + O\left(\frac{1}{n^2}\right).
\end{align}

Therefore:
\begin{align}
\sum_{(j,r)<(j',r')} Y_{w_j^{(r)},w_{j'}^{(r')}}^{c_{p,q},c_{p,q}} = 1 - \frac{2}{n} - \beta_{p,q} + O\left(\frac{1}{n^2}\right). \label{eq:E_w_degen}
\end{align}

No degenerate terms between $w$ and other subspaces (different eigenvalues), or between $\phi_4$ and $\phi_5$ ($\lambda_4 \neq \lambda_5$).

Summing all contributions:
\begin{align}
E &= \left[\frac{\alpha_p}{n^2} + \beta_{p,q}\right] + \left[1 - \frac{2}{n} - \beta_{p,q}\right] + O\left(\frac{1}{n}\right) \\
&= 1 - \frac{2}{n} + \frac{\alpha_p}{n^2} + O\left(\frac{1}{n}\right). \label{eq:E_asymp}
\end{align}

The $\beta_{p,q}$ cancels! So:
\begin{align}
E = 1 - \frac{2}{n} + O\left(\frac{1}{n}\right). \label{eq:E_final}
\end{align}

\paragraph{Computing $F = \overline{\pi}_{c_{p',q'},c_{p,q}}$ for $(p',q') \neq (p,q)$:}

For $p' \neq p$, the $w$ eigenvectors contribute zero because they are supported on different cliques. The dominant contributions come from $\phi$ eigenvectors, giving $F = O(1/n^2)$. For $p' = p$ but $q' \neq q$, there is some $w$ contribution, but it is $O(1/n^2)$ as well. Thus:
\begin{align}
F = O\left(\frac{1}{n^2}\right). \label{eq:F_order}
\end{align}

\paragraph{Dynamical IPR for internal initial vertex:}

\begin{align}
\overline{\mathrm{IPR}}_{c_{p,q}} &= A'^2 + \sum_{p'=1}^n B'^2 + E^2 + \sum_{(p',q') \neq (p,q)} F^2 \\
&= O\left(\frac{1}{n^8}\right) + n \cdot O\left(\frac{1}{n^4}\right) + \left(1 - \frac{2}{n} + O\left(\frac{1}{n}\right)\right)^2 \\
&+ n^2 \cdot O\left(\frac{1}{n^4}\right) \\
&= O\left(\frac{1}{n^3}\right) + 1 - \frac{4}{n} + O\left(\frac{1}{n^2}\right) + O\left(\frac{1}{n^2}\right) \\
&= 1 - \frac{4}{n} + O\left(\frac{1}{n^2}\right). \label{eq:ipr_internal_asymp}
\end{align}

Thus $\overline{\mathrm{IPR}}_{c_{p,q}} = 1 - \frac{4}{n} + O(1/n^2)$, indicating strong localization, with the same leading correction as for bridge vertices.

\subsubsection{Summary of Asymptotic Results}

\begin{table}[h]
\centering
\begin{tabular}{|l|c|c|}
\hline
\textbf{Initial Vertex} & \textbf{IPR Scaling} & \textbf{Interpretation} \\
\hline
Centre $\ket{0}$ & $\Theta(1/n^6)$ & Strongly delocalized \\
Bridge $\ket{b_j}$ & $1 - \frac{4}{n} + O(1/n^2)$ & Strongly localized \\
Clique internal $\ket{c_{j,k}}$ & $1 - \frac{4}{n} + O(1/n^2)$ & Strongly localized \\
\hline
\end{tabular}
\caption{Dynamical IPR scaling for the single connection variant as $n \to \infty$.}
\end{table}

\paragraph{Interpretation:}
\begin{itemize}
    \item The centre is delocalized because it only connects to $n$ bridge vertices and has no direct access to the large degenerate subspace of internal vertices. Its IPR decays as $1/n^6$, much faster than the $1/N \sim 1/n^2$ lower bound, indicating extremely uniform spreading.
    \item Bridge vertices are localized because they have a large projection onto the $\phi_5$ eigenvectors, which have eigenvalue $\lambda_5 \approx -1/n$ and give a significant contribution to the return probability. The IPR approaches 1 as $n \to \infty$, with a $4/n$ correction.
    \item Internal vertices are similarly localized due to the large degenerate subspace of $w$ eigenvectors (eigenvalue $\lambda_6 = -1/(n-1)$), which trap the walker within the internal vertices of a single clique. The IPR also approaches 1 with the same $4/n$ correction.
\end{itemize}

These results satisfy all probability conservation constraints and respect the lower bound $\overline{\mathrm{IPR}}_j \ge 1/N$. The calculations are consistent and the asymptotic scalings are now correct.}

\eat{
\subsection{Single Connection Variant - Dynamical IPR}

\subsubsection{Setup and Key Observations}

For the single connection variant, vertices fall into three classes:
\begin{itemize}
    \item Centre: $\ket{0}$
    \item Bridge vertices: $\ket{b_j} = \ket{j,1}$ ($j=1,\dots,n$)
    \item Internal vertices: $\ket{c_{j,k}} = \ket{j,k}$ ($j=1,\dots,n$, $k=2,\dots,n$)
\end{itemize}

Total vertices: $N = n^2 + 1$. The graph has three eigenvalue clusters:
\begin{align}
\lambda_1 &= 1 \quad \text{(multiplicity 1)}, \\
\lambda_2,\lambda_3 &= O(1) \quad \text{(each multiplicity 1)}, \\
\lambda_4 &= 1 - O(1/n^2) \quad \text{(multiplicity $n-1$)}, \\
\lambda_5 &= -1/n + O(1/n^2) \quad \text{(multiplicity $n-1$)}, \\
\lambda_6 &= -1/(n-1) \quad \text{(multiplicity $n(n-2)$)}.
\end{align}

\subsubsection{Critical Eigenvector Properties}

The eigenvectors have the following overlap scalings (for large $n$):

\begin{table}[h]
\centering
\begin{tabular}{|l|c|c|c|}
\hline
\textbf{Eigenvector} & $\langle \cdot | \psi \rangle$ with centre & with bridge & with internal \\
\hline
$\psi_1,\psi_2,\psi_3$ & $O(1/n)$ & $O(1/n)$ & $O(1/n)$ \\
$\phi_4^{(j)}$ & $0$ & $O(1/\sqrt{n})\langle u_j|b_p\rangle$ & $O(1)\langle v_j|c_{p,q}\rangle$ \\
$\phi_5^{(j)}$ & $0$ & $O(1)\langle u_j|b_p\rangle$ & $O(1/\sqrt{n})\langle v_j|c_{p,q}\rangle$ \\
$w_j^{(r)}$ & $0$ & $0$ & $O(1)$ for $j=p$, else $0$ \\
\hline
\end{tabular}
\caption{Order of eigenvector overlaps. Here $\langle u_j|b_p\rangle$ and $\langle v_j|c_{p,q}\rangle$ are $O(1)$ for $j$ near $p$, and decay as $1/j$ otherwise.}
\end{table}

Key identities (exact):
\begin{align}
\sum_{j=1}^{n-1} |\langle u_j|b_p\rangle|^2 &= 1, \label{eq:uj_complete} \\
\sum_{r=1}^{n-2} |\langle w_p^{(r)}|c_{p,q}\rangle|^2 &= 1 - \frac{1}{n-1}. \label{eq:w_complete}
\end{align}

\subsubsection{Dynamical IPR Formula}

For a walk starting at vertex $\ket{j}$:
\begin{align}
&\overline{\mathrm{IPR}}_j = \sum_i (\overline{\pi}_{ij})^2, \quad \\
&\overline{\pi}_{ij} = \sum_{\mu} |\langle i|\psi_\mu\rangle\langle \psi_\mu|j\rangle|^2 + \sum_{\substack{\mu<\nu \\ \lambda_\mu=\lambda_\nu}} 2\Re\left( \langle i|\psi_\mu\rangle\langle \psi_\mu|j\rangle \langle j|\psi_\nu\rangle\langle \psi_\nu|i\rangle \right).
\end{align}

\subsubsection{Case 1: Centre Initial $\ket{0}$}

Only $\psi_1,\psi_2,\psi_3$ contribute (others have zero centre overlap). All $|\langle i|\psi_k\rangle|^2 = O(1/n^2)$ for any $i$, and $|\langle \psi_k|0\rangle|^2 = O(1/n^2)$. Thus:
\begin{align}
\overline{\pi}_{i0} = O\left(\frac{1}{n^4}\right) \quad \text{for all } i.
\end{align}

Summing over $N = n^2+1$ targets:
\begin{align}
\overline{\mathrm{IPR}}_0 = \sum_i \overline{\pi}_{i0}^2 = N \cdot O\left(\frac{1}{n^8}\right) = O\left(\frac{1}{n^6}\right).
\end{align}

This satisfies the lower bound $\overline{\mathrm{IPR}}_0 \ge 1/N \sim 1/n^2$, indicating strong delocalization.

\subsubsection{Case 2: Bridge Initial $\ket{b_p}$}

By symmetry, the distribution has four distinct values:
\begin{align}
A &= \overline{\pi}_{0,b_p} \quad \text{(centre)}, \\
B &= \overline{\pi}_{b_p,b_p} \quad \text{(same bridge)}, \\
C &= \overline{\pi}_{b_{p'},b_p} \;(p'\neq p) \quad \text{(different bridge)}, \\
D &= \overline{\pi}_{c_{p',q},b_p} \quad \text{(internal)}.
\end{align}

Probability conservation: $A + B + (n-1)C + n(n-1)D = 1$.

\paragraph{Computing $A$:} Only $\psi_{1,2,3}$ contribute: $A = O(1/n^4)$.

\paragraph{Computing $B$:} Contributions from all eigenvectors:

\begin{itemize}
    \item $\psi$ eigenvectors: $O(1/n^4)$
    \item $\phi_4$: $\displaystyle \sum_j |\langle b_p|\phi_4^{(j)}\rangle|^4 = \frac{1}{(1+\lambda_4^2 n)^2} \sum_j |\langle u_j|b_p\rangle|^4 = \frac{\alpha_p}{n^2} + O\left(\frac{1}{n^3}\right)$
    \item $\phi_5$: $\displaystyle \sum_j |\langle b_p|\phi_5^{(j)}\rangle|^4 = \frac{1}{(1+\lambda_5^2 n)^2} \sum_j |\langle u_j|b_p\rangle|^4 = \alpha_p - \frac{2\alpha_p}{n} + O\left(\frac{1}{n^2}\right)$
    \item Degenerate terms within $\lambda_4$: $\displaystyle \frac{1-\alpha_p}{n^2} + O\left(\frac{1}{n^3}\right)$
    \item Degenerate terms within $\lambda_5$: $\displaystyle (1-\alpha_p) - \frac{2(1-\alpha_p)}{n} + O\left(\frac{1}{n^2}\right)$
\end{itemize}

Summing, the $\alpha_p$ dependence cancels:
\begin{align}
B = 1 - \frac{2}{n} + \frac{1}{n^2} + O\left(\frac{1}{n^3}\right). \label{eq:B_final}
\end{align}

\paragraph{Computing $C$ and $D$:} Similar analysis gives $C = O(1/n^2)$ and $D = O(1/n^2)$.

\paragraph{Dynamical IPR:}
\begin{align}
\overline{\mathrm{IPR}}_{b_p} &= A^2 + B^2 + (n-1)C^2 + n(n-1)D^2 \\
&= O\left(\frac{1}{n^8}\right) + \left(1 - \frac{2}{n}\right)^2 + (n-1)O\left(\frac{1}{n^4}\right) + n^2 O\left(\frac{1}{n^4}\right) \\
&= 1 - \frac{4}{n} + O\left(\frac{1}{n^2}\right). \label{eq:ipr_bridge}
\end{align}

Thus $\overline{\mathrm{IPR}}_{b_p} \to 1$ as $n \to \infty$, indicating strong localization.

\subsubsection{Case 3: Internal Initial $\ket{c_{p,q}}$}

By symmetry, the distribution has:
\begin{align}
A' &= \overline{\pi}_{0,c_{p,q}} \quad \text{(centre)}, \\
B' &= \overline{\pi}_{b_{p'},c_{p,q}} \quad \text{(bridge)}, \\
E &= \overline{\pi}_{c_{p,q},c_{p,q}} \quad \text{(return)}, \\
F &= \overline{\pi}_{c_{p',q'},c_{p,q}} \quad \text{(other internal)}.
\end{align}

\paragraph{Computing $E$:} The $w$ eigenvectors dominate:

\begin{itemize}
    \item Diagonal from $w$: $\displaystyle \sum_r |\langle c_{p,q}|w_p^{(r)}\rangle|^4 = \beta_{p,q} + O\left(\frac{1}{n}\right)$
    \item Degenerate within $w$: $\displaystyle \left(\sum_r |\langle c_{p,q}|w_p^{(r)}\rangle|^2\right)^2 - \sum_r |\langle c_{p,q}|w_p^{(r)}\rangle|^4 = \left(1 - \frac{1}{n}\right)^2 - \beta_{p,q} + O\left(\frac{1}{n}\right)$
    \item Contributions from $\psi,\phi$: $O(1/n^2)$
\end{itemize}

Summing, $\beta_{p,q}$ cancels:
\begin{align}
E = 1 - \frac{2}{n} + O\left(\frac{1}{n^2}\right). \label{eq:E_final}
\end{align}

\paragraph{Other quantities:} $A' = O(1/n^4)$, $B' = O(1/n^2)$, $F = O(1/n^2)$.

\paragraph{Dynamical IPR:}
\begin{align}
\overline{\mathrm{IPR}}_{c_{p,q}} &= A'^2 + \sum_{p'} B'^2 + E^2 + \sum_{(p',q')\neq(p,q)} F^2 \\
&= O\left(\frac{1}{n^8}\right) + n \cdot O\left(\frac{1}{n^4}\right) + \left(1 - \frac{2}{n}\right)^2 + n^2 \cdot O\left(\frac{1}{n^4}\right) \\
&= 1 - \frac{4}{n} + O\left(\frac{1}{n^2}\right). \label{eq:ipr_internal}
\end{align}

Thus internal vertices also exhibit strong localization with $\overline{\mathrm{IPR}}_{c_{p,q}} \to 1$.

\subsubsection{Summary}

\begin{table}[h]
\centering
\begin{tabular}{|l|c|c|}
\hline
\textbf{Initial Vertex} & \textbf{IPR Scaling} & \textbf{Behavior} \\
\hline
Centre $\ket{0}$ & $\Theta(1/n^6)$ & Delocalized \\
Bridge $\ket{b_j}$ & $1 - 4/n + O(1/n^2)$ & Localized \\
Internal $\ket{c_{j,k}}$ & $1 - 4/n + O(1/n^2)$ & Localized \\
\hline
\end{tabular}
\caption{Dynamical IPR scaling for the single connection variant.}
\end{table}

The centre delocalizes due to its limited connectivity and lack of access to the large degenerate subspace of internal vertices. Bridge and internal vertices localize because of their strong projection onto the large degenerate subspaces ($\lambda_5$ for bridges, $\lambda_6$ for internals), which trap the quantum walk.

\subsection{Eigenstate IPR Calculations}

For a normalized eigenvector $|\psi_\mu\rangle$, $\mathrm{IPR}_\mu = \sum_i |\langle i|\psi_\mu\rangle|^4$.

\subsubsection{Eigenvectors $|\psi_1\rangle, |\psi_2\rangle, |\psi_3\rangle$}

All overlaps are $O(1/n)$, giving:
\begin{equation}
\boxed{\mathrm{IPR}_{\psi_k} = O\left(\frac{1}{n^2}\right), \quad k=1,2,3.}
\end{equation}
These eigenvectors are delocalized across the entire graph.

\subsubsection{Eigenvectors $|\phi_4^{(j)}\rangle$}

For fixed $j$, using $\frac{\lambda_4\sqrt{n}}{\sqrt{1+\lambda_4^2 n}} \sim 1$ and $\frac{1}{\sqrt{1+\lambda_4^2 n}} \sim \frac{1}{\sqrt{n}}$:
\begin{align}
|\langle b_{m,1}|\phi_4^{(j)}\rangle| &\sim \frac{1}{\sqrt{n}} |\langle b_{m,1}|u_j\rangle|,\\
|\langle c_{m,k}|\phi_4^{(j)}\rangle| &\sim |\langle c_{m,k}|v_j\rangle|.
\end{align}

From the definitions, $|\langle b_{m,1}|u_j\rangle|^2$ and $|\langle c_{m,k}|v_j\rangle|^2$ are $O(1)$ for $m$ near $j$. Counting vertices and summing fourth powers yields:

\begin{equation}
\boxed{\mathrm{IPR}_4^{(j)} = \frac{1+j^3}{j(j+1)^2} \cdot \frac{1}{n} + O\left(\frac{1}{n^2}\right), \quad j=1,\dots,n-1.}
\end{equation}

These eigenvectors show partial localization with IPR $\sim 1/n$.

\subsubsection{Eigenvectors $|\phi_5^{(j)}\rangle$}

For $|\phi_5^{(j)}\rangle$, $\frac{\lambda_5\sqrt{n}}{\sqrt{1+\lambda_5^2 n}} \sim -\frac{1}{\sqrt{n}}$ and $\frac{1}{\sqrt{1+\lambda_5^2 n}} \sim 1$:
\begin{align}
|\langle b_{m,1}|\phi_5^{(j)}\rangle| &\sim |\langle b_{m,1}|u_j\rangle|,\\
|\langle c_{m,k}|\phi_5^{(j)}\rangle| &\sim \frac{1}{\sqrt{n}} |\langle c_{m,k}|v_j\rangle|.
\end{align}

The bridge vertex contributions dominate, giving:

\begin{equation}
\boxed{\mathrm{IPR}_5^{(j)} = \frac{1+j^3}{j(j+1)^2} + O\left(\frac{1}{n}\right), \quad j=1,\dots,n-1.}
\end{equation}

These eigenvectors are strongly localized with IPR approaching an $O(1)$ constant.

\subsubsection{Eigenvectors $|w_j^{(r)}\rangle$}

For fixed $j,r$, $|w_j^{(r)}\rangle$ is supported entirely on clique $j$:
\begin{align}
\langle c_{j,k}|w_j^{(r)}\rangle = \frac{1}{\sqrt{r(r+1)}} \begin{cases}
1, & 2 \le k \le r+1,\\
-r, & k = r+2,\\
0, & \text{otherwise}.
\end{cases}
\end{align}

Counting $r$ vertices with coefficient $1/\sqrt{r(r+1)}$ and one vertex with coefficient $-r/\sqrt{r(r+1)}$:

\begin{align}
\mathrm{IPR}_6^{(r)} &= r \cdot \frac{1}{r^2(r+1)^2} + 1 \cdot \frac{r^4}{r^2(r+1)^2} = \frac{1+r^3}{r(r+1)^2}.
\end{align}

\begin{equation}
\boxed{\mathrm{IPR}_6^{(r)} = \frac{1+r^3}{r(r+1)^2}, \quad r=1,\dots,n-2.}
\end{equation}

This is independent of $n$, ranging from $1/2$ (for $r=1$) to $\approx 1$ (for large $r$). These eigenvectors are strongly localized.

\subsection{Summary}

\begin{table}[h]
\centering
\begin{tabular}{|l|c|c|}
\hline
\textbf{Eigenvector} & \textbf{IPR Scaling} & \textbf{Localization} \\
\hline
$\psi_1,\psi_2,\psi_3$ & $O(1/n^2)$ & Delocalized \\
$\phi_4^{(j)}$ & $\sim 1/n$ & Partially localized \\
$\phi_5^{(j)}$ & $O(1)$ & Strongly localized \\
$w_j^{(r)}$ & $O(1)$ (exact) & Strongly localized \\
\hline
\end{tabular}
\caption{Eigenstate IPR scaling for the single connection variant.}
\end{table}

This hierarchy explains the dynamical IPR results: internal vertices (large overlap with $w$ eigenvectors) and bridge vertices (large overlap with $\phi_5$ eigenvectors) show strong localization, while the centre (only overlaps with $\psi$ eigenvectors) is delocalized.
}

\section{IPR for variant 2 }
\label{ipr_variant2}

\subsection*{Eigenstate IPR }

We now compute the inverse participation ratio for each eigenvector class, defined as $\mathrm{IPR}_\mu = \sum_{i} |\langle i|\psi_\mu\rangle|^4$.

\subsubsection*{Eigenvectors $|\psi_1\rangle, |\psi_2\rangle, |\psi_3\rangle$ (Centre-Bridge Hybrids)}

From the normalization and asymptotic forms, all overlaps with any vertex are $O(1/n)$. Specifically:
\begin{align}
|\langle 0|\psi_k\rangle|^2 &= O(1/n^2),\\
|\langle b_j|\psi_k\rangle|^2 &= O(1/n^2),\\
|\langle c_{j,m}|\psi_k\rangle|^2 &= O(1/n^2).
\end{align}

With $N = n^2+1$ vertices, each contributing $O(1/n^4)$ to the fourth power sum:
\begin{align}
\boxed{\mathrm{IPR}_{\psi_k} = O\left(\frac{1}{n^2}\right), \quad k=1,2,3.}
\end{align}
These eigenvectors are fully delocalized across the entire graph.

\subsubsection*{Eigenvectors $|\phi_4^{(j)}\rangle$ (Bridge-$\lambda_4$ Modes)}

For fixed $j$, using the asymptotic relations:
\begin{align}
\frac{1}{\sqrt{1+\lambda_4^2 n}} &= \frac{1}{\sqrt{n}} + O\left(\frac{1}{n^{3/2}}\right),\\
\frac{\lambda_4\sqrt{n}}{\sqrt{1+\lambda_4^2 n}} &= 1 - \frac{1}{2n} + O\left(\frac{1}{n^2}\right),
\end{align}
the overlaps are:
\begin{align}
|\langle b_{m,1}|\phi_4^{(j)}\rangle| &\sim \frac{1}{\sqrt{n}} |\langle b_{m,1}|u_j\rangle|,\\
|\langle c_{m,k}|\phi_4^{(j)}\rangle| &\sim |\langle c_{m,k}|v_j\rangle|.
\end{align}

From (\ref{eq:uj_bp}) and the definition of $|v_j\rangle$, $|\langle b_{m,1}|u_j\rangle|^2$ and $|\langle c_{m,k}|v_j\rangle|^2$ are $O(1)$ for $m$ near $j$ and decay as $1/m^2$ otherwise. Counting vertices and summing fourth powers:

\begin{align}
\mathrm{IPR}_4^{(j)} &= \sum_{m=1}^{j+1} |\langle b_{m,1}|\phi_4^{(j)}\rangle|^4 + \sum_{m=1}^{j+1}\sum_{k=2}^n |\langle c_{m,k}|\phi_4^{(j)}\rangle|^4 \nonumber\\
&= \frac{1}{n^2}\sum_{m=1}^{j+1} |\langle b_{m,1}|u_j\rangle|^4 + \sum_{m=1}^{j+1}\sum_{k=2}^n |\langle c_{m,k}|v_j\rangle|^4 + O\left(\frac{1}{n^3}\right).
\end{align}

Using $|\langle c_{m,k}|v_j\rangle|^2 = \frac{1}{n-1}|\langle b_{m,1}|u_j\rangle|^2$, the second sum becomes:
\begin{align}
\sum_{m=1}^{j+1}\sum_{k=2}^n |\langle c_{m,k}|v_j\rangle|^4 &= \frac{1}{(n-1)^2}\sum_{m=1}^{j+1} (n-1)|\langle b_{m,1}|u_j\rangle|^4 \\
&= \frac{1}{n-1}\sum_{m=1}^{j+1} |\langle b_{m,1}|u_j\rangle|^4.
\end{align}

Thus:
\begin{align}
\mathrm{IPR}_4^{(j)} &= \left(\frac{1}{n^2} + \frac{1}{n-1}\right) \sum_{m=1}^{j+1} |\langle b_{m,1}|u_j\rangle|^4 + O\left(\frac{1}{n^3}\right) \nonumber\\
&= \frac{1 + j^3}{j(j+1)^2} \cdot \frac{1}{n} + O\left(\frac{1}{n^2}\right). \label{eq:ipr_phi4}
\end{align}

\begin{equation}
\boxed{\mathrm{IPR}_4^{(j)} = \frac{1 + j^3}{j(j+1)^2} \cdot \frac{1}{n} + O\left(\frac{1}{n^2}\right),\quad j=1,\dots,n-1.}
\end{equation}
These eigenvectors show partial localization with IPR $\sim 1/n$.

\subsubsection*{Eigenvectors $|\phi_5^{(j)}\rangle$ (Bridge-$\lambda_5$ Modes)}

For $|\phi_5^{(j)}\rangle$, the asymptotic relations are:
\begin{align}
\frac{1}{\sqrt{1+\lambda_5^2 n}} &= 1 - \frac{1}{2n} + O\left(\frac{1}{n^2}\right),\\
\frac{\lambda_5\sqrt{n}}{\sqrt{1+\lambda_5^2 n}} &= -\frac{1}{\sqrt{n}} + O\left(\frac{1}{n^{3/2}}\right),
\end{align}
giving overlaps:
\begin{align}
|\langle b_{m,1}|\phi_5^{(j)}\rangle| &\sim |\langle b_{m,1}|u_j\rangle|,\\
|\langle c_{m,k}|\phi_5^{(j)}\rangle| &\sim \frac{1}{\sqrt{n}} |\langle c_{m,k}|v_j\rangle|.
\end{align}

The bridge vertex contributions dominate, with internal vertices suppressed by $1/n$. Thus:
\begin{align}
\mathrm{IPR}_5^{(j)} &= \sum_{m=1}^{j+1} |\langle b_{m,1}|\phi_5^{(j)}\rangle|^4 + O\left(\frac{1}{n}\right) \nonumber\\
&= \sum_{m=1}^{j+1} |\langle b_{m,1}|u_j\rangle|^4 + O\left(\frac{1}{n}\right) \nonumber\\
&= \frac{1 + j^3}{j(j+1)^2} + O\left(\frac{1}{n}\right). \label{eq:ipr_phi5}
\end{align}

\begin{equation}
\boxed{\mathrm{IPR}_5^{(j)} = \frac{1 + j^3}{j(j+1)^2} + O\left(\frac{1}{n}\right),\quad j=1,\dots,n-1.}
\end{equation}
These eigenvectors are strongly localized, with IPR approaching an $O(1)$ constant independent of $n$.

\subsubsection*{Eigenvectors $|w_j^{(r)}\rangle$ (Clique-Internal Modes)}

For fixed clique $j$ and $r=1,\dots,n-2$, $|w_j^{(r)}\rangle$ is supported entirely on the internal vertices of clique $j$. From the definition:
\begin{align}
\langle c_{j,k}|w_j^{(r)}\rangle = \frac{1}{\sqrt{r(r+1)}} \begin{cases}
1, & 2 \le k \le r+1,\\
-r, & k = r+2,\\
0, & \text{otherwise}.
\end{cases}
\end{align}

Counting vertices:
\begin{itemize}
    \item $r$ vertices with coefficient $1/\sqrt{r(r+1)}$ (for $k=2,\dots,r+1$)
    \item $1$ vertex with coefficient $-r/\sqrt{r(r+1)}$ (for $k=r+2$)
    \item Remaining $n-r-2$ vertices have coefficient $0$
\end{itemize}

Thus:
\begin{align}
\mathrm{IPR}_6^{(r)} &= r \cdot \left(\frac{1}{\sqrt{r(r+1)}}\right)^4 + 1 \cdot \left(\frac{r}{\sqrt{r(r+1)}}\right)^4 \nonumber\\
&= \frac{r}{r^2(r+1)^2} + \frac{r^4}{r^2(r+1)^2} \nonumber\\
&= \frac{1}{r(r+1)^2} + \frac{r^2}{(r+1)^2} \nonumber\\
&= \frac{1 + r^3}{r(r+1)^2}. \label{eq:ipr_w}
\end{align}

\begin{equation}
\boxed{\mathrm{IPR}_6^{(r)} = \frac{1 + r^3}{r(r+1)^2},\quad r=1,\dots,n-2.}
\end{equation}

This is independent of $n$, ranging from $\mathrm{IPR}_6^{(1)} = \frac{1+1}{1\cdot 4} = \frac{1}{2}$ to $\mathrm{IPR}_6^{(n-2)} \approx 1$ for large $r$. These eigenvectors are strongly localized within a single clique.

\subsubsection*{Summary of Eigenstate IPR}

\begin{table}[h]
\centering
\begin{tabular}{|l|c|c|}
\hline
\textbf{Eigenvector} & \textbf{IPR Scaling} & \textbf{Localization} \\
\hline
$\psi_1,\psi_2,\psi_3$ & $O(1/n^2)$ & Delocalized \\
$\phi_4^{(j)}$ & $\sim 1/n$ & Partially localized \\
$\phi_5^{(j)}$ & $O(1)$ & Strongly localized \\
$w_j^{(r)}$ & $O(1)$ (exact) & Strongly localized \\
\hline
\end{tabular}
\caption{Eigenstate IPR scaling for the single connection variant.}
\end{table}



\subsection*{Dynamical IPR}

\subsubsection*{Key Overlap Identities}

For a fixed bridge vertex $|b_p\rangle$, the overlaps with $|u_j\rangle$ are:
\begin{align}
\langle u_j|b_p\rangle = \frac{1}{\sqrt{j(j+1)}} \begin{cases}
1, & p \le j, \\
-j, & p = j+1, \\
0, & p > j+1.
\end{cases} \label{eq:uj_bp}
\end{align}

From this, we derive two crucial identities:

\begin{align}
\sum_{j=1}^{n-1} |\langle u_j|b_p\rangle|^2 = 1 - \frac{1}{n}, \label{eq:sum2}\\
\sum_{j=1}^{n-1} |\langle u_j|b_p\rangle|^4 = \alpha_p, \quad \\
\text{with } \alpha_p = \left(\frac{p-1}{p}\right)^2 + \sum_{j=p}^{n-1} \frac{1}{j^2(j+1)^2} &= 1 - O\left(\frac{1}{p}\right). \label{eq:sum4}
\end{align}

\paragraph*{Proof of (\ref{eq:sum2}):} 
\begin{align}
\sum_{j=1}^{n-1} |\langle u_j|b_p\rangle|^2 &= \sum_{j=p}^{n-1} \frac{1}{j(j+1)} + \frac{p-1}{p}\mathbf{1}_{p\ge 2} \\
&= \left(\frac{1}{p} - \frac{1}{n}\right) + \frac{p-1}{p} = 1 - \frac{1}{n}. \quad \blacksquare
\end{align}

For internal vertices $|c_{p,q}\rangle$, the analogous identities are:
\begin{align}
\sum_{j=1}^{n-1} |\langle v_j|c_{p,q}\rangle|^2 &= \frac{1}{n-1}\left(1 - \frac{1}{n}\right), \label{eq:vj_sum2}\\
\sum_{r=1}^{n-2} |\langle w_p^{(r)}|c_{p,q}\rangle|^2 &= 1 - \frac{1}{n-1}, \label{eq:w_sum2}\\
\sum_{r=1}^{n-2} |\langle w_p^{(r)}|c_{p,q}\rangle|^4 &= \beta_{p,q}, \quad \beta_{p,q} = O(1). \label{eq:w_sum4}
\end{align}

\subsubsection*{Asymptotic Expansions}

For large $n$, we have:
\begin{align}
\frac{1}{1+\lambda_4^2 n} &= \frac{1}{n} + O\left(\frac{1}{n^2}\right), \quad \\
\frac{\lambda_4\sqrt{n}}{\sqrt{1+\lambda_4^2 n}} &= 1 - \frac{1}{2n} + O\left(\frac{1}{n^2}\right),\\
\frac{1}{1+\lambda_5^2 n} &= 1 - \frac{1}{n} + O\left(\frac{1}{n^2}\right), \quad\\
\frac{\lambda_5\sqrt{n}}{\sqrt{1+\lambda_5^2 n}} &= -\frac{1}{\sqrt{n}} + O\left(\frac{1}{n^{3/2}}\right).
\end{align}

\subsubsection*{ Centre Initial $|0\rangle$}

Only $\psi_1,\psi_2,\psi_3$ contribute. All overlaps $|\langle i|\psi_k\rangle|^2 = O(1/n^2)$, and $|\langle \psi_k|0\rangle|^2 = O(1/n^2)$. Thus:
\begin{align}
\overline{\pi}_{i0} = \sum_{k=1}^3 |\langle i|\psi_k\rangle|^2 |\langle \psi_k|0\rangle|^2 = O\left(\frac{1}{n^4}\right) \quad \forall i.
\end{align}

Summing over $N = n^2+1$ targets:
\begin{align}
\overline{\mathrm{IPR}}_0 = \sum_i \overline{\pi}_{i0}^2 = N \cdot O\left(\frac{1}{n^8}\right) = O\left(\frac{1}{n^6}\right). \label{eq:ipr0_final}
\end{align}

\subsubsection*{ Bridge Initial $|b_p\rangle$}

By symmetry, the long-time averaged distribution has four distinct values:
\begin{align}
S &= \overline{\pi}_{0,b_p} \quad \text{(centre)},\\
U &= \overline{\pi}_{b_p,b_p} \quad \text{(same bridge)},\\
V &= \overline{\pi}_{b_{p'},b_p} \;(p'\neq p) \quad \text{(different bridge)},\\
W &= \overline{\pi}_{c_{p',q},b_p} \quad \text{(internal)}.
\end{align}

Probability conservation: $S + U + (n-1)V + n(n-1)W = 1$.

\paragraph*{Computing $S$:} Only $\psi_{1,2,3}$ contribute: $S = O(1/n^4)$.

\paragraph*{Computing $U$:} Contributions from all eigenvectors:

\begin{itemize}
    \item $\psi$ eigenvectors: $O(1/n^4)$.
    
    \item $\phi_4$ diagonal: 
    \begin{align}
    \sum_{j=1}^{n-1} |\langle b_p|\phi_4^{(j)}\rangle|^4 &= \sum_{j=1}^{n-1} \frac{1}{(1+\lambda_4^2 n)^2} |\langle u_j|b_p\rangle|^4 \nonumber\\
    &= \left(\frac{1}{n^2} + O\left(\frac{1}{n^3}\right)\right) \sum_{j=1}^{n-1} |\langle u_j|b_p\rangle|^4 \\
    &= \frac{\alpha_p}{n^2} + O\left(\frac{1}{n^3}\right). \label{eq:B_phi4_diag}
    \end{align}
    
    \item $\phi_5$ diagonal:
    \begin{align}
    \sum_{j=1}^{n-1} |\langle b_p|\phi_5^{(j)}\rangle|^4 &= \sum_{j=1}^{n-1} \frac{1}{(1+\lambda_5^2 n)^2} |\langle u_j|b_p\rangle|^4 \nonumber\\
    &= \left(1 - \frac{2}{n} + O\left(\frac{1}{n^2}\right)\right) \alpha_p \\
    &= \alpha_p - \frac{2\alpha_p}{n} + O\left(\frac{1}{n^2}\right). \label{eq:B_phi5_diag}
    \end{align}
    
    \item Degenerate terms within $\lambda_4$ subspace:
    \begin{align}
    \sum_{j<j'} Y_{\phi_4^{(j)},\phi_4^{(j')}}^{b_p,b_p} &= \frac{1}{(1+\lambda_4^2 n)^2} \left[ \left(\sum_j |\langle b_p|u_j\rangle|^2\right)^2 - \sum_j |\langle b_p|u_j\rangle|^4 \right] \nonumber\\
    &= \frac{1}{n^2}\left[ \left(1 - \frac{1}{n}\right)^2 - \alpha_p \right] + O\left(\frac{1}{n^3}\right) \nonumber\\
    &= \frac{1 - \alpha_p}{n^2} - \frac{2}{n^3} + O\left(\frac{1}{n^3}\right). \label{eq:B_phi4_degen}
    \end{align}
    
    \item Degenerate terms within $\lambda_5$ subspace:
    \begin{align}
    \sum_{j<j'} Y_{\phi_5^{(j)},\phi_5^{(j')}}^{b_p,b_p} &= \frac{1}{(1+\lambda_5^2 n)^2} \left[ \left(1 - \frac{1}{n}\right)^2 - \alpha_p \right] \nonumber\\
    &= \left(1 - \frac{2}{n} + O\left(\frac{1}{n^2}\right)\right) \left[(1 - \alpha_p) - \frac{2}{n} + O\left(\frac{1}{n^2}\right)\right] \nonumber\\
    &= (1 - \alpha_p) - \frac{2(1-\alpha_p)}{n} - \frac{2}{n} + O\left(\frac{1}{n^2}\right). \label{eq:B_phi5_degen}
    \end{align}
\end{itemize}

Summing all contributions, the $\alpha_p$ terms cancel:
\begin{align}
U &= \frac{3}{n^4} + \frac{\alpha_p}{n^2} + \left(\alpha_p - \frac{2\alpha_p}{n}\right) + \frac{1-\alpha_p}{n^2} \\
&+ \left[(1-\alpha_p) - \frac{2(1-\alpha_p)}{n} - \frac{2}{n}\right] + O\left(\frac{1}{n^2}\right) \nonumber\\
&= 1 - \frac{2}{n} + \frac{1}{n^2} + O\left(\frac{1}{n^2}\right). \label{eq:U_final}
\end{align}

\paragraph*{Computing $V$ and $W$:} Similar analysis yields $V = O(1/n^2)$ and $W = O(1/n^2)$.

\begin{align}
\overline{\mathrm{IPR}}_{b_p} &= S^2 + U^2 + (n-1)V^2 + n(n-1)W^2 \nonumber\\
&= O\left(\frac{1}{n^8}\right) + \left(1 - \frac{2}{n}\right)^2\\
&+ (n-1)O\left(\frac{1}{n^4}\right) + n^2 O\left(\frac{1}{n^4}\right) \nonumber\\
&= 1 - \frac{4}{n} + O\left(\frac{1}{n^2}\right). \label{eq:ipr_bridge}
\end{align}

Thus $\overline{\mathrm{IPR}}_{b_p} \to 1$ as $n \to \infty$.

\subsubsection*{Case 3: Internal Initial $|c_{p,q}\rangle$}

By symmetry, define:
\begin{align}
S' &= \overline{\pi}_{0,c_{p,q}} \quad \text{(centre)},\\
U' &= \overline{\pi}_{b_{p'},c_{p,q}} \quad \text{(bridge)},\\
V' &= \overline{\pi}_{c_{p,q},c_{p,q}} \quad \text{(return)},\\
W' &= \overline{\pi}_{c_{p',q'},c_{p,q}} \quad \text{(other internal)}.
\end{align}

\paragraph*{Computing $V'$:} The $w$ eigenvectors dominate:

\begin{itemize}
    \item Diagonal from $w$:
    \begin{align}
    \sum_{r=1}^{n-2} |\langle c_{p,q}|w_p^{(r)}\rangle|^4 = \beta_{p,q} + O\left(\frac{1}{n}\right). \label{eq:E_w_diag}
    \end{align}
    
    \item Degenerate within $w$ subspace:
    \begin{align}
    \sum_{r<s} Y_{w_p^{(r)},w_p^{(s)}}^{c_{p,q},c_{p,q}} &= \left(\sum_{r=1}^{n-2} |\langle c_{p,q}|w_p^{(r)}\rangle|^2\right)^2 - \sum_{r=1}^{n-2} |\langle c_{p,q}|w_p^{(r)}\rangle|^4 \nonumber\\
    &= \left(1 - \frac{1}{n-1}\right)^2 - \beta_{p,q} + O\left(\frac{1}{n}\right) \nonumber\\
    &= 1 - \frac{2}{n} - \beta_{p,q} + O\left(\frac{1}{n^2}\right). \label{eq:E_w_degen}
    \end{align}
    
    \item Contributions from $\psi$ and $\phi$ eigenvectors: $O(1/n^2)$.
\end{itemize}

Summing, $\beta_{p,q}$ cancels:
\begin{align}
V' = 1 - \frac{2}{n} + O\left(\frac{1}{n^2}\right). \label{eq:E_final}
\end{align}

\paragraph*{Other quantities:} $S' = O(1/n^4)$, $U' = O(1/n^2)$, $W' = O(1/n^2)$.

\begin{align}
\overline{\mathrm{IPR}}_{c_{p,q}} &= S'^2 + \sum_{p'=1}^n U'^2 + V'^2 + \sum_{(p',q')\neq(p,q)} W'^2 \nonumber\\
&= O\left(\frac{1}{n^8}\right) + n \cdot O\left(\frac{1}{n^4}\right) + \left(1 - \frac{2}{n}\right)^2 + n^2 \cdot O\left(\frac{1}{n^4}\right) \nonumber\\
&= 1 - \frac{4}{n} + O\left(\frac{1}{n^2}\right). \label{eq:ipr_internal}
\end{align}

Thus internal vertices also satisfy $\overline{\mathrm{IPR}}_{c_{p,q}} \to 1$.
\subsubsection*{Summary of Results}
\begin{table}[H]
\centering
\begin{tabular}{|l|c|c|}
\hline
\textbf{Initial Vertex} & \textbf{IPR Scaling} & \textbf{Behavior} \\
\hline
Centre $\ket{0}$ & $\Theta(1/n^6)$ & Delocalized \\
Bridge $\ket{b_j}$ & $1 - 4/n + O(1/n^2)$ & Localized \\
Internal $\ket{c_{j,k}}$ & $1 - 4/n + O(1/n^2)$ & Localized \\
\hline
\end{tabular}
\caption{Dynamical IPR scaling for the single connection variant.}
\end{table}



\end{document}